\newtheorem{postulate}{Postulate}
\renewcommand{\labelenumi}{(\roman{enumi})}
\spnewtheorem{fact}{Fact}{\bfseries}{\itshape}
\spnewtheorem*{remark*}{Remark}{\bfseries}{\rmfamily}
\title{A Behavioural Theory of Recursive Algorithms}
\author{Egon B\"orger\inst{1}, Klaus-Dieter Schewe\inst{2}}
\institute{Universit\`{a} di Pisa, Dipartimento di Informatica, Pisa, Italy,
\email{boerger@di.unipi.it}
\and 
Zhejiang University, UIUC Institute, Haining, China,
\email{kdschewe@acm.org}}
\def\@setmcodes#1#2#3{{\count0=#1 \count1=#3
  \loop \global\mathcode\count0=\count1 \ifnum \count0<#2
  \advance\count0 by1 \advance\count1 by1 \repeat}}
\DeclareSymbolFont{italic}{OT1}{\rmdefault}{m}{it}
\let\mathit\undefined
\DeclareSymbolFontAlphabet{\mathit}{italic}
\edef\@tempa{\hexnumber@\symitalic}
\newdimen\asmindent     
\def\inc{\global\advance\asmi by 1}
\def\dec{\global\advance\asmi by-1}
\def\nl{{}$\par\hangindent\asmi em
  \noindent\kern\asmi em\ignorespaces$} 
\def\asmskip{{}$\par\smallskip\hangindent\asmi em
  \noindent\kern\asmi em\ignorespaces$} 
\def\asm{\global\asmi=0 
 \def\+{\inc\nl}
 \def\-{\dec\nl}
 \def\\{\nl}
 \begin{trivlist}\item[]\leftskip=\asmindent\relax$}
\def\endasm{$\end{trivlist}}
\def\asmarray{\begin{array}[t]{@{}l@{\;}l@{\;}l@{}}}
\def\endasmarray{\end{array}}
\def\subasm{\vtop\bgroup\asmii=0\normalbaselines
 \def\nl##1{$\egroup\advance\asmii by##1\relax\hbox\bgroup\hskip\asmii em$}
 \def\\{\nl{0}}
 \def\+{\nl{1}}
 \def\-{\nl{-1}}
 \hbox\bgroup\hskip\asmii em$}
\def\endsubasm{$\egroup\egroup}
\def\AMB      {\mathrel{\mathbf{amb}}}
\def\ASM#1{\hbox{\sc#1}}        
\def\AND     {\mathrel{\mathbf{and}}}
\def\CHOOSE  {\mathrel{\mathbf{choose}}}
\def\NEW  {\mathrel{\mathbf{new}}}
\def\DO      {\mathrel{\mathbf{do}}}
\def\ELSE    {\mathrel{\mathbf{else}}}
\def\FORALL  {\mathrel{\mathbf{forall}}}
\def\FORSOME  {\mathrel{\mathbf{forsome}}}
\def\IF      {\mathrel{\mathbf{if}}}
\def\IFF      {\mathrel{\mathbf{iff}}}
\def\IMPORT  {\mathrel{\mathbf{import}}}
\def\IN      {\mathrel{\mathbf{in}}}
\def\LET     {\mathrel{\mathbf{let}}}
\def\SELF    {\mathrel{\mathbf{self}}}
\def\NOT     {\mathrel{\mathbf{not}}}
\def\PAR     {\mathrel{\mathbf{par}}}
\def\THEN    {\mathrel{\mathbf{then}}}
\def\WHERE   {\mathrel{\mathbf{where}}}
\def\WITH    {\mathrel{\mathbf{with}}}
\def\enumerate{%
  \ifnum \@enumdepth >\thr@@\@toodeep\else
    \advance\@enumdepth\@ne
    \edef\@enumctr{enum\romannumeral\the\@enumdepth}%
      \expandafter
      \list
        \csname label\@enumctr\endcsname
        {\usecounter\@enumctr\def\makelabel##1{\hss\llap{##1}}
         \itemsep 0pt\parskip 0pt\parsep 0pt\topsep\smallskipamount}%
  \fi}
\def\itemize{%
  \ifnum \@itemdepth >\thr@@\@toodeep\else
    \advance\@itemdepth\@ne
    \edef\@itemitem{labelitem\romannumeral\the\@itemdepth}%
    \expandafter
    \list
      \csname\@itemitem\endcsname
      {\def\makelabel##1{\hss\llap{##1}}
       \itemsep 0pt\parskip 0pt\parsep 0pt\topsep\smallskipamount}%
  \fi}
\begin{document}

\maketitle

\begin{abstract}

``What is an algorithm?'' is a fundamental question of computer science. Gurevich's behavioural theory of sequential algorithms (aka the sequential ASM thesis) gives a partial answer by defining (non-deterministic) sequential algorithms axiomatically, without referring to a particular machine model or programming language, and showing that they are {\em captured} by (non-deterministic) sequential Abstract State Machines (nd-seq ASMs). Moschovakis pointed out that recursive algorithms such as \textit{mergesort\/} are not covered by this theory. In this article we propose an axiomatic definition of the notion of \emph{sequential recursive algorithm} which extends Gurevich's axioms for sequential algorithms by a Recursion Postulate and allows us to prove that sequential recursive algorithms are captured by \emph{recursive Abstract State Machines}, an extension of nd-seq ASMs by a CALL rule. Applying this recursive ASM thesis yields a characterization of sequential recursive algorithms as finitely composed concurrent algorithms all of whose concurrent runs are partial-order runs.

\end{abstract}

\section{Introduction}

The notion of an algorithm is fundamental for computing, so it may seem surprising that there is still no commonly accepted definition. This is different for the notion of computable function that is captured by several equivalent formalisms such as Turing machines, random access machines, partial-recursive functions, $\lambda$-definable functions and many more \cite{boerger:1989}. However, as there is typically a huge gap between the abstraction level of an algorithm and the one of Turing machines, Gurevich concluded that the latter ones cannot serve as a definition for the notion of an algorithm \cite{gurevich:ams1985}. He  proposed to extend Turing's thesis to a new thesis, based on the observation that ``if an abstraction level is fixed (disregarding low-level details and a possible higher-level picture) and the states of an algorithm reflect all the relevant information, then a particular small instruction set suffices to model any algorithm, never mind how abstract, by a generalised machine very closely and faithfully''.

Still it took many years from the formulation of this new thesis to the publication of the behavioural theory of sequential algorithms in \cite{gurevich:tocl2000}. In this seminal work---also known as the ``sequential ASM thesis''---a \emph{sequential algorithm} (seq-algorithm) is defined by three postulates\footnote{A mathematically precise formulation of these postulates requires more care, see below, but the rough summary here will be sufficient for now.}:

\begin{description}

\item[Sequential Time.] A sequential algorithm proceeds in sequential time using states, initial states and transitions from states to successor states.

\item[Abstract State.] States are universal algebras (aka Tarski structures), i.e. functions resulting from the interpretation of a signature, i.e. a set of function symbols, over a base set.

\item[Bounded Exploration.] There exists a finite set of ground terms such that the difference between a state and its successor state is uniquely determined by the values of these terms in the state\footnote{This set of terms is usually called a {\em bounded exploration witness}, while the difference between a state and its successor is formally given by an {\em update set}. Informally, bounded exploration requires that there are only finitely many terms, the interpretation of which determine how a state will be updated by the algorithm to produce the successor state.}.

\end{description}

The behavioural theory further comprises the definition of  {\em sequential Abstract State Machines} (seq-ASMs) and the proof that seq-ASMs {\em capture} seq-algorithms, i.e. they satisfy the postulates, and every seq-algorithm can be step-by-step simulated by a seq-ASM. As pointed out in \cite[Sect.9.2]{gurevich:tocl2000} (and elaborated to a full proof in \cite[Sect.7.2.2 -7.2.3]{boerger:2003}) it is easy to extend the theory to cover also bounded non-determinism, using non-deterministic sequential ASMs (nd-seq ASMs)\footnote{It suffices to slightly modify the sequential time and the abstract state postulates, using in particular a successor relation instead of a function and permitting the choice between finitely many rules. Gurevich uses the term `bounded-choice nondeterministic algorithm' instead of `nd-seq algorithm'.}.

It should be noted that the {\em definition} of a sequential algorithm given by Gurevich does not require a particular formalism for the {\em specification}. Seq-ASMs capture seq-algorithms, so they are a suitable candidate for specification\footnote{In particular, as pointed out in \cite{boerger:2003}, rules in an ASM look very much like pseudo-code, so the appearance of ASM specifications is often close to the style, in which algorithms have been described in the past and in textbooks. The difference is of course that the semantics of ASMs is precisely defined.}, but they are not the only possible choice. For instance, in the light of the proofs in \cite{gurevich:tocl2000} it is not an overly difficult exercise to show that deterministic Event-B \cite{abrial:2010} or B \cite{abrial:2005} also capture seq-algorithms. 

We believe that in order to obtain a commonly acceptable definition of the notion of algorithm, this distinction between an {\em axiomatic definition} (as by Gurevich's postulates for seq-algorithms), which does not refer to a particular language or programming style, and the {\em capture} by an abstract machine model (such as seq-ASMs, deterministic Event-B or others) is fundamental.

In \cite{moschovakis:2001} Moschovakis raised the question how recursive algorithms like the well-known \textit{mergesort\/} are covered. He questions that algorithms can be adequately defined by machines (see also \cite{Moschovakis19}). Although the perception that Gurevich used seq-ASMs as a definition for the general notion of algorithm---not only for its sequential instance---is a misinterpretation, unfortunately the response by Blass and 
Gurevich  \cite{blass:beatcs2002} to Moschovakis's criticism does not clarify the issue in a convincing way. Instead of admitting that an extended behavioural theory for recursive algorithms still needs to be developed, distributed ASMs with a semantics defined through partial-order runs \cite{gurevich:lipari1995} are claimed to be sufficient to capture recursive algorithms.\footnote{The definition of recursive ASMs in \cite{gurevich:jucs1997} uses a special case of this translation of recursive into distributed computations.} 
As B\"orger and Bolognesi point out in their contribution to the debate \cite{boerger:asm2003}, a much simpler extension of seq-ASMs suffices for the specification of algorithms of the \textit{mergesort\/} kind, which define recursive functions (as do Moschovakis' systems of recursive equations called `recursive programs' \cite{Moschovakis19}, see the discussion in Sect.\ref{sec:more}). 
Furthermore, the response by Blass and 
Gurevich blurs the subtle distinction between the axiomatic definition and the possibility to express any algorithm on an arbitrary level of abstraction by an abstract machine. This led also to Vardi's almost cynical comment that the debate is merely about the preferred specification style (functional or imperative), which is as old as the field of programming \cite{vardi:casm2012}.\footnote{This debate, however, is still much younger than the use of the notion of algorithm.}

While the difficult epistemological issue concerning the definition of the general notion of algorithm has been convincingly addressed for \emph{sequential} algorithms by Gurevich's behavioural theory, no such theory for recursive algorithms or distributed algorithms was available at the time of the debate between Moschovakis, Blass and Gurevich, B\"orger and Bolognesi, and Vardi. In the meantime a behavioural theory for concurrent algorithms has been developed \cite{boerger:ai2016}. It comprises an axiomatic definition of the notion of concurrent algorithm as a family of nd-seq algorithms indexed by agents that is subject to an additional concurrency postulate for their runs, by means of which Lamport's sequential consistency requirement is covered and generalised \cite{lamport:tc1979}. In a nutshell, the concurrency postulate requires that a successor state of the global state of the concurrent algorithm results from simultaneously applying update sets of finitely many agents that have been built on some previous (not necessarily the latest) states.

Using this theory of concurrency it is possible to reformulate the answer given by Blass and Gurevich to Moschovakis's question: every recursive  algorithm is a concurrent algorithm with partial-order runs. Since concurrent ASMs capture concurrent algorithms (as shown in \cite{boerger:ai2016}), they provide a natural candidate for the specification of all concurrent algorithms, thus in particular of recursive algorithms. However, the ``overkill'' argument will remain, as the class of concurrent algorithms is much larger than the class of recursive algorithms.

For example, take the \textit{mergesort\/} algorithm (see Sect.\ref{sect:mergesort}). Every call to (a copy of) itself and every call to (a copy of) the auxiliary \textit{merge\/} algorithm could give rise to a new agent. However, these agents only interact by passing input parameters and return values, but otherwise operate on disjoint sets of locations. In addition, a calling agent always waits to receive return values, which implies that only one or (in case of parallel calls) two agents are active in any state. In contrast, in a concurrent algorithm all agents may be active, and they can interact in many different ways on shared locations as well as on different clocks. As a consequence, concurrent runs may become highly non-deterministic and not linearisable, whereas a sequential\footnote{We use here the attribute `sequential' to emphasize that we view recursive algorithms as sequential algorithms which call sequential algorithms, so that no unbounded parallelism is allowed. The reason is that unbounded parallelism permits to define recursion, as we explain in Sect.\ref{sec:bsp}.} recursive algorithm permits at most bounded non-determinism and without loss of generality several simultaneous calls can always be sequentialised.

This motivates the research we report in this article. Our objective is to develop a behavioural theory of sequential recursive algorithms.
For this we propose an axiomatic definition of sequential recursive algorithms which enriches sequential algorithms by call steps, such that the parent-child relationship between caller and callee defines well-defined shared locations representing input and return parameters. We will present and motivate our axiomatisation in Section \ref{sec:postulates}. 
In Section \ref{sec:recursion} we define recursive ASMs by an appropriate extension of nd-seq ASMs\footnote{Since by definition \emph{recursive} ASMs are extensions of nd-seq ASMs, to obtain a short name we skip the two attributes `non-deterministic' and `sequential'.} with a call rule and show our main result, aka Recursive ASM Thesis:  

{\bf Main Theorem.} 
\emph{Sequential recursive algorithms are captured by recursive ASMs.} 

Section \ref{sec:bsp} is dedicated to an illustration of our theory by examples. We concentrate on \textit{mergesort\/}, \textit{quicksort\/} and the \textit{sieve of Eratosthenes\/} for which we present recursive ASMs. We use the examples to show that the parallelism of ASMs, which is  unbounded, is stronger than sequential recursion, so that there is no need to investigate parallel recursive algorithms separately from parallel algorithms. 

In Section \ref{sec:poruns} we report an application of the recursive ASM thesis.\footnote{A preliminary version of the result presented in Section \ref{sec:poruns} appeared in \cite{BoeSch20a}.} We return to the observation by Blass and Gurevich---though not explicitly stated in \cite{blass:beatcs2002}---that sequential recursive algorithms are linked to concurrent algorithms with partial-order runs. We first show that indeed the runs 
of a sequential recursive algorithm (read: of a recursive ASM) are definable by partial-order runs (Theorem \ref{thm-porun}), which comes at no surprise. 
The amazing second discovery was that also a converse relation holds, namely if all runs of a finitely composed concurrent algorithm (read: of a concurrent ASM $\mathcal{C}$ which consists only of instances of a bounded number of nd-seq ASMs) are definable by partial-order runs, then this algorithm is 
equivalent to a recursive ASM (Theorem \ref{thm-porun'}). This 
relativizes the overkill argument.\footnote{In fact, it shows  that, roughly speaking, finitely composed concurrent algorithms with partial-order runs are indeed the sequential recursive algorithms, and the response given in \cite{blass:beatcs2002} may be seen as the result of ingenious serendipity. However, arbitrary concurrent algorithms as discussed in \cite{boerger:ai2016} are a much wider class of algorithms.} 

Theorem \ref{thm-porun'} can be strengthened if the given concurrent ASM is static, i.e. with a fixed set of agents with associated programs. Such concurrent ASMs are equivalent to nd-seq ASMs (Theorem \ref{thm-Petri}). An interesting corollary of this theorem concerns the Process Rewrite Systems investigated in \cite{Mayr99}. They form the most general and most expressive set in a hierarchy of classes of rewrite systems which can be used to model certain state-based concurrent systems and are classified in \cite{Mayr99} by their expressiveness. Furthermore, the Process Rewrite Systems are shown in \cite{Mayr99} to strictly extend Petri nets (by subroutines that can return a value to their caller), but still to have a decidable reachability problem. As a corollary of Theorem \ref{thm-Petri} it turns out that for each Process Rewrite System, its partial-order runs can be simulated by runs of a nd-seq ASM (Corollary \ref{corollary}). 

Finally, in Section \ref{sec:more} we embed our work into a larger picture of related work on behavioural theories, and in Section \ref{sec:schluss} we present a brief summary and outlook on further research.

\section{Axiomatisation of Recursive Algorithms}\label{sec:postulates}

A decisive feature of a recursive algorithm is that it calls itself, or more precisely a copy (we also say an instance) of itself. If we consider mutual recursion, then this becomes slightly more general, as there is a finite family of algorithms calling (copies of) each other. Therefore, providing copies of algorithms and enabling calls will be essential for the intended definition of the notion of recursive algorithm, whereas otherwise we rely on Gurevich's axiomatic definition of sequential algorithms. Furthermore, there may be several simultaneous calls, which give rise to non-determinism,\footnote{The presence of this non-determinism in recursive algorithms has also been observed in Moschovakis' criticism \cite{moschovakis:2001}, e.g. \textit{mergesort\/} calls two copies of itself, each sorting one half of the list of given elements.} as these simultaneously called copies may run sequentially in one order or the other, or in parallel or even asynchronously. However, there is no interaction between simultaneously called algorithms, which implies that the mentioned execution latitude already covers all choices.

\subsection{Non-deterministic Sequential Algorithms}

In this section we recall the axiomatic definition of non-deterministic sequential algorithms.

\begin{definition}\label{def-nd-alg}\rm

A {\em non-deterministic sequential algorithm} (for short: {\em nd-seq algorithm}) is defined by the branching time, abstract state and bounded exploration postulates \ref{p-time}, \ref{p-state} and \ref{p-bound} in  \cite{gurevich:tocl2000}, paper to which we refer for the motivation for these axioms. 

\end{definition}

\begin{postulate}[Branching Time Postulate]\label{p-time}\rm

An nd-seq algorithm $\mathcal{A}$ comprises a set $\mathcal{S}$, elements of which are called {\em states}, a subset $\mathcal{I} \subseteq \mathcal{S}$, elements of which are called {\em initial states}, and a {\em one-step transition relation} $\tau \subseteq \mathcal{S} \times \mathcal{S}$.\footnote{For deterministic algorithms $\tau$ is a function.} Whenever $\tau(S,S^\prime)$ holds, the state $S^\prime$ is called a {\em successor state} of the state $S$ and we say that the algorithm performs a step in $S$ to yield $S^\prime$.

\end{postulate}

Though Postulate \ref{p-time} only gives a necessary condition for nd-seq algorithms and in particular leaves open what states are, one can already derive some consequences from it such as the notions of {\em run}, {\em final state} and {\em behavioural equivalence}.

\begin{definition}\label{def-seqRun}\rm 

Let $\mathcal{A}$ be a nd-seq algorithm with states $\mathcal{S}$, intial states $\mathcal{I}$ and transition relation $\tau$. A {\em run} of $\mathcal{A}$ is a sequence $S_0, S_1, S_2, \dots$ with $S_i \in \mathcal{S}$ for all $i$ and $S_0 \in \mathcal{I}$ such that $\tau(S_i,S_{i+1})$ holds for all $i$.

\end{definition}

Often $S_i$ is called a {\em final state} of a run $S_0, S_1, S_2, \dots$ of $\mathcal{A}$ (and the run is called {\em terminated} in this state) if $S_j =S_i$ holds for all 
$j \ge i$. But sometimes it is more convenient to use a dynamic termination predicate whose negation guards the execution of the algorithm $\mathcal{A}$ and which is set to true by $\mathcal{A}$ when $\mathcal{A}$ reaches a state one wants to consider as final.

States are postulated to be {\em universal algebras} (aka {\em Tarski structures}), which capture all desirable structures that appear in mathematics and computing.

\begin{definition}\rm

A {\em signature} $\Sigma$ is a finite set of function symbols, and each $f \in \Sigma$ is associated with an {\em arity} $\text{ar}(f) \in \mathbb{N}$. A {\em structure} over $\Sigma$ comprises a {\em base set}\footnote{For convenience to capture partial functions it is tacitly assumed that base sets contain a constant \textit{undef\/} and that each isomorphism $\sigma$ maps \textit{undef\/} to itself.} $B$ and an {\em interpretation} of the function symbols $f \in \Sigma$ by functions $f_B : B^{\text{ar}(f)} \rightarrow B$. An {\em isomorphism} $\sigma$ between two structures is given by a bijective mapping $\sigma : B \rightarrow B^\prime$ between the base sets that is extended to the functions by $\sigma(f_B)(\sigma(a_1),\dots,\sigma(a_n)) = \sigma(f_B(a_1,\dots,a_n))$ for all $a_i \in B$ and $n = \text{ar}(f)$.

\end{definition}

\begin{postulate}[Abstract State Postulate]\label{p-state}\rm

Each nd-seq algorithm $\mathcal{A}$ comprises a {\em signature} $\Sigma$ such that

\begin{enumerate}

\item Each state $S \in \mathcal{S}$ of $\mathcal{A}$ is a structure over $\Sigma$.

\item The sets $\mathcal{S}$ and $\mathcal{I}$ of states and initial states, respectively, are both closed under isomorphisms.

\item Whenever $\tau(S,S^\prime)$ holds, then the states $S$ and $S^\prime$ have the same base set.

\item Whenever $\tau(S,S^\prime)$ holds and $\sigma$ is an isomorphism defined on $S$, then also $\tau(\sigma(S),\sigma(S^\prime))$ holds.

\end{enumerate}

\end{postulate}

In the following we write $f_S$ to denote the interpretation of the function symbol $f \in \Sigma$ in the state $S$. Though we still have only necessary conditions for nd-seq algorithms, one can define further notions that are important for the development of the theory.

\begin{definition}\rm

A {\em location} of the nd-seq algorithm $\mathcal{A}$ is a pair $\ell = (f,(a_1,\dots,a_n))$ with a function symbol $f \in \Sigma$ of arity $n$ and all $a_i \in B$. If $B$ is the base set of state $S$ and $f_S(a_1,\dots,a_n) = a_0$ holds, then $a_0$ is called the {\em value} of the location $\ell$ in state $S$.

\end{definition}

We write $\text{val}_S(\ell)$ for the value of the location $\ell$ in state $S$. The {\em evaluation function} val can be extended to ground terms in a straightforward way.

\begin{definition}\rm

The {\em set of ground terms} over the signature $\Sigma$ is the smallest set $\mathbb{T}$ such that $f(t_1,\dots,t_n) \in \mathbb{T}$ holds for all $f \in \Sigma$ with $\text{ar}(f) = n$ and $t_1,\dots,t_n \in \mathbb{T}$\footnote{Clearly, for the special case $n = 0$ we get $f() \in \mathbb{T}$. Instead of $f()$ we usually write simply $f$.}. The {\em value} $\text{val}_S(t)$ of a term $t = f(t_1,\dots,t_n) \in \mathbb{T}$ in a state $S$ is defined by $\text{val}_S(t) = f_S(\text{val}_S(t_1),\dots,\text{val}_S(t_n))$.

\end{definition}

With the notions of location and value one can further define updates and their result on states\footnote{Note that update sets as we use them are merely differences of states.}.

\begin{definition}\label{def-update}\rm

An {\em update} of an nd-seq algorithm $\mathcal{A}$ in state $S$ is a pair $(\ell,v)$ with a location $\ell$ and a value $v \in B$, where $B$ is the base set of $S$. An update $(\ell,v)$ is {\em trivial} iff $\text{val}_S(\ell) = v$ holds. An {\em update set} is a set of updates. An update set $\Delta$ in state $S$ is {\em consistent} iff $(\ell,v_1), (\ell,v_2) \in \Delta$ implies $v_1 = v_2$, i.e. there can be at most one non-trivial update of a location $\ell$ in a consistent update set. If $\Delta$ is a 
consistent\footnote{Otherwise, usually the term 
	$S + \Delta$ used to define the successor state is considered as not defined. An alternative is to extend this definition letting 
$S + \Delta = S$, if $\Delta$ is inconsistent.} 
update set in state $S$, then $S + \Delta$ denotes the unique state $S^\prime$ with $\text{val}_{S^\prime}(\ell) = \begin{cases} v &\text{if}\; (\ell,v) \in \Delta \\ \text{val}_S(\ell) &\text{otherwise} \end{cases}$.

\end{definition}

Considering the locations, where a state $S$ and a successor state $S^\prime$ differ, gives us the following well-known fact (see \cite{gurevich:tocl2000}).

\begin{fact}\label{fact1}
	
If $\tau(S,S^\prime)$ holds, then there exists a unique minimal consistent update set $\Delta$ with $S + \Delta = S^\prime$.\footnote{The conclusion is true for any given pair $(S,S^\prime)$ of states, independently of the relation $\tau(S,S^\prime)$.}

\end{fact}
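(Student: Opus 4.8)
The plan is to write down the candidate update set explicitly and then check the three requirements (consistency together with $S+\Delta=S^\prime$, minimality, uniqueness) in turn. Using that $S$ and $S^\prime$ share a base set $B$ (by Postulate \ref{p-state}(3) when $\tau(S,S^\prime)$ holds, or by hypothesis in the footnote's more general statement), and that $\Sigma$ is finite, the locations of $\mathcal{A}$ form a set, so I can define
\[ \Delta \;:=\; \{\, (\ell, \text{val}_{S^\prime}(\ell)) \mid \ell \text{ a location with } \text{val}_S(\ell) \neq \text{val}_{S^\prime}(\ell) \,\}. \]
This is a set of updates of $\mathcal{A}$ in $S$, and it is consistent because each location $\ell$ contributes at most the single pair $(\ell,\text{val}_{S^\prime}(\ell))$.

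Next I would verify $S+\Delta = S^\prime$. Since $S+\Delta$ and $S^\prime$ are both structures over $\Sigma$ with base set $B$, and such a structure is completely determined by the values it assigns to all locations, it suffices to compare them locationwise. For a location $\ell$: if $\text{val}_S(\ell)\neq\text{val}_{S^\prime}(\ell)$ then $(\ell,\text{val}_{S^\prime}(\ell))\in\Delta$, so Definition \ref{def-update} gives $\text{val}_{S+\Delta}(\ell)=\text{val}_{S^\prime}(\ell)$; otherwise $\Delta$ contains no update of $\ell$, so $\text{val}_{S+\Delta}(\ell)=\text{val}_S(\ell)=\text{val}_{S^\prime}(\ell)$.

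For minimality and uniqueness I would prove the stronger statement that $\Delta\subseteq\Gamma$ for every consistent update set $\Gamma$ with $S+\Gamma=S^\prime$; since $\Delta$ is itself such a set, this exhibits $\Delta$ as the least element of that family under inclusion, which simultaneously gives existence of a minimal update set, its minimality, and its uniqueness. Given such a $\Gamma$ and $(\ell,v)\in\Delta$, we have $v=\text{val}_{S^\prime}(\ell)\neq\text{val}_S(\ell)$; since $\text{val}_{S+\Gamma}(\ell)=v\neq\text{val}_S(\ell)$, Definition \ref{def-update} forces $\Gamma$ to contain an update of $\ell$, which by consistency of $\Gamma$ is unique and must be $(\ell,v)$. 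Hence $(\ell,v)\in\Gamma$.

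I do not expect a genuine obstacle here: the argument is essentially bookkeeping. The only points deserving a moment's care are (a) the remark that a structure over a fixed signature with a fixed base set is determined by its location values, which licenses the locationwise comparison $S+\Delta=S^\prime$, and (b) the reading of \emph{minimal}: what is actually produced is a \emph{minimum}, which is a priori stronger and is what yields uniqueness for free. The footnote's claim that $\tau$ is irrelevant is then immediate, since the proof uses only that $S$ and $S^\prime$ are structures over the same signature with the same base set.
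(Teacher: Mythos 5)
Your proposal is correct and follows exactly the route the paper intends: the paper does not spell out a proof of Fact~\ref{fact1} but merely remarks that it follows by ``considering the locations where $S$ and $S^\prime$ differ,'' which is precisely the update set you construct and verify. Your additional observations---that one actually obtains a minimum (not just a minimal element), whence uniqueness, and that the argument never uses $\tau$---are accurate and consistent with the paper's footnote.
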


We use the notation $\Delta(S,S^\prime)$ for the consistent update set that is defined by $\tau(S,S^\prime)$. We further write $\boldsymbol{\Delta}(S)$ for the set of all such update sets defined in state $S$, i.e. $\boldsymbol{\Delta}(S) = \{ \Delta(S,S^\prime) \mid \tau(S,S^\prime) \}$.

The third postulate concerns bounded exploration. It is motivated by the simple observation that any algorithm requires a finite representation, which implies that only finitely many ground terms may appear in the representation, and these must then already determine the successor state---for a more detailed discussion see \cite{gurevich:tocl2000}---or the successor states in the case of non-determinism. Formally, this requires a notion of {\em coincidence} for a set of ground terms in different states.

\begin{definition}\rm

Let $T \subseteq \mathbb{T}$ be a set of ground terms for a nd-seq algorithm $\mathcal{A}$. Two states $S_1$ and $S_2$ with the same base set $B$ {\em coincide} on $T$ iff $\text{val}_{S_1}(t) = \text{val}_{S_2}(t)$ holds for all terms $t \in T$.

\end{definition}

\begin{postulate}[Bounded Exploration Postulate]\label{p-bound}\rm

Each nd-seq algorithm $\mathcal{A}$ comprises a finite set of ground terms $W \subseteq \mathbb{T}$ such that whenever two states $S_1$ and $S_2$ with the same base set coincide on $W$ the corresponding sets of update sets for $S_1$ and $S_2$ are equal, i.e. we have $\boldsymbol{\Delta}(S_1) = \boldsymbol{\Delta}(S_2)$. The set $W$ is called a {\em bounded exploration witness}.

\end{postulate}

Bounded exploration witnesses are not unique. In particular, the defining property remains valid, if $W$ is extended by finitely many terms. Therefore, without loss of generality we may tacitly assume that a bounded exploration witness $W$ is always closed under subterms. We then call the elements of $W$ {\em critical terms}. If $t$ is a critical term, then its value $\text{val}_S(t)$ in a state $S$ is called a {\em critical value}. This gives rise to the following well-known fact.

\begin{fact}\label{fact2}

The set $\boldsymbol{\Delta}(S)$ of update sets of an nd-seq algorithm $\mathcal{A}$ in a state $S$ is finite, and every update set $\Delta(S,S^\prime) \in \boldsymbol{\Delta}(S)$ is also finite.

\end{fact}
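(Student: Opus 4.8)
The plan is to reduce both assertions to one \emph{critical-value lemma}: every element occurring in an update of some $\Delta \in \boldsymbol{\Delta}(S)$ --- be it an argument of the updated location or the new value --- is a critical value of $S$, i.e.\ equals $\text{val}_S(t)$ for some critical term $t \in W$. Granting this, let $C = \{\text{val}_S(t) \mid t \in W\}$; it is finite because $W$ is, and since $\Sigma$ is finite with fixed arities the set $U$ of all updates $((f,(a_1,\dots,a_n)),a_0)$ with $f \in \Sigma$ and $a_0,\dots,a_n \in C$ is finite too. By Fact~\ref{fact1} the members of $\boldsymbol{\Delta}(S)$ are the minimal consistent update sets realised by $\tau$, so all their updates are non-trivial, and the lemma yields $\Delta \subseteq U$ for each $\Delta \in \boldsymbol{\Delta}(S)$. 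Hence every such $\Delta$ is finite, and $\boldsymbol{\Delta}(S) \subseteq 2^{U}$ is finite as well, which is exactly the Fact.

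Before attacking the lemma I would make two reductions: by the convention recorded after Postulate~\ref{p-bound} I take $W$ closed under subterms, so that its elements are the critical terms; and if the base set $B$ of $S$ is finite there is nothing to prove, since there are then only finitely many locations, hence only finitely many conceivable updates and update sets. So assume $B$ infinite, whence $B \setminus C$ is infinite as well.

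For the lemma I would argue by contradiction. Suppose some $\Delta = \Delta(S,S') \in \boldsymbol{\Delta}(S)$ contains a non-trivial update $(\ell,v)$ in which some $e \in B \setminus C$ occurs. Pick $e' \in B \setminus C$ different from $e$, from $v$, and from the finitely many arguments of $\ell$, and let $\sigma$ be the transposition of $B$ that swaps $e$ and $e'$ and fixes everything else. As $\sigma$ fixes $C$ pointwise we get $\text{val}_{\sigma(S)}(t) = \sigma(\text{val}_S(t)) = \text{val}_S(t)$ for all $t \in W$, so $S$ and $\sigma(S)$ share the base set $B$ and coincide on $W$; by the Bounded Exploration Postulate, $\boldsymbol{\Delta}(\sigma(S)) = \boldsymbol{\Delta}(S)$. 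Since $\sigma$ is an isomorphism from $S$ onto $\sigma(S)$, clause (iv) of the Abstract State Postulate gives $\tau(\sigma(S),\sigma(S'))$; and because a bijection preserves consistency and minimality of update sets, $\Delta(\sigma(S),\sigma(S')) = \sigma(\Delta)$, so $\sigma(\Delta) \in \boldsymbol{\Delta}(\sigma(S)) = \boldsymbol{\Delta}(S)$. Running this over the infinitely many admissible $e'$ yields infinitely many pairwise distinct members of $\boldsymbol{\Delta}(S)$ --- distinct because the image of $(\ell,v)$ under the swap records the fresh $e'$ in place of $e$ --- contradicting the bounded nondeterminism of nd-seq algorithms in the non-deterministic reading of \cite{gurevich:tocl2000} carried out in \cite{boerger:2003}. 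So no such $e$ exists, which is the lemma.

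The step I expect to be the genuine obstacle is making this final contradiction rigorous in the \emph{non-deterministic} setting. In Gurevich's deterministic theory the argument is cleaner: there $\boldsymbol{\Delta}(S)$ is a singleton, so the swap forces $\sigma(\Delta)=\Delta$; if the offending $e$ is the value $v$ this contradicts consistency of $\Delta$ outright, and if $e$ is an argument, iterating the swap makes $\Delta$ infinite, contradicting its minimality via Fact~\ref{fact1}. For nd-seq algorithms one only obtains $\sigma(\Delta)\in\boldsymbol{\Delta}(S)$, not $\sigma(\Delta)=\Delta$, so the clean route is to take the finiteness of $\boldsymbol{\Delta}(S)$ (bounded choice) as built into the non-deterministic postulates --- as it is in \cite{boerger:2003} --- and then read off the critical-value lemma from it; the real work is that the finiteness of $\boldsymbol{\Delta}(S)$ and the property that only critical values occur in updates are in effect equivalent, so one must be careful not to argue in a circle.
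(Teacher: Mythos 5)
Your overall decomposition coincides with the paper's: reduce both assertions to the lemma that every element occurring in an update is a critical value, and then conclude from the finiteness of $W$ that there are only finitely many critical values, hence only finitely many possible updates, finite update sets, and finitely many such sets. The genuine gap is in how you close the contradiction inside the critical-value lemma. You transport the update set by a transposition $\sigma$ of two non-critical elements, obtain $\sigma(\Delta)\in\boldsymbol{\Delta}(S)$ for infinitely many choices of $e'$, and then appeal to ``bounded nondeterminism'' to exclude infinitely many distinct members of $\boldsymbol{\Delta}(S)$. But the finiteness of $\boldsymbol{\Delta}(S)$ is not among the three postulates as stated in this paper (Branching Time, Abstract State, Bounded Exploration); it is precisely the first assertion of Fact~\ref{fact2}, i.e.\ the statement being proved. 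You notice the circularity yourself in your closing remarks but leave it unresolved, so as written neither the lemma nor either half of the Fact is established.

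The paper (following Lemma~6.2 of \cite{gurevich:tocl2000}) closes the contradiction differently and without any cardinality assumption: replace the allegedly non-critical $v_i$ by a fresh element $w$ \emph{not occurring in} $S$, obtaining a state $\hat S$ that coincides with $S$ on $W$. Bounded Exploration then gives $\boldsymbol{\Delta}(\hat S)=\boldsymbol{\Delta}(S)$, so the \emph{original} update $((f,(v_1,\dots,v_n)),v_0)$, which mentions $v_i$, must appear in some update set produced in $\hat S$ --- impossible, because $v_i$ does not occur in $\hat S$ and an update computed in a state can only involve elements occurring in that state (successor states have the same base set by the Abstract State Postulate). This argument is equally valid in the non-deterministic setting; the non-determinism costs nothing because the contradiction is drawn from a single update set rather than from counting how many there are. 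If you replace your final step by this one, the rest of your write-up --- the reduction of the Fact to the lemma, and the isomorphism and coincidence bookkeeping, which are correct --- goes through.
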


For a proof we first need to show that in every update $((f,(v_1,\dots,v_n)),v_0)$ in an update set $\Delta(S,S^\prime)$ the values $v_i$ are critical \cite{gurevich:tocl2000}. As $W$ is finite, there are only finitely many critical values, and we can only build finite update sets $\Delta(S,S^\prime)$ and only finitely many sets of update sets with these. We will use such arguments later in Section \ref{sec:recursion} to show that recursive algorithms are captured by recursive ASMs, and dispense with giving more details here.

\subsection{Recursion Postulate}\label{sect:recPostulate}

As remarked initially, an essential property of any recursive algorithm is the ability to perform call steps, i.e. to trigger an instance of a given
algorithm (maybe of itself) and remain waiting until the callee has computed an output for the given input. We make this explicit by extending the postulate on the one-step transition relation $\tau$ of nd-seq algorithms by characteristic conditions for a call step (see Postulate \ref{p-callStep} below).

Furthermore, it seems to be characteristic for runs of recursive algorithms that 
in a given state, the caller may issue in one step more than one call, 
though only finitely many, of callees which perform
their subcomputations independently of each other. For an example see the 
$sort$ rule in the \textit{mergesort\/} algorithm in Section \ref{sec:bsp}. The resulting `asynchronous parallelism' implies that the states in runs of a recursive algorithm are built over the union of the signatures of the calling and the called algorithms.

The independency condition for parallel computations of different instances of the given algorithms requires that for different calls, in particular for different calls of the same algorithm, the state spaces of the triggered subcomputations are separated from each other.
Below we make the term {\em instance of an algorithm} 
more precise to capture the needed encapsulation of subcomputations. This must be coupled with an appropriate input/output relation between the input provided by the caller and the output computed by the callee for this input, which will be captured by a {\em call relationship} in Definition \ref{def-call-rel}.  

This explains the following definition of an i/o-algorithm as nd-seq algorithm with call steps and distinguished function symbols for input and output.

\begin{definition}\label{def-io-alg}\rm
	
An {\em algorithm with input and output} (for short: {\em i/o-algorithm}) is an nd-seq algorithm whose one-step transition relation $\tau$ may comprise call steps satisfying the Call Step Postulate \ref{p-callStep} formulated below and whose signature $\Sigma$ is the disjoint union of three subsets
\[\Sigma = \Sigma_{in} \cup \Sigma_{loc} \cup \Sigma_{out}\]
containing respectively input, local and output function symbols that satisfy the input/output assumption defined below. 
	
\end{definition}

Function symbols in $\Sigma_{in}$, $\Sigma_{out}$ and $\Sigma_{loc}$, respectively, are called {\em input}, {\em output} and {\em local} function symbols. Correspondingly, locations with function symbol in  $\Sigma_{in}$, $\Sigma_{out}$ and $\Sigma_{loc}$, respectively, are called {\em input}, {\em output} and {\em local locations}. 
We include into input resp. output locations also variables which appear as input resp. output parameters of calls, although they are not function symbols.

The assumption on input/output locations of i/o-algorithms
is not strictly needed, but it can always 
be arranged and it eases the development of the theory.

\paragraph{\bf Input/Output Assumption} for i/o algorithms $\mathcal{A}$:

 \begin{enumerate}
	
\item Input locations of $\mathcal{A}$ are only read by $\mathcal{A}$, but never updated by $\mathcal{A}$. Formally, this implies that if $(\ell,v)$ is an update in an update set $\Delta(S,S^\prime)$ of $\mathcal{A}$ in any state $S$, then the function symbol $f$ in $\ell$ is not in $\Sigma_{in}$ of $\mathcal{A}$.
	
\item Output locations of $\mathcal{A}$ are never read by 
	$\mathcal{A}$, but can be written by $\mathcal{A}$. 
	This can be formalised by requiring that if $W$ is a bounded exploration witness, then for any term $f(t_1,\dots,t_n) \in W$ we have $f \notin \Sigma_{out}$.
	
\item Any initial state of $\mathcal{A}$ only depends on its input locations, so we may assume that $val_{S_0}(\ell) = \textit{undef\/}$ holds in every initial state $S_0$ of $\mathcal{A}$ for all output and local locations $\ell$.
	This assumption guarantees that when an i/o-algorithm is called, its run is initialized by the given input, which reflects the common intuition using input and output. 
	
\end{enumerate}

In a call relationship we call the caller the {\em 
	parent} and the callee the {\em child} algorithm. Intuitively, 
	
\begin{enumerate}\renewcommand{\labelenumi}{\alph{enumi})}

\item the parent algorithm is able to update input locations of 
the child algorithm, which determines the child's initial state;

\item when the child algorithm is called, control is handed 
over to it until it reaches a final state, in which state 
the parent takes back control and is able to read the output 
locations of the child;

\item the two algorithms have no other common locations. 

\end{enumerate}

Therefore we define:

\begin{definition}\label{def-call-rel}\rm	

A {\em call relationship} holds for (instances of) two i/o-algorithms $\mathcal{A}^p$ (parent) and $\mathcal{A}^c$ (child) if and only if they satisfy the following:	

\begin{itemize}
	
\item $\Sigma^{\mathcal{A}^c}_{in} \subseteq \Sigma^{\mathcal{A}^p}$. Furthermore,
	$\mathcal{A}^p$ may update input locations of $\mathcal{A}^c$, but never reads these locations. Formally this implies that for a bounded exploration witness $W$ of $\mathcal{A}^p$ and any term $f(t_1,\dots,t_n) \in W$ we have $f \notin \Sigma^{\mathcal{A}^c}_{in}$.
	
\item $\Sigma^{\mathcal{A}^c}_{out} \subseteq \Sigma^{\mathcal{A}^p}$.  Furthermore,
	$\mathcal{A}^p$ may read but never updates output locations of $\mathcal{A}^c$, so we have that for any update in an update set $\Delta(S,S^\prime)$ in any state $S$ of $\mathcal{A}^p$, its function symbol is not in $\Sigma^{\mathcal{A}^c}_{out}$.
	
\item $\Sigma^{\mathcal{A}^c}_{loc} \cap \Sigma^{\mathcal{A}^p} = \emptyset$ (no other common locations).

\end{itemize}	

\end{definition}

\begin{postulate}[Call Step Postulate]\label{p-callStep}\rm

When an i/o-algorithm $p$---the caller, viewed as parent algorithm---calls a finite number of i/o-algorithms $c_1,\ldots,c_n$---the callees, viewed as child algorithms $CalledBy(p)$---a call relationship  (denoted as $CalledBy(p)$) holds between the caller and each callee. The caller activates a fresh instance of each callee $c_i$ so that they can start their computations. These computations are independent of each other and the caller remains waiting---i.e. performs no step---until every callee has terminated its computation (read: has reached a final state). For each callee, the initial state of its computation is determined only by the input passed by the caller; the only other interaction of the callee with the caller is to return in its final state an output to $p$.

\end{postulate}

\begin{definition}\label{def-recAlg}\rm

A {\em sequential recursive algorithm} $\mathcal{R}$ is a finite set of i/o-algorithms---i.e. satisfying the branching time, abstract state, bounded exploration and call step postulates \ref{p-time}, \ref{p-state}, \ref{p-bound} and \ref{p-callStep}---one of which is distinguished as {\em main} algorithm.
The elements of $\mathcal{R}$ are also called components of  $\mathcal{R}$.

\end{definition}

Differently from runs of a nd-seq algorithm as defined by 
Definition \ref{def-seqRun}, where in each state at most one 
step of the nd-seq algorithm is performed, in a recursive 
run a sequential recursive algorithm $\mathcal{R}$ can perform in one 
step simultaneously one step of each of finitely many not 
terminated and not waiting called instances of its 
i/o-algorithms. This is expressed by the recursive run postulate \ref{p-recRun} below. In this postulate we refer to $Active$ and not $Waiting$ instances of components, which are defined as follows:
	
\begin{definition}\label{Waiting}\rm

To be $Active$ resp. $Waiting$ in a state $S$ is defined as follows:		
\begin{asm}
Active(q)  \IFF q \in Called \AND ~\NOT Terminated(q) \\
Waiting(p) \IFF
  ~ \FORSOME c \in CalledBy(p) ~ Active(c) \\
  Called=\{main\} \cup \bigcup_p CalledBy(p)       	
\end{asm}
$Called$ collects the instances of algorithms that are called during the run. $CalledBy(p)$ denotes the subset of $Called$ which contains all the children called by $p$. $Called=\{main\}$ and $CalledBy(p)=\emptyset$ are true in the initial state $S_0$, for each i/o-algorithm $p \in \mathcal{R}$. In particular, in $S_0$ the original component $main$ is considered to not be $CalledBy(p)$, for any $p$.
		
\end{definition}

\begin{postulate}[Recursive Run Postulate]\label{p-recRun}\rm
		
For a sequential recursive algorithm $\mathcal{R}$ with main component $main$ a {\em recursive run} is a sequence $S_0,S_1,S_2, \ldots$ of states together with a sequence $C_0,C_1,C_2, \ldots$ of sets of instances of components of $\mathcal{R}$ which satisfy the following constraints concerning the recursive run and bounded call tree branching:

\begin{description}

\item[\bf Recursive run constraint.] \

\begin{itemize}

\item $C_0$ is the singleton set $C_0=\{main\}$, i.e. every run starts with $main$,

\item  every $C_i$ is a finite set of in $S_i$ $Active$ and not $Waiting$ instances of components of $\mathcal{R}$, 

\item every $S_{i+1}$ is obtained in one  $\mathcal{R}$-step by performing in $S_i$ simultaneously one step of each i/o-algorithm in $C_i$.\footnote{Our reviewers worried here about synchronous, asynchronous and interleaving executions. Due to the independence of all the instances, whether they are executed in parallel, asynchronously or interleaved does not matter here. Note that $C_i$ is a not furthermore restricted finite \emph{subset} of all in $S_i$ $Active$ and not $Waiting$ (completely independent) instances of components of $\mathcal{R}$.  If $C_i$ contains more than one instance, the instances in this $C_i$ and only these are synchronized (independently for each $i$), but also interleaving (where each $C_i$ is a singleton set) and asynchronous execution are possible. We impose no constraint at all on how the sets $C_i$ are determined, e.g. by an external scheduling mechanism.} Such an $\mathcal{R}$-step is also called a {\em recursive step} of $\mathcal{R}$.

\end{itemize} 	

\item[\bf Bounded call tree branching.] There is a fixed natural number $m>0$, depending only on $\mathcal{R}$, which in every $\mathcal{R}$-run bounds the number of callees which can be called by a call step.

\end{description}

\end{postulate}

To capture the required independence of callee computations we now describe a way to make the concept of an instance of an algorithm and its computation more precise. The idea is to use for each callee a different state space, with the required connection between caller and callee through input and output terms. One can define an instance of an algorithm $\mathcal{A}$ by adding a label $a$, which we invite the reader to view as an agent executing the instance $\mathcal{A}_a=(a,\mathcal{A})$ of $\mathcal{A}$. The label $a$ can be used as environment parameter for the evaluation $val_S(t,a)$ of a term $t$ in state $S$ with the given environment. This yields different functions $f_a,f_{a^\prime}$ as interpretation of the same function symbol $f$ for different agents $a,a^\prime$, so that the run-time interpretations of a common signature element $f$ can be made to differ for different agents, due to different inputs which determine their initial states.\footnote{The idea underlies the definition of ambient ASMs we will use in the following. It allows one to classify certain $f$ as ambient-dependent functions, whereby the algorithm instances become context-aware. For the technical details we refer to  the definition in the textbook~\cite[Ch.4.1]{BoeRas18}.}

This allows us to make the meaning of `activating a fresh 
instance of a callee' in the Call Step Postulate more 
precise by using as fresh instance of a child algorithm 
$\mathcal{A}$ called by $p$ an instance $\mathcal{A}_c$ with a new 
label $c$, where the interpretation $f_c$ of each input 
or output function $f$ satisfies $f_c=f_p$ during the run of $\mathcal{A}_c$. Note that by the call relationship constraint in the Call Step Postulate,
input/output function symbols are in the signature of both the parent and the child algorithm.
This provides the ground for the `asynchronous parallelism' 
of independent subcomputations in the run constraint of the 
recursive run postulate. In fact, when a state $S^\prime$ is 
obtained from state $S$ by one step of each of finitely many 
$Active$ and not $Waiting$ i/o-algorithms $q_1,\ldots,q_k$, 
this means that for each $j \in \{ 1 ,\dots, k \}$ the one-step 
transition relation holds for the corresponding state 
restrictions, namely 
$\tau_{q_j}(res(S,\Sigma^{q_j}),res(S^\prime,\Sigma^{q_j}))$ 
where $res(S,\Sigma)$ denotes the restriction of state $S$ 
to the signature $\Sigma$.

With the above definitions one can make the Call 
Step Postulate more explicit by saying that if
$\mathcal{A}_p$ calls $\mathcal{A}^1,\ldots,\mathcal{A}^n$ in a state $S$ so that as a result 
$\tau_{\mathcal{A}_p}(S,S^\prime)$ holds\footnote{To simplify the 
	presentation we adopt a slight abuse of 
	notation, writing $\tau_{\mathcal{A}}(S,S^\prime)$ with the global states $S,S^\prime$ even where $\tau_{\mathcal{A}}$ really holds for their restriction to the sub-signature of the concrete algorithm $\mathcal{A}$.}, then for fresh instances 
$\mathcal{A}^i_{c_i}$ of $\mathcal{A}^i$ with input locations $input_i$ ($1 \leq i \leq n$) the following holds: 
\begin{asm}
	\mbox{In } S^\prime \mbox{ the following is true }\FORALL 1 \leq i \leq n:	 \+
	\mathcal{A}^i_{c_i} \in  Called \AND 
	\mathcal{A}^i_{c_i} \in CalledBy(\mathcal{A}_p) \AND  
	Initialized(\mathcal{A}^i_{c_i} ,input_i) \+
	\AND Waiting(\mathcal{A}_p)\footnote{Except the trivial case that all $\mathcal{A}^i_{c_i}$ when $Called$ in $S^{\prime}$ are already $Terminated$.}
\end{asm}
The predicate $Initialized(\mathcal{A}^i_{c_i},input_i)$ expresses that the restriction $res(S',\Sigma^{\mathcal{A}^i_{c_i}})$ of $S'$ to the signature of $\mathcal{A}^i_{c_i}$ is an initial state of $\mathcal{A}^i_{c_i}$ determined by $input_i$, so that $\mathcal{A}^i_{c_i}$ is ready to start its computation.

\begin{remark*}[\bf on Call Trees]
If in a recursive 
$\mathcal{R}$-run the main algorithm calls some
i/o-algorithms, this
call creates a finitely branched call tree whose nodes are 
labeled by the  instances of the i/o-algorithms involved, 
with active and not 
waiting algorithms labeling the leaves and with the main (the parent) 
algorithm labeling the root of the tree and becoming waiting. When the 
algorithm at a leaf makes a call, this extends the tree 
correspondingly. 
When the algorithm at a child of a node has terminated its 
computation, we delete the child from the tree. The leaves of this (dynamic) call tree are 
labeled by the active not waiting algorithms in the run. 
When the main algorithm terminates, the call tree is reduced again to the root labeled by the initially called {\em main algorithm}. 
\end{remark*}

Usually, it is expected that for recursive $\mathcal{R}$-runs each called i/o-algorithm reaches a final state, but in general it is not excluded that this is not the case. An example of the former case is given by \textit{mergesort\/}, whereas an example for the latter case is given by the recursive {\em sieve of Eratosthenes} algorithm discussed in \cite{moschovakis:2001} and used in Section \ref{sec:bsp} to illustrate our definitions.

\section{Capture of Recursive Algorithms}\label{sec:recursion}

We now proceed with the second step of our behavioural theory, the definition of an abstract machine model---these will be recursive ASMs, an extension of sequential ASMs---and the proof of the main theorem that the runs of this model capture the runs of sequential recursive algorithms (Theorems \ref{thm-plausible} and \ref{thm-capture}).

\subsection{Recursive Abstract State Machines}
\label{sect:recASM}

As common with ASMs let $\Sigma$ be a signature and let $\mathcal{U}$ be a universe of values. In addition, we assume a {\em background structure} comprising at least truth values and their connectives as well as the operations on them. Values defined by the background are assumed to be elements of $\mathcal{U}$. Then (ground) terms over $\Sigma$ are 
built in the usual way (using also the operations from the background), and they are interpreted taking $\mathcal{U}$ as base set---for details we refer to the standard definitions of ASMs \cite{boerger:2003}. This defines the set of states of {\em recursive ASM rules} we are going to 
define now syntactically. We proceed by induction, adding to the usual rules of non-deterministic sequential (nd-seq) ASMs (which we repeat here for the sake of completeness) named rules which can be called.\footnote{The terse definition here avoids complicated syntax. We tacitly permit parentheses to be used in rules when needed.} We use an arbitrary set $\mathcal{N}$ of names for named rules.

\begin{description}

\item[Assignment.] If $t_0, \dots, t_n$ are terms over the signature $\Sigma$ and $f \in \Sigma$ is a function symbol of arity $n$, then $f(t_1 ,\dots, t_n) := t_0$ is a  recursive ASM rule.

\item[Branching.] If $\varphi$ is a Boolean term over the signature $\Sigma$ and $r$ is a recursive ASM rule, then also \texttt{IF} $\varphi$ \texttt{THEN} $r$ is a recursive ASM rule.

\item[Bounded Parallelism.] If $r_1, \dots, r_n$ are  recursive ASM rules, then also their parallel composition, denoted \texttt{PAR} $r_1 \| \dots$ $\| r_n$ is a recursive ASM rule.

\item[Bounded Choice.] If $r_1, \dots, r_n$ are recursive ASM rules, then also the non-deterministic choice among them, denoted \texttt{CHOOSE} $r_1 \mid \dots \mid r_n$ is a recursive ASM rule.

\item[Let.] If $r$ is a recursive ASM rule and $t$ is a term and $x$ is a variable, then \texttt{LET} $x = t$ \texttt{IN} $r$ is also a recursive ASM rule.

\item[Call.] Let $t_0, \dots, t_n$ be terms where the outermost function symbol of $t_0$ is different from the outermost function symbol of $t_i$ for every $i \neq 0$. 
Let $N \in \mathcal{N}$ be the name of a rule of arity $n$, declared by $N(x_1,\ldots,x_n)=r$, where $r$ is a recursive ASM rule all of whose free variables are contained in $\{x_0,\ldots,x_n\}$. Then $t_0 \leftarrow N(t_1,\dots,t_n)$ is a recursive ASM  rule.

\end{description}

\begin{definition}\rm

A recursive ASM rule  of form $t_0 \leftarrow N(t_1,\dots,t_n)$ is called a {\em named i/o-rule} or simply i/o-rule.

\end{definition}

The same way a sequential recursive algorithm consists of finitely many i/o-algorithms, a recursive ASM $\mathcal{R}$ consists of finitely many recursive ASM rules, also called component (or component ASM) of $\mathcal{R}$.

\begin{definition}\label{def-rASM}\rm

A {\em recursive Abstract State Machine} (rec-ASM) $\mathcal{R}$ consists of a finite set of recursive ASM rules, one of which is declared to be the main rule. 

\end{definition}

For the signature $\Sigma$ of recursive ASM rules we use the notation $\Sigma_{in} \cup \Sigma_{loc} \cup \Sigma_{out}$ for the split of $\Sigma$ into the disjoint union of input, output and local functions. For  named i/o-rules $t_0 \leftarrow N(t_1,\dots,t_n)$ the outermost function symbol of $t_0$ is declared as an element of $\Sigma_{out}$ and for each $t_i$ the 
outermost function symbol of $t_i$ is declared as an element 
of $\Sigma_{in}$ ($i=1,\ldots,n$). In the definition of the semantics of a 
named i/o-rule we will take care that the input/output 
assumption and the call relationship defined in 
Section \ref{sect:recPostulate} for i/o-algorithms are satisfied by named i/o-rules. 

Sequential and recursive ASMs differ in their run concept, 
analogously to the difference between runs of an nd-seq 
algorithm and of a sequential recursive algorithm. A sequential ASM is a 
`mono-agent' machine: it consists of just one 
rule\footnote{For notational convenience, this rule is often 
	spelled out as a set of rules, however these rules are 
	always executed together, in parallel.} and in a 
sequential run this very same rule is applied in each 
step---by an execution agent that normally remains unmentioned. This 
changes with recursive ASMs which are `multi-agent' 
machines.  They consist of a set of independent rules, 
multiple instances of which (even of a same rule) may be 
called to be executed independently (for an example see the 
$sort$ rule in Sect.~\ref{sec:bsp}). We capture this by 
associating an execution agent $ag(r)$ with each rule $r$ so 
that each agent can execute its rule instance independently of the other agents, in its own copy of the state space 
(i.e. instances of states over the signature of the executed 
rule), taking into account the call relationship between caller and callee (see below). 

Therefore every single step of a recursive ASM  $\mathcal{R}$ 
may involve the execution of one step by each of finitely many $Active$ and not $Waiting$ agents $a$ which execute in their state space the rule $pgm(a)$ they are associated (we also say {\em equipped}) with.
To describe this separation of state spaces of different agents (in particular if they execute the same program),
we define {\em instances of a rule} $r$ by ambient ASMs of form $\AMB a \IN r$ with agents $a$ (see below for details). The following definition paraphrases the run constraint in the Recursive Run Postulate \ref{p-recRun}.

\begin{definition}\label{def-recAsmRun}\rm

A {\em recursive run} of a recursive ASM $\mathcal{R}$ is a sequence $S_0,S_1,S_2,\ldots$ of states together with a sequence $A_0,A_1,A_2,\ldots$ of subsets of $Agent$, where each $a \in Agent$ is equipped with a $pgm(a)$ that is an instance $\AMB a \IN r$ of a rule $r \in \mathcal{R}$, such that the following holds:

\begin{itemize}

\item $A_0$ is a singleton set $A_0=\{a_0\}$, which in $S_0$ equals the set $Agent$, and its agent $a_0$ is equipped with $pgm(a_0)= ~(\AMB a_0 \IN main)$.

\item $A_i$ is a finite set of in $S_i$ $Active$ and not $Waiting$ agents. We define (see Definition~\ref{Waiting}):
	\begin{asm}
		Active(a) \mbox{ (in state } S)    \IFF
		     a \in Agent \AND ~\NOT Terminated(pgm(a)) 
		     \mbox{ (in } S)   \\
		Waiting(a) \mbox{ (in state } S) \IFF ~ 
		\FORSOME a^\prime \in CalledBy(a) ~ Active(a^\prime)  \mbox{ (in } S)             	
	\end{asm} 
\item $S_{i+1}$ is obtained from $S_i$ in one $\mathcal{R}$-step by performing for each agent $a \in A_i$ one step of $pgm(a)$.\footnote{If one wants to stick to interleaving executions, it suffices to determine $A_i$ as singleton sets.}

\end{itemize} 

\end{definition}

To complete the definition of recursive ASM runs, which extends the notion of runs of sequential ASMs, 
it suffices (besides explaining ambient ASMs) to add a definition for what it means semantically to apply a named i/o-rule. Using the ASM framework this boils down to extend the inductive definition of the update sets computed by sequential ASMs in a given state by defining the update sets computed by named i/o-rules. 
 
A detailed definition of ambient ASMs can be found in~\cite[Ch.4.1]{BoeRas18}. Here it suffices to say that 
using $\AMB a \IN r$ as instance of a called rule $r$ permits to 
isolate the state space of agent $a$ from that of other 
agents, namely by evaluating terms $t$ in state $S$ 
considering also the agent parameter $a$, using $val_S(t,a)$ 
instead of $val_S(t)$.
To establish the call relationship we 
require below the following:  when a recursive ASM rule $r$, executed by a 
parent agent $p$, calls a rule $q$ to be executed by 
a child agent $c$, then the input/output functions $f$ of $q$ are also functions in $r$ and are interpreted there in the state space of $p$ the same way as in the state space of $c$. 

For the sake of completeness we repeat 
the definition of update sets for sequential ASM rules from \cite{gurevich:lipari1995} and extend it for named i/o-rules.
Rules $r$ of sequential ASMs do not change neither the set $Agent$ nor the $pgm$ function, so $Agent$ and $pgm$ do not appear in the definition of $\Delta_r(S)$.\footnote{Note that $\Delta_r(S)$ defines the set of update sets by which rule $r$ changes state $S$ into a successor state $S^\prime$.}  $i/o$-rules are the only rules which involve also introducing a new element $a$ into $Agent$ (with a value assigned to $pgm(a)$) and a state initialization corresponding to the provided input, so that $a$ executes its instance of the called rule.

\begin{itemize}

\item If $r$ is an assignment rule $f(t_1 ,\dots, t_n) := t_0$, then let $v_i = \text{val}_S(t_i)$. We define $\Delta_r(S) = \{ \{ ((f,(v_1,\dots,v_n)),v_0) \} \}$.

\item If $r$ is a branching rule \texttt{IF} $\varphi$ \texttt{THEN} $r^\prime$, then let $v$ be the truth value $\text{val}_S(\varphi)$. We define $\Delta_r(S) = \Delta_{r^\prime}(S)$ for $v = \textbf{true}$ and $\Delta_r(S) = \{ \emptyset \}$ otherwise.

\item If $r$ is a parallel composition rule \texttt{PAR} $r_1 \| \dots \| r_n$,\footnote{Parallel composition rules are also written by displaying the components $r_i$ vertically, omitting \texttt{PAR} and $\|$.} then we define $\Delta_r(S) = \{ \Delta_1 \cup\dots\cup \Delta_n \mid \Delta_i \in \Delta_{r_i}(S) \}$.

\item If $r$ is a bounded choice rule \texttt{CHOOSE} $r_1 \mid \dots \mid r_n$, then we define $\Delta_r(S) = \Delta_{r_1}(S) \cup\dots\cup \Delta_{r_n}(S)$.

\item If $r$ is a let rule \texttt{LET} $x = t$ \texttt{IN} $r^\prime$, then let $v = \text{val}_S(t)$, and define $\Delta_r(S) = \{ [v / x] . \Delta \mid [v / x] . \Delta \in \Delta_{r^\prime}(S)\}$.

\end{itemize}

Now consider the case that $r$ is a call rule $t_0 \leftarrow N(t_1,\dots,t_n)$. In this case let $ t_0 = f(t_1^\prime,\dots,t_k^\prime)$, and let $N(x_1,\dots,x_n)=q$ be the declaration of the rule named $N$, with all free variables of $q$ among $x_0,x_1,\ldots,x_n$.

In the call tree, the caller program $r$ plays the role of the parent of the called child program that will be executed by a new agent $c$. The child program is an instance $q_c$ of $q$ with the outer function symbols of $t_i$ for $1 \leq i \leq n$ classified as denoting input functions (which are not read by the caller program) and with the outer function symbol $f$ of $t_0$ classified as denoting an output function (which is not updated by the caller program).\footnote{The input parameters and the output location parameters are passed by value, so that the involved i/o-function symbols can be considered as belonging to the signature of caller and callee.} The first two of the call relationship conditions are purely syntactical and can be assumed (without loss of generality) for caller and callee programs. 
The third condition is satisfied, because each local function symbol $f$ of arity $n$ is implicitly turned in a program instance  into an ($n+1$)-ary function symbol, namely by adding the additional agent as environment parameter for the evaluation of terms with $f$ as leading function symbol. Therefore, each local function of the callee is different from each local function of the caller, and to execute the 
call rule means to create a new agent $c$,\footnote{The function $\NEW$ is assumed to yield for each invocation a fresh element, pairwise different ones for parallel invocations. One can define such a function also by an $\IMPORT$ construct which operates on a (possibly infinite) special {\em reserve} set and comes with an additional constraint on the $\PAR$ construct to guarantee that parallel imports yield pairwise different fresh elements, see \cite[2.4.4]{boerger:2003}.} which is 
$CalledBy$ the agent $\SELF$ that executes the call, to equip $c$ with the fresh program
instance $q_c$ and $\ASM{Initialize}$ its state by the values 
of $t_i,t_j^\prime$. This makes the callee ready to run 
and puts the caller into $Waiting$ mode, in the sense 
defined by Definition \ref{Waiting} (except the trivial case that $q_c$ 
is already $Terminated$ when $Called$ so that it will not be executed).

In other words we define $\Delta_r(S)$ as the singleton set containing the update set computed in state $S$ by the following ASM, a rule we denote by
$\ASM{Call}(t_0 \leftarrow N(t_1,\dots,t_n))$ which interpretes the named i/o-rule $t_0 \leftarrow 
N(t_1,\dots,t_n)$.

\begin{definition}\label{def-CallNamedRule}\rm

\begin{asm}
\ASM{Call}(t_0 \leftarrow N(t_1,\dots,t_n))=\+
\LET N(x_1,\dots,x_n)=q \mbox{  // declaration of }N\\
\LET  v_1 = t_1 ,\ldots,  v_n =t_n  
    \mbox{  // input evaluation  
	            $\text{val}_S(t_i,\SELF)$ by caller}\\
\LET t_0=f(t_1^\prime,\dots,t_k^\prime)\\		      
\LET  v_1^\prime = t_1^\prime,\ldots,  
           v_k^\prime = t_k^\prime
     \mbox{  // output location evaluation  
     	$\text{val}_S(t_i^\prime,\SELF)$ by caller}      \\
\LET c=~\NEW(Agent)	\+
    pgm(c):= ~ \AMB c \IN q  \mbox{  // equip callee with its program instance}\\
    \ASM{Insert}(c,CalledBy(\SELF)) \\
    \ASM{Initialize}(q_c,v_1 / x_1,\ldots,v_n /     
         x_n,f(v_1^\prime,\ldots, v_k^\prime)/x_o ) \\
    CalledBy(c):=\emptyset \-  
\end{asm}
\end{definition}

Note that $(f,(v_1^\prime,\ldots, v_k^\prime))$ denotes the output location which the caller expects to be updated by the callee with the return value.

\begin{theorem}\label{thm-plausible}

Each recursive ASM $\mathcal{M}$ defines a sequential recursive algorithm $\mathcal{M}_a$ (in the sense of Definition \ref{def-recAlg}) such that the recursive runs of $\mathcal{M}$ can be step-for-step simulated by the runs of $\mathcal{M}_a$.

\end{theorem}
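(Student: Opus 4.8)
The plan is to extract from the recursive ASM $\mathcal{M}$ a finite family $\mathcal{M}_a$ of i/o-algorithms, one per component rule, verify that each satisfies the branching time, abstract state, bounded exploration and call step postulates, and then check that the two notions of recursive run coincide step for step. Concretely, I would associate with every component rule $r\in\mathcal{M}$ an i/o-algorithm $\mathcal{A}_r$ whose signature $\Sigma^r=\Sigma^r_{in}\cup\Sigma^r_{loc}\cup\Sigma^r_{out}$ has the split induced by the declaration conventions on named i/o-rules, whose states are the structures over $\Sigma^r$ built on the fixed universe $\mathcal{U}$ together with the background, whose initial states are those in which every output and local location has value \textit{undef\/} (so that only the input matters, as required by Input/Output Assumption~(3)), and whose one-step relation satisfies $\tau_r(S,S')$ iff $S'=S+\Delta$ for some $\Delta\in\Delta_r(S)$, where $\Delta_r(S)$ is the set of update sets computed by $r$ according to the inductive clauses above, using $\ASM{Call}$ of Definition~\ref{def-CallNamedRule} in the i/o-rule case. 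I then set $\mathcal{M}_a:=\{\mathcal{A}_r\mid r\in\mathcal{M}\}$, a finite set, with $\mathcal{A}_{main}$ distinguished as its main component. Note that $\mathcal{A}_r$ is not literally $r$: each of its run-time instances carries, through the ambient construct $\AMB a \IN r$, an agent parameter $a$ and evaluates terms as $val_S(t,a)$ rather than $val_S(t)$, which is exactly the encapsulation of subcomputations on which Definition~\ref{def-call-rel} rests.

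Next I would verify the postulates for each $\mathcal{A}_r$. The Branching Time Postulate~\ref{p-time} is immediate. For the Abstract State Postulate~\ref{p-state}: states are $\Sigma^r$-structures by construction; $\mathcal{S}$ and $\mathcal{I}$ are closed under isomorphisms because the class of $\Sigma^r$-structures is and the \textit{undef\/}-conditions cutting out $\mathcal{I}$ are isomorphism-invariant; $\tau_r$ preserves base sets since update sets only rewrite location values; and $\tau_r$ commutes with isomorphisms because $\Delta_r(S)$ is built by an induction on $r$ whose only semantic ingredient is term evaluation, which commutes with isomorphisms --- the only delicate clause being the creation of a fresh agent inside $\ASM{Call}$, which is handled by the standard treatment of the reserve, taking an isomorphism to act also on reserve elements~\cite{boerger:2003}. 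The Bounded Exploration Postulate~\ref{p-bound} follows by the usual structural induction: let $W_r$ be the finite subterm-closure of the terms syntactically occurring in $r$; for assignments, branchings, parallel compositions, choices and lets this is the argument of \cite{gurevich:tocl2000,gurevich:lipari1995}, and for an i/o-rule $t_0\leftarrow N(t_1,\dots,t_n)$ with $t_0=f(t'_1,\dots,t'_k)$ it suffices to put the subterm-closure of $\{t_1,\dots,t_n,t'_1,\dots,t'_k\}$ into $W_r$ --- but not $t_0$ itself, so that no term of $W_r$ has an outermost symbol in $\Sigma^r_{out}$, as Input/Output Assumption~(2) requires --- because the update set produced by $\ASM{Call}$ in $S$ (the fresh agent $c$, the assignment to $pgm(c)$, the insertion of $c$ into $CalledBy(\SELF)$, the $\ASM{Initialize}$ updates formed from $val_S(t_i,\SELF)$ and the output location with argument values $val_S(t'_j,\SELF)$, and $CalledBy(c):=\emptyset$) depends on $S$ only through those term values. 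Fact~\ref{fact2} then gives finiteness of $\boldsymbol\Delta(S)$ and of every update set, and in particular the number of callees spawned by one step of any component is bounded by the number of i/o-rules that can fire simultaneously within a component, a number $m$ depending only on $\mathcal{M}$; this is the Bounded call tree branching clause of Postulate~\ref{p-recRun}.

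I would then check the Call Step Postulate~\ref{p-callStep}, together with the Input/Output Assumption, which $\ASM{Call}$ was designed to enforce. Parts~(1) and~(2) of the assumption hold because, by the declaration conventions, the outermost symbols of the $t_i$ are input symbols and are never the target of an assignment in $r$ (assumed without loss of generality), while the outermost symbol $f$ of $t_0$ is an output symbol and does not occur as an outermost symbol in $W_r$; part~(3) is the \textit{undef\/}-condition on initial states. When $\mathcal{A}_r$ performs a call step, spawning one child agent per fired i/o-rule $t_0\leftarrow N(t_1,\dots,t_n)$ with declaration $N(x_1,\dots,x_n)=q$, the call relationship of Definition~\ref{def-call-rel} holds between $\mathcal{A}_r$ (parent) and $\mathcal{A}_q$ (child): the inclusions $\Sigma^{q}_{in}\subseteq\Sigma^r$ and $\Sigma^{q}_{out}\subseteq\Sigma^r$ are arranged by passing the inputs and the output location by value, so these symbols belong to both signatures; the parent neither reads the child's input locations (they are not in $W_r$) nor updates its output location apart from the single initializing update of $\ASM{Initialize}$, which is exactly the latitude allowed by Postulate~\ref{p-callStep}; and $\Sigma^{q}_{loc}\cap\Sigma^r=\emptyset$ because the ambient construct turns each local symbol of arity $n$ into an $(n+1)$-ary symbol carrying the executing agent, so the child's local functions are disjoint from the parent's. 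The remaining content of Postulate~\ref{p-callStep} --- mutual independence of the callees, and the caller performing no step until all have terminated --- is exactly what $CalledBy(\SELF)$ together with the $Active$ and $Waiting$ predicates of Definition~\ref{Waiting} express, the independence being literal since the state spaces are separated.

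Finally, I would set up the run correspondence. Given a recursive run $S_0,S_1,\dots$ of $\mathcal{M}$ with agent sets $A_0,A_1,\dots$ (Definition~\ref{def-recAsmRun}), read each agent $a\in A_i$ with $pgm(a)=(\AMB a \IN r)$ as the instance $\mathcal{A}_{r,a}$ of the component $\mathcal{A}_r$ and put $C_i:=\{\mathcal{A}_{r,a}\mid a\in A_i\}$. The two formulations of $Active$ and $Waiting$ are definitionally identical, $C_0=\{\mathcal{A}_{main,a_0}\}$ is the required singleton, each $C_i$ is a finite set of $Active$ and not $Waiting$ instances, and $S_{i+1}=S_i+\bigcup_{a\in A_i}\Delta_a$ with $\Delta_a\in\Delta_{pgm(a)}(S_i)$ is precisely ``perform in $S_i$ simultaneously one step of each i/o-algorithm in $C_i$'', since for an agent $a$ executing $r$ one has $res(S_{i+1},\Sigma^r)=res(S_i,\Sigma^r)+\Delta_a$, i.e.\ $\tau_{\mathcal{A}_r}(res(S_i,\Sigma^r),res(S_{i+1},\Sigma^r))$ holds. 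The reverse reading yields the inverse map, so the runs match step for step. The point that needs the most care --- and where I expect the bulk of the detailed write-up to go --- is the ambient/reserve bookkeeping: proving rigorously that $\AMB a \IN r$ isolates the local part of agent $a$'s state (so that the third call-relationship condition and the independence clause of Postulate~\ref{p-callStep} genuinely hold) and that fresh-agent creation via $\NEW$/$\IMPORT$ breaks neither isomorphism-closure nor bounded exploration. Both facts are available from the standard ASM theory of \cite{boerger:2003,BoeRas18}, but must be invoked with care in this multi-agent, agent-parameterised setting.
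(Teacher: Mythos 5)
Your proposal is correct and follows essentially the same route as the paper's proof: extract one i/o-algorithm per component rule, verify the four postulates (with bounded exploration handled via the terms occurring in rule bodies and the call step handled via the semantics of $\ASM{Call}$), derive bounded call tree branching from the finitely many i/o-rules per component, and match the two run definitions step for step. Your write-up is in fact considerably more detailed than the paper's on the delicate points (exclusion of $t_0$ from the witness, isomorphism-closure under fresh-agent creation, the ambient-based signature disjointness), all of which the paper passes over or delegates to the cited ASM literature.
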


\begin{proof} Let $\mathcal{M}$  be a recursive ASM. First of all we have to show that it satisfies the postulates in Definition \ref{def-recAlg} whereby it is a sequential recursive algorithm. 

Each rule $r$ belonging to $\mathcal{M}$, including named i/o-rules, is associated with a signature $\Sigma^r$ given by the function symbols that appear in the rule or in the rule body if the rule is a named rule. This 
together with the agents $c$ in $\AMB c \IN r$, defines the states (as sets of states of signature $\Sigma^r$, one per $r \in \mathcal{M}$) and gives the satisfaction of the abstract state postulate \ref{p-state}.

The satisfaction of the branching time postulate \ref{p-time} is an immediate consequence of the fact that for every state $S$, applying any recursive ASM rule in $S$, including named i/o-rules, yields a successor state. 

For the satisfaction of the bounded exploration postulate \ref{p-bound} we exploit that by \cite{gurevich:tocl2000}, each sequential ASM (i.e. without named i/o-rules) which appears in $\mathcal{M}$ is an nd-seq algorithm and thus satisfies the bounded exploration postulate. To extend this to the rules of $\mathcal{M}$, for each (of the finitely many) named i/o-rule we take every bounded exploration witness which appears in the rule body (including the parameters). By Definition \ref{def-CallNamedRule} these witnesses determine 
the update sets yielded by the named i/o-rule in any given state.

By the definition of recursive ASM runs (Definition~\ref{def-recAsmRun}) and of the effect of a call rule step (Definition \ref{def-CallNamedRule}), the call step postulate \ref{p-callStep} is satisfied by every recursive ASM.

As to the recursive run postulate \ref{p-recRun}, the run constraint is satisfied by the definition of recursive ASM runs (Definition \ref{def-recAsmRun}). The bounded call tree branching constraint is satisfied, because there are only finitely many named i/o-rules in each of the finitely many rules $r \in \mathcal{M}$.

It remains to show that the recursive runs of $\mathcal{M}$ can be step-for-step simulated by corresponding runs of the sequential recursive algorithm $\mathcal{M}_a$ that is induced by this interpretation of $\mathcal{M}$. This follows from the two run characterizations in Postulate \ref{p-recRun} and Definition \ref{def-recAsmRun} and from the fact that the successor relation of $\mathcal{M}_a$ is defined by the update sets which are yielded by the rules of $\mathcal{M}$ and define also the successor relation of $\mathcal{M}$.\hfill
\end{proof}

\subsection{The Characterisation Theorem}

We now show the converse of Theorem \ref{thm-plausible}. The proof largely follows the ideas underlying the proof of the sequential ASM thesis in \cite{gurevich:tocl2000}.

\begin{theorem}\label{thm-capture}

For each sequential recursive algorithm $\mathcal{R}$ in the sense of Definition \ref{def-recAlg} there exists a recursive Abstract State Machine which is equivalent to $\mathcal{R}$ with respect to recursive runs (in the sense of Definition \ref{def-recAsmRun} and Postulate \ref{p-recRun}).

\end{theorem}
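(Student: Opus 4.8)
The plan is to recover a recursive ASM from a sequential recursive algorithm $\mathcal{R}$ by applying Gurevich's construction from the sequential ASM thesis \cite{gurevich:tocl2000} separately to each i/o-component of $\mathcal{R}$, and then gluing the resulting component ASMs together through named i/o-rules so that the call-step structure of $\mathcal{R}$ is reflected. Concretely, $\mathcal{R}$ is by Definition \ref{def-recAlg} a finite set of i/o-algorithms $\mathcal{A}^1,\dots,\mathcal{A}^N$ (with $\mathcal{A}^1 = main$, say), each of which satisfies the branching time, abstract state and bounded exploration postulates. Stripped of its call steps, each $\mathcal{A}^j$ is an nd-seq algorithm, and Gurevich's theorem gives an nd-seq ASM rule $r_j$ that produces exactly the update sets $\boldsymbol{\Delta}(S)$ of $\mathcal{A}^j$ in every state $S$ (using the bounded exploration witness $W_j$, assumed subterm-closed, to read off the finitely many critical terms and hence a normal-form rule of guarded bounded-parallel-bounded-choice shape). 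The remaining work is entirely about the call steps.

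First I would treat the call steps of $\mathcal{A}^j$. By the Call Step Postulate \ref{p-callStep} together with the bounded call tree branching constraint in Postulate \ref{p-recRun}, any call step of $\mathcal{A}^j$ in a state $S$ triggers at most $m$ callees, each of which is (an instance of) some component $\mathcal{A}^{k}\in\mathcal{R}$ with its input locations set from $\Sigma^{\mathcal{A}^j}$-terms and its output locations read back into $\Sigma^{\mathcal{A}^j}$; moreover, which callees are triggered and with which input/output term bindings is, by bounded exploration applied to $\mathcal{A}^j$, determined by the critical values in $S$. Hence one can enumerate the finitely many "call patterns" and express a call step as a guarded bounded-parallel composition of named i/o-rules $t_0 \leftarrow N_k(t_1,\dots,t_{n})$, one per triggered callee, where $N_k$ is the name declaring the component rule $q_k$ built from $r_k$ (plus, recursively, its own call structure) as in the Call rule clause and Definition \ref{def-CallNamedRule}. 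The disjointness side-conditions of the Call rule (outermost symbol of $t_0$ distinct from those of the $t_i$; the split $\Sigma = \Sigma_{in}\cup\Sigma_{loc}\cup\Sigma_{out}$; the call relationship of Definition \ref{def-call-rel}) can be arranged without loss of generality because the input/output assumption and Definition \ref{def-call-rel} are exactly what the construction of the semantics in Section \ref{sect:recASM} imposes; in particular the encapsulation of local locations of distinct callees is delivered automatically by the ambient wrapper $\AMB c \IN q_k$, matching the instance mechanism ($val_S(t,a)$) used in the postulate side.

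Assembling these, I would set $\mathcal{M}$ to be the rec-ASM whose components are $r_1^\ast,\dots,r_N^\ast$, where $r_j^\ast$ is $r_j$ with each call step replaced by the corresponding guarded bounded-parallel block of named i/o-rules just described, and with $r_1^\ast$ declared main. One then checks that $\Delta_{r_j^\ast}(S) = \boldsymbol{\Delta}(S)$ for every $S$: on non-call steps this is Gurevich's theorem for $\mathcal{A}^j$; on a call step it is the content of Definition \ref{def-CallNamedRule}, since $\ASM{Call}(t_0\leftarrow N_k(\cdots))$ produces precisely the update inserting the fresh callee agent into $CalledBy(\SELF)$, initializing its state from the input terms, and recording the output location, which is exactly the update set that Postulate \ref{p-callStep} together with the explicit reformulation ($\mathcal{A}^i_{c_i}\in Called$, $Initialized(\mathcal{A}^i_{c_i},input_i)$, $Waiting(\mathcal{A}_p)$) attributes to the call step of $\mathcal{A}^j$. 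Because the two run concepts — the recursive-run constraint of Postulate \ref{p-recRun} and that of Definition \ref{def-recAsmRun} — are word-for-word the same once one identifies the $Active$/$Waiting$ predicates and the agent/instance labelling, step-for-step agreement of update sets at the level of each individual active non-waiting component lifts immediately to step-for-step agreement of recursive runs, so $\mathcal{M}$ is equivalent to $\mathcal{R}$ with respect to recursive runs.

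**Main obstacle.** The delicate point is not the sequential core — that is Gurevich verbatim — but showing that the \emph{call behaviour} of $\mathcal{A}^j$ is bounded-exploration–controlled in the right way: that the set of callees triggered, their identities among the finitely many components of $\mathcal{R}$, and the terms binding their input and output locations are all uniformly extractable from a subterm-closed finite witness, so that a call step can be written as a \emph{finite} guarded combination of named i/o-rules with the Call-rule disjointness and signature-split conditions simultaneously satisfiable. I expect this to require a careful extension of the notion of critical term to cover the input/output parameters of calls (as foreshadowed by the parenthetical "we include into input resp. output locations also variables which appear as input resp. output parameters of calls"), together with a bounded-branching argument drawing on the $m$ from Postulate \ref{p-recRun}; once that bookkeeping is in place the rest is the routine adaptation of the sequential ASM thesis.
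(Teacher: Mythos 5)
Your outline agrees with the paper's at the coarse level: Gurevich's construction for the non-call part of each component, named i/o-rules for the calls, and a lifting to recursive runs via the matching $Active$/$Waiting$ machinery. But there is a genuine gap at the one step that carries the weight of the proof, namely the correctness of the call simulation. You declare the callee's rule to be ``$q_k$ built from $r_k$ (plus, recursively, its own call structure)'' and then check only that $\ASM{Call}(t_0\leftarrow N_k(\cdots))$ reproduces the bookkeeping updates ($Called$, $CalledBy$, $Initialized$, $Waiting$). In the paper's formalization this is not enough: Lemma \ref{lem-rule} explicitly treats updates $(\ell,v_0)\in\Delta_{\mathcal{A}}(S,S^\prime)$ whose function symbol lies in $\Sigma^c_{out}$, i.e.\ the caller's update set is taken to contain the \emph{output value} produced by the callee's entire subcomputation, and producing that value correctly requires that the callee's rule already be a correct simulation of the callee's algorithm. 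Since components may call each other mutually and call themselves, ``recursively, its own call structure'' is circular as a correctness argument. The paper breaks the circularity by introducing a recursion-depth function $\text{rdepth}_{\mathcal{A}}(S)$ and inducting on the maximal nesting of calls: the base case is the nd-seq ASM thesis, and the inductive step may invoke the already-constructed rule $r_{\mathcal{A}^c}$ for the callee because its recursion depth is at most $d-1$. Your proposal contains no well-founded induction that plays this role; the implicit alternative (induction on run length, with call steps contributing only bookkeeping updates) rests on a different reading of $\boldsymbol{\Delta}_{\mathcal{A}}(S)$ than the one the paper's lemma actually uses, and you would have to argue for that reading explicitly.

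A second, smaller omission: the paper must (and does, in its Case 2) handle states where the recursion depth is undefined because a callee never reaches a final state --- the \textit{sieve of Eratosthenes} situation. Your construction never addresses non-terminating subcomputations, yet the theorem claims equivalence for all recursive runs, including these. On the other hand, the ``main obstacle'' you single out --- that the identities of the triggered callees and the terms binding their inputs and outputs must be extractable from a finite subterm-closed witness --- is real and is handled in the paper only tersely (input values of the callee are critical for the caller, the output value is critical in some state of the callee's run), so your instinct there is sound; it is just not the step where the proof's essential difficulty lies.
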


\begin{proof}
Let $\mathcal{R}$ denote any sequential recursive algorithm. Then 
for each state $S$ and a successor state $S^\prime$ in a 
recursive run of $\mathcal{R}$ we obtain by Fact \ref{fact1} an update set 
$\Delta(S,S^\prime) \in \boldsymbol{\Delta}(S)$. According to the 
Recursive Run Postulate \ref{p-recRun} each such state 
transition is defined by one step of each of finitely many $Active$ and not $Waiting$ i/o-algorithms $\mathcal{A}_i$.
Each of these i/o-algorithms is a fresh instance $\mathcal{A}_i$ of some component of $\mathcal{R}$. In particular, 
by the freshness and the independence condition in the Call 
Step Postulate \ref{p-callStep}, the instances 
$\mathcal{A}_i$ have disjoint signatures $\Sigma_i$ and yield subruns 
with states $\text{res}(S,\Sigma_i)$ and update sets 
$\Delta(\text{res}(S,\Sigma_i),\text{res}(S^\prime,\Sigma_i))$. 

Consider now any such fresh instance $\mathcal{A}_i$ of a component $\mathcal{A} \in \mathcal{R}$. 
All function symbols used by $\mathcal{A}_i$ in its 
states and update sets are copies of function symbols of 
$\mathcal{A}$, labelled to ensure the freshness 
condition of the instance. Removing these labels we obtain for any state $S$ of $\mathcal{A}$ and successor state $S^\prime$ of $\mathcal{A}$ pairs $(S,\Delta_{\mathcal{A}}(S,S^\prime))$ with $\Delta_{\mathcal{A}}(S,S^\prime) \in \boldsymbol{\Delta}_{\mathcal{A}}(S)$. Let $D_{\mathcal{A}}$ be the set of all pairs $(S,\boldsymbol{\Delta}_{\mathcal{A}}(S))$ obtained this way, for any state $S$ in the given recursive run of  $\mathcal{R}$. 

To complete the proof of the theorem it therefore suffices to show that the result of any $\mathcal{A}$-step in the given run, namely to apply an update set  $\Delta_{\mathcal{A}}(S,S^\prime) \in \boldsymbol{\Delta}_{\mathcal{A}}(S)$ to a state $S$, can be described as result of a step of a recursive ASM rule $r_{\mathcal{A}}$, namely to apply to $S$ an update set in $\boldsymbol{\Delta}_{r_{\mathcal{A}}}(S)$. This is established by the following Lemma \ref{lem-rule}.\hfill
\end{proof}

\begin{lemma}\label{lem-rule}

For each $\mathcal{A}$ there exists a recursive ASM rule $r_{\mathcal{A}}$ with $\boldsymbol{\Delta}_{r_{\mathcal{A}}}(S) = \boldsymbol{\Delta}_{\mathcal{A}}(S)$ for all states $S$ appearing in $D_{\mathcal{A}}$.

\end{lemma}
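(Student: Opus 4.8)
The plan is to follow the reconstruction used in the proof of the sequential ASM thesis in \cite{gurevich:tocl2000}, adapted to the i/o-algorithm setting. The key point is that, for a fixed component $\mathcal{A}$, the transitions recorded in $D_{\mathcal{A}}$ are exactly the transitions of an nd-seq algorithm \emph{plus} the call steps, and both kinds of behaviour are governed by a finite bounded exploration witness. So I would first isolate the non-call steps of $\mathcal{A}$ and reconstruct an ordinary nd-seq ASM for them by the classical argument, and then add a named i/o-rule guarded appropriately to account for the call steps.

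First I would fix a bounded exploration witness $W$ of $\mathcal{A}$, closed under subterms, and let $CT$ denote the (finite) set of critical terms. For the non-call steps I would reuse Gurevich's construction: for each update $((f,(v_1,\dots,v_n)),v_0)$ occurring in some $\Delta_{\mathcal{A}}(S,S^\prime)\in\boldsymbol{\Delta}_{\mathcal{A}}(S)$ the values $v_i$ are critical (Fact \ref{fact2} and the discussion following it), hence represented by critical terms; using the finitely many Boolean combinations of equalities $t=t^\prime$ among critical terms as guards, one builds a rule that produces exactly the right update set in every state that coincides with a given one on $W$. The only subtlety is non-determinism: since $\boldsymbol{\Delta}_{\mathcal{A}}(S)$ may contain several update sets, the top-level rule is a bounded choice over the finitely many ``types'' of states determined by $W$, with each branch a bounded-parallel rule of guarded assignments, exactly as in the treatment of nd-seq ASMs in \cite[Sect.7.2.2--7.2.3]{boerger:2003}. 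Each branch is finite because critical values are finite in number (Fact \ref{fact2}) and the bounded call tree branching constraint of Postulate \ref{p-recRun} bounds how many calls a single step may issue.

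Next I would handle the call steps. By the Call Step Postulate \ref{p-callStep}, in a state $S$ where $\mathcal{A}$ performs a call, the update set $\Delta_{\mathcal{A}}(S,S^\prime)$ is exactly the one produced by $\ASM{Call}(t_0\leftarrow N(t_1,\dots,t_n))$ of Definition \ref{def-CallNamedRule}: it updates the input locations of the fresh child instance to the caller-evaluated values, records the callee as $CalledBy(\SELF)$, initialises it, and puts the caller into $Waiting$ mode. Since there are only finitely many distinct call sites (bounded by $W$ together with the bounded-branching constraint), I would include the corresponding finitely many named i/o-rules $t_0^{(j)}\leftarrow N^{(j)}(\bar t^{(j)})$ as further alternatives in the top-level bounded choice, each guarded by the $W$-type that triggers that call. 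Because the signatures $\Sigma^{\mathcal{A}^c}_{in},\Sigma^{\mathcal{A}^c}_{out}$ of the children are, by the call relationship in Definition \ref{def-call-rel}, already part of $\Sigma^{\mathcal{A}}$, and the outermost-symbol disjointness condition in the Call rule is met by the input/output assumption, these named i/o-rules are well-formed recursive ASM rules and, by the semantics in Definition \ref{def-CallNamedRule}, yield precisely the singleton set $\{\Delta_{\mathcal{A}}(S,S^\prime)\}$. Assembling all the branches gives the rule $r_{\mathcal{A}}$, and by construction $\boldsymbol{\Delta}_{r_{\mathcal{A}}}(S)=\boldsymbol{\Delta}_{\mathcal{A}}(S)$ on all states of $D_{\mathcal{A}}$.

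The main obstacle I expect is the bookkeeping around the call steps: one must verify that the update set a call step contributes to $\Delta_{\mathcal{A}}(S,S^\prime)$ — as seen from the caller's signature $\Sigma^{\mathcal{A}}$ after removing instance labels — coincides exactly with what $\ASM{Call}$ computes, including the handling of $pgm$, $CalledBy$, $Active$/$Waiting$ and the $Initialize$ step, and that the critical terms appearing inside the call parameters are covered by $W$ so that coincidence on $W$ really does determine which call (and with which argument values) is performed. This is where the input/output assumption and the precise phrasing of the call relationship do the work, so the proof of the lemma amounts to checking, case by case over the shape of a recursive-ASM-free description of $\mathcal{A}$'s steps, that the two update-set functions agree — a direct but careful adaptation of the sequential argument, with the call case as the genuinely new ingredient.
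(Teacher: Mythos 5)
Your reconstruction of the non-call part is essentially the paper's (and Gurevich's): critical values, guards built from (in)equalities among critical terms, one guarded branch per $W$-similarity class. But there is a genuine gap in how you treat the calls, and it is exactly where the new work of this lemma lies. You write that the named i/o-rules, ``by the semantics in Definition \ref{def-CallNamedRule}, yield precisely the singleton set $\{\Delta_{\mathcal{A}}(S,S^\prime)\}$'', but $\ASM{Call}$ only produces the \emph{initialisation} updates (new agent, $pgm$, $CalledBy$, $\ASM{Initialize}$); the update $(\ell,v_0)$ with $f\in\Sigma^c_{out}$ that you need to reproduce is written only at the \emph{end of the callee's run}. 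So to justify the named rule you must (a) say what its declared body $q$ is, and (b) prove that $q$ correctly simulates the child algorithm over its entire subrun. Since $\mathcal{A}$ may call an instance of itself, taking $q = r_{\mathcal{A}}$ is circular unless the construction is made well-founded. The paper resolves this by introducing a recursion-depth function $\text{rdepth}_{\mathcal{A}}(S)$ (defined by induction on the call tree) and proving the lemma by induction on $d=\max_S \text{rdepth}_{\mathcal{A}}(S)$: the body of the named rule is the rule $r_{\mathcal{A}^c}$ supplied by the induction hypothesis at depth $\le d-1$. Your proposal contains no analogue of this induction, and without it the case analysis does not close. You also omit the paper's Case 2, where $\text{rdepth}_{\mathcal{A}}(S)$ is undefined because some callee never terminates (the \textit{sieve of Eratosthenes} situation); the paper handles this separately with a call rule whose callee runs forever.

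Two smaller points. First, your top-level combinator should be \texttt{PAR} over the guarded branches, not \texttt{CHOOSE}: with the update-set semantics given in Section \ref{sect:recASM}, a \texttt{CHOOSE} over rules \texttt{IF} $\varphi_i$ \texttt{THEN} $r_i$ contributes $\{\emptyset\}$ from every branch whose guard is false, so $\boldsymbol{\Delta}_{r_{\mathcal{A}}}(S)$ would spuriously contain the empty update set; the paper's \texttt{PAR} of mutually exclusive guarded branches avoids this, with the non-determinism placed inside each branch as a bounded choice over successor states. Second, the number of call alternatives is bounded not by $W$ plus the branching constraint but simply by the finiteness of $W_{\mathcal{A}}$-similarity classes and of the update sets per state (Fact \ref{fact2}); this part of your argument is fine in spirit but should be routed through the similarity-class machinery rather than through Postulate \ref{p-recRun}.
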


\begin{proof}
We choose a fixed bounded exploration witness $W_{\mathcal{A}}$ for each $\mathcal{A} \in \mathcal{R}$.

First we show that the argument values of any location $\ell$  in an update of 
$\mathcal{A}$ in any state $S$ are critical values in $S$. The
proof uses the same argument as in \cite[Lemma~6.2]{gurevich:tocl2000}.

To show the property consider an arbitrary update set 
$\Delta_{\mathcal{A}}(S,S^\prime) \in \boldsymbol{\Delta}_{\mathcal{A}}(S)$ and let 
$(\ell,v_0) \in \Delta_{\mathcal{A}}(S,S^\prime)$ be an update at location $\ell = (f,(v_1,\dots,v_n)$. We show that the assumption that $v_i$ is not a critical value leads to a contradiction.

If $v_i$ is not a critical value, one can create a new structure $\hat{S}$ by swapping $v_i$ with a fresh value $w$ not appearing in $S$ (e.g. $w$ taken from the $Reserve$ set), so $\hat{S}$ is a state of $\mathcal{A}$. As $v_i$ is assumed to not being critical, we must have $val_S(t) = \text{val}_{\hat{S}}(t)$ for all terms $t \in W_\mathcal{A}$. Therefore it follows from the bounded exploration postulate that $\boldsymbol{\Delta}_{\mathcal{A}}(S) = \boldsymbol{\Delta}_{\mathcal{A}}(\hat{S})$. This implies that the update $((f,(v_1,\dots,v_n)),v_0)$ appears in at least one update set in $\Delta_{\mathcal{A}}(\hat{S})$ (since $(\ell,v_0) \in \Delta_{\mathcal{A}}(S,S^\prime) \in \boldsymbol{\Delta}_{\mathcal{A}}(S)$), contradicting the fact that $v_i$ does not occur in $\hat{S}$ and thus cannot occur in an update set created in this state. 
\medskip

Furthermore, for each pair $(S,\boldsymbol{\Delta}_{\mathcal{A}}(S)) \in D_{\mathcal{A}}$ we have a {\em recursion depth} function which indicates the maximal nesting of recursive calls performed starting in state $S$ to compute the value of an output location in a possible successor state $S^\prime$. The function is defined inductively as follows (induction on the call tree), taking the maximum over all successor states $S^\prime$ of $S$ and over all updates leading from $S$ to $S^\prime$.

\begin{itemize}

\item $\text{rdepth}_{\mathcal{A}}(S) = \max \{ \text{rdepth}_{\mathcal{A}}(S,S^\prime)) \mid \Delta_{\mathcal{A}}(S,S^\prime) \in \boldsymbol{\Delta}_{\mathcal{A}}(S) \}$
	
\item $\text{rdepth}_{\mathcal{A}}(S,S^\prime)) = \max \{ \text{depth}(\ell,v) \mid (\ell,v) \in \Delta_{\mathcal{A}}(S,S^\prime) \}$
	
\item  $\text{depth}(\ell,v)$ (with $\ell = (f,(v_1,\dots,v_n))$) defined as follows:
	
\begin{itemize}
		
\item Case 1: $f$ is an output function symbol of a terminating recursive subcomputation started in $S$ and leading to $S^\prime$. 

Formally this means that for some callee $\mathcal{A}^c$ just activated in the run by $\mathcal{A}$, the restriction  
		$\text{res}(S,\Sigma^c)$ of $S$ to the signature
		of  $\mathcal{A}^c$ is an initial state $S_0^c$
		of a terminating run  $S_0^c, S_1^c, \dots, S_k^c$ of the callee, during which $\mathcal{A}$ remains waiting, 
		and such that 
		\begin{itemize}
			\item 
		the callee receives in state $S$ the input from the caller, expressed by the equation $\text{res}(S_0^c,\Sigma^c_{in}) = \text{res}(S,\Sigma^c_{in})$, 
		\item the caller receives in state $S^\prime$ the callee's output in the callee's final state, formally $\text{res}(S^\prime,\Sigma^c_{out})$ $= \text{res}(S_k^c,\Sigma^c_{out})$ 
		for the final state $S_k^c$ for $\mathcal{A}^c$ 
	    \end{itemize}
        and $f \in \Sigma^c_{out}$. 
		
		Then the depth of the update is defined as successor of the maximal recursion depth between any two successive states in the subcomputation, formalized by the equation  $\text{depth}(\ell,v) = \max \{ \text{rdepth}_{\mathcal{A}^c}(S_i^c,S_{i+1}^c) \mid 0 \le i \le k-1 \} + 1$.
		
\item Case 2: Otherwise. Then we define $\text{depth}(\ell,v) = 0$.
		
\end{itemize}
	
\end{itemize}

\noindent
We now proceed by a case distinction for $\text{rdepth}_{\mathcal{A}}(S)$.

\paragraph{\bf Case 1:}

$\text{rdepth}_{\mathcal{A}}(S)$ is defined for all states $S \in \mathcal{S}_{\mathcal{A}}$. In this case we proceed by induction over $d = \max \{ \text{rdepth}_{\mathcal{A}}(S) \mid S \in \mathcal{S}_{\mathcal{A}} \}$. The base case is de facto the proof of the non-deterministic sequential ASM thesis.

\paragraph{\bf Induction Base:}

Let $d=0$. First we construct for every state $S$ with $\text{rdepth}_{\mathcal{A}}(S) = 0$ a (sequential ASM) rule $r_{\mathcal{A},S}$ whose application to $S$ yields the updates sets defined for $S$ by $\mathcal{A}$, formally such that $\boldsymbol{\Delta}_{r_{\mathcal{A},S}}(S) = \boldsymbol{\Delta}_{\mathcal{A}}(S)$, and the same for all states $\hat{S}$ that are `similar' to $S$ (as defined below) wrt the bounded exploration witness $W_{\mathcal{A}}$. 

To show this let $S$ be a state with $\text{rdepth}_{\mathcal{A}}(S) = 0$ and let $S^\prime$ be any successor state, resulting from $S$ by applying the updates in
$\Delta_{\mathcal{A}}(S,S^\prime) \in \boldsymbol{\Delta}_{\mathcal{A}}(S)$. Consider any
such update $(\ell,v_0) \in \Delta_{\mathcal{A}}(S,S^\prime)$ at location $\ell = (f,(v_1,\dots,v_n)$. As all $v_i$ are critical values and there is no child with $f \in \Sigma^c_{out}$, there exist terms $t_i \in W_{\mathcal{A}}$ with $\text{val}_S(t_i) = v_i$. Thus, the assignment rule $f(t_1,\dots,t_n) := t_0$ yields the given update in state $S$ and
the parallel composition $r_{\mathcal{A},S,S^\prime}$ of all these assignment rules for every update in $\Delta_{\mathcal{A}}(S,S^\prime)$  yields in $S$ the update set $\Delta_{\mathcal{A}}(S,S^\prime)$. Therefore the bounded choice composition of these rules for all successor states $S^\prime$ defines a rule $r_{\mathcal{A},S}$ with $\boldsymbol{\Delta}_{r_{\mathcal{A},S}}(S) = \boldsymbol{\Delta}_{\mathcal{A}}(S)$.

Next we extend $\boldsymbol{\Delta}_{r_{\mathcal{A},S}}(S) = \boldsymbol{\Delta}_{\mathcal{A}}(S)$ from $S$ to all states which are `similar' to $S$ with respect to the bounded exploration witness $W_{\mathcal{A}}$.\footnote{These cases are captured in Lemmata 6.7, 6.8. and 6.9 in \cite{gurevich:tocl2000}} 

\begin{enumerate}

\item First, we show the equation  $\boldsymbol{\Delta}_{r_{\mathcal{A},S}}(\hat{S}) = \boldsymbol{\Delta}_{\mathcal{A}}(\hat{S})$ for every state $\hat{S}$ which coincides with $S$ on $W_{\mathcal{A}}$. In fact, if $S$ and $\hat{S}$ coincide on $W_{\mathcal{A}}$, then $\boldsymbol{\Delta}_{\mathcal{A}}(S) = \boldsymbol{\Delta}_{r_{\mathcal{A},S}}(S) =\boldsymbol{\Delta}_{r_{\mathcal{A},S}}(\hat{S})$ holds because the rule $r_{\mathcal{A},S}$ only uses terms in $W_{\mathcal{A}}$, which have the same values in $S$ and $\hat{S}$. We further have $\boldsymbol{\Delta}_{\mathcal{A}}(S) = \boldsymbol{\Delta}_{\mathcal{A}}(\hat{S})$ due to the bounded exploration postulate. These equations together give $\boldsymbol{\Delta}_{r_{\mathcal{A},S}}(\hat{S}) = \boldsymbol{\Delta}_{\mathcal{A}}(\hat{S})$.

\item Second, we show that $\boldsymbol{\Delta}_{r_{\mathcal{A},S}}(S_1) = \boldsymbol{\Delta}_{\mathcal{A}}(S_1)$ carries over to isomorphic states $S_2$.  Let $S_1, S_2$ be isomorphic states for which $\boldsymbol{\Delta}_{r_{\mathcal{A},S}}(S_1) = \boldsymbol{\Delta}_{\mathcal{A}}(S_1)$ is true. Let $\zeta$ be the isomorphism with $\zeta S_2 = S_1$. Then we have  $\zeta \boldsymbol{\Delta}_{r_{\mathcal{A},S}}(S_2) = \boldsymbol{\Delta}_{r_{\mathcal{A},S}}(S_1)$ (because all ASMs satisfy the Abstract State Postulate) and also $\zeta \boldsymbol{\Delta}_{\mathcal{A}}(S_2) = \boldsymbol{\Delta}_{\mathcal{A}}(S_1)$  (by the Abstract State Postulate). These equations together give $\zeta \boldsymbol{\Delta}_{r_{\mathcal{A},S}}(S_2) = \zeta \boldsymbol{\Delta}_{\mathcal{A}}(S_2)$ and hence also $\boldsymbol{\Delta}_{r_{\mathcal{A},S}}(S_2) = \boldsymbol{\Delta}_{\mathcal{A}}(S_2)$.

\item Third, we conclude from (i) and (ii) that the equation $\boldsymbol{\Delta}_{r_{\mathcal{A},S}}(\hat{S}) = \boldsymbol{\Delta}_{\mathcal{A}}(\hat{S})$ holds for  
every state $\hat{S}$ that is $W_{\mathcal{A}}$-similar to $S$. We define states $S_1, S_2$ to be {\em $W_{\mathcal{A}}$-similar} iff they identify the same critical  terms, i.e. formally iff $val_{S_1}(t_1) = val_{S_1}(t_2)  \Leftrightarrow val_{S_2}(t_1) = val_{S_2}(t_2)$ holds for all terms $t_1,t_2 \in W_{\mathcal{A}}$.
No let $\hat{S}$ be any state that is $W_{\mathcal{A}}$-similar to $S$. We can assume without loss of generality that $S$ and $\hat{S}$ are disjoint. 

In fact, if this is not the case consider a state $\ddot{S}$ isomorphic to $\hat{S}$, in which each value that appears also in $S$ is replaced by a fresh one. Then $\ddot{S}$ is disjoint from $S$ and by construction $W_{\mathcal{A}}$-similar to $\hat{S}$, hence also $W_{\mathcal{A}}$-similar to $S$. 

Now define a structure $S^*$ isomorphic to $\hat{S}$ by replacing $\text{val}_{\hat{S}}(t)$ by  $val_S(t)$ for all $t \in W_{\mathcal{A}}$. The definition of $S^*$ is consistent because since $S$ and $\hat{S}$ are $W_{\mathcal{A}}$-similar, $\text{val}_S(t_1) = \text{val}_S(t_2) \Leftrightarrow \text{val}_{\hat{S}}(t_1) = \text{val}_{\hat{S}}(t_2)$ holds for all terms $t_1, t_2 \in W_{\mathcal{A}}$. Since $S$ and $S^*$ coincide on $W_{\mathcal{A}}$, we obtain by (i) that
$\boldsymbol{\Delta}_{r_{\mathcal{A},S}}(S^*) = \boldsymbol{\Delta}_{\mathcal{A}}(S^*)$ which by (ii) implies $\boldsymbol{\Delta}_{r_{\mathcal{A},S}}(\hat{S}) = \boldsymbol{\Delta}_{\mathcal{A}}(\hat{S})$.

\end{enumerate}

To complete the proof for the induction base we exploit that $W_{\mathcal{A}}$ is finite, hence there are only finitely many partitions of $W_{\mathcal{A}}$ and only finitely many $W_{\mathcal{A}}$-similarity classes $[S_i]_{W_{\mathcal{A}}}$ (say $i=1,\dots,m$). For each such class we define a formula $\varphi_i$ such that for each state $S$ this formula evaluates in $S$ to true if and only if $S \in [S_i]_{W_{\mathcal{A}}}$ holds.  $\varphi_i$ formalizes the similarity type of $S_i$ by the conjunction of all equations $t_1=t_2$ with $val_{S_i}(t_1)=val_{S_i}(t_2)$ and all inequalities $ t_1 \not= t_2$ with $val_{S_i}(t_1)\not =val_{S_i}(t_2)$
for all critical terms $t_1, t_2\in W_{\mathcal{A}}$. Then we can define the rule $r_{\mathcal{A}}$ as follows:

\begin{quote}
\texttt{PAR} (\texttt{IF} $\varphi_1$ \texttt{THEN} $r_{\mathcal{A},S_1}$) $\|$ \dots $\|$ (\texttt{IF} $\varphi_m$ \texttt{THEN} $r_{\mathcal{A},S_m}$)
\end{quote}

\paragraph{\bf Induction Step:}

Let $d>0$. For a state $S$ with $\text{rdepth}_{\mathcal{A}}(S) = 0$ we proceed as in the base case to construct a rule $r_{\mathcal{A},S}$ such that $\boldsymbol{\Delta}_{r_{\mathcal{A},S}}(\hat{S}) = \boldsymbol{\Delta}_{\mathcal{A}}(\hat{S})$ holds for all states $\hat{S}$ that are $W_{\mathcal{A}}$-similar to $S$. For the case that $\text{rdepth}_{\mathcal{A}}(S) > 0$ we construct such a rule as follows.

Let $(\ell,v_0) \in \Delta_{\mathcal{A}}(S,S^\prime) \in \boldsymbol{\Delta}_{\mathcal{A}}(S)$ be any update at location $\ell = (f,(v_1,\dots,v_n))$. Then we have two cases:

\begin{enumerate}

\item If there is no child algorithm $\mathcal{A}^c$ with $f \in \Sigma^c_{out}$, then we argue as in the base case, i.e. as all $v_i$ are critical values, there exist terms $t_i \in W_{\mathcal{A}}$ with $\text{val}_S(t_i) = v_i$, and thus the assignment rule $f(t_1,\dots,t_n) := t_0$ produces the given update in state $S$. 

\item If we have $f \in \Sigma^c_{out}$ for some child algorithm $\mathcal{A}^c$, i.e. $f$ is an output function symbol of an algorithm that is called by $\mathcal{A}$, then according to our assumption on call relationships, locations with such function symbols never appear in an update set created by $\mathcal{A}$ itself, so the update results from a final state of a run of $\mathcal{A}^c$. Let this run be $S_0^c, S_1^c, \dots$ with final state $S^c_k$.

Then we must have $\max \{ \text{rdepth}_{\mathcal{A}^c}(S) \mid S \in \mathcal{S}_{\mathcal{A}^c} \} \le d-1$, so we can apply the induction hypothesis. That is, there exists an ASM rule $r_{\mathcal{A}^c}$ with $\boldsymbol{\Delta}_{r_{\mathcal{A}^c}}(S_i^c) = \boldsymbol{\Delta}_{\mathcal{A}^c}(S^c_i)$ for all $0 \le i \le k$. 

\begin{itemize}

\item As the values $v_i$ for $i=1,\dots,n$ are critical in $S$, we find terms $\hat{t}_i \in W_{\mathcal{A}}$ with $\text{val}_S(\hat{t}_i) = v_i$.

\item As $v_0$ results from an update made by $\mathcal{A}^c$, it is critical in $S^c_\ell$ for some $\ell < k$, so there must exist a term $t_0 \in W_{\mathcal{A}^c}$ with $\text{val}_{S^c_\ell}(t_0) = v_0$.

\item The initial state $S^c_0$ is defined by values $v_i^\prime$ ($i=1,\dots,m$) of input locations, which are critical for $S^c_0$, which gives rise to terms $t_i^\prime \in W_{\mathcal{A}^c}$ with $\text{val}_{S^c_0}(t_i^\prime) = v_i^\prime$.

\item As the input values $v_i^\prime$ for $\mathcal{A}^c$ have been produced by updates made by $\mathcal{A}$ we further find terms $t_i \in W_{\mathcal{A}}$ with $\text{val}_S(t_i) = v_i^\prime$.

\end{itemize}

Using a name $N$ for the rule of $\mathcal{A}^c$ we obtain a named rule $f(\hat{t}_1,\dots,\hat{t}_n) \leftarrow N(t_1^\prime,\dots,t_m^\prime)$. Furthermore, according to the definition of update sets produced by call rules, we see that the call $t_0 \leftarrow N(t_1,\dots,t_m)$ produces the update $((f,(v_1,\dots,v_n),v_0)$.

\end{enumerate}

Again the parallel composition of all the assignment and call rules (for every update in $\Delta_\mathcal{A}(S,S^\prime)$) yields in $S$ the update set $\Delta_\mathcal{A}(S,S^\prime)$.The bounded choice composition of these rules for all successor states $S^\prime$ defines a rule $r_{\mathcal{A},S}$ with $\boldsymbol{\Delta}_{r_{\mathcal{A},S}}(S) = \boldsymbol{\Delta}_{\mathcal{A}}(S)$.

Using again the same arguments as in (i), (ii), and (iii) for the base case we get $\boldsymbol{\Delta}_{r_{\mathcal{A},S}}(\hat{S}) = \boldsymbol{\Delta}_{\mathcal{A}}(\hat{S})$ for all states $\hat{S}$ that are $W_{\mathcal{A}}$-similar to $S$, and exploiting the finiteness of bounded exploration witnesses we obtain again the rule $r_{\mathcal{A}}$ with the required property, i.e. $\boldsymbol{\Delta}_{r_{\mathcal{A}}}(S) = \boldsymbol{\Delta}_{\mathcal{A}}(S)$ for all states $S$.

\paragraph{\bf Case 2:} 

Assume that $\text{rdepth}_{\mathcal{A}}(S)$ is not defined for all states $S \in \mathcal{S}_{\mathcal{A}}$. 

For states $S$ for which $\text{rdepth}_{\mathcal{A}}(S)$ is defined we use the same construction as above to obtain a rule $r_{\mathcal{A},S}$ with $\boldsymbol{\Delta}_{r_{\mathcal{A},S}}(S) = \boldsymbol{\Delta}_{\mathcal{A}}(S)$.

Now take a state $S$, for which $\text{rdepth}_{\mathcal{A}}(S)$ is not defined. Then there exists a child algorithm $\mathcal{A}^c$ with an initial state $S^c_0$ initiated by $\mathcal{A}$, but no run starting in $S^c_0$ leads to a final state. Then the input values define terms $t_i$ and $t_i^\prime$ for $i=1,\dots,m$ in the same way as for the construction of the call rule above. We can use arbitrary critical terms $t_0$, $\hat{t}_i$ ($i=1,\dots,n$) and a named rule $f(\hat{t}_1,\dots,\hat{t}_n) \leftarrow N(t_1^\prime,\dots,t_m^\prime)$ as before, then the call rule $t_0 \leftarrow N(t_1,\dots,t_m)$ will lead to the run of $\mathcal{A}^c$ without final state.

The rule $r_{\mathcal{A}}$ with the required property, i.e. $\boldsymbol{\Delta}_{r_{\mathcal{A}}}(S) = \boldsymbol{\Delta}_{\mathcal{A}}(S)$ for all states $S$, then results by applying again the same arguments as above.\hfill
\end{proof}

\section{Examples}\label{sec:bsp}

We now present three simple examples of sequential recursive algorithms, \textit{mergesort\/}, \textit{quicksort\/} and the \textit{sieve of Eratosthenes\/}. These algorithms will be specified by recursive ASMs, which we use to illustrate the concepts in our axiomatisation. Furthermore, we show that sequential recursion can already be expressed by ASMs, as these support unbounded parallelelism. We illustrate this for the first two selected algorithms. This shows that unbounded parallelism, which is a decisive feature of ASMs, is much stronger than sequential recursion, and there is no need to separately investigate recursive parallel algorithms.

\subsection{Mergesort}
\label{sect:mergesort}

We first give a specification of a recursive ASM comprising two named rules {\em sort} (the main rule) and {\em merge}.

\begin{tabbing}
xxx\=xxxxx\=xxxxx\=xxxxx\=xxxxx\=xxxxx\=xxxxx\= \kill
\> sorted\_list $\leftarrow$ \textit{sort\/}(unsorted\_list) = \\
\>\>\texttt{IF} sorted\_list = \textit{undef\/} \texttt{THEN} \\
\>\> \texttt{LET} $n$ = length(unsorted\_list) \\
\>\> \texttt{IN} \> \texttt{PAR} \\
\>\>\> (\texttt{IF} $n \le 1$ \texttt{THEN} sorted\_list := unsorted\_list ) $\|$ \\
\>\>\> (\texttt{IF} $n > 1 \wedge \text{sorted\_list}_1 = \textit{undef\/} \wedge \text{sorted\_list}_2 = \textit{undef\/}$ \\
\>\>\> \ \texttt{THEN} \> \texttt{LET} $\text{list}_1 = \mathbf{I} L . \text{length}(L) = \lfloor \frac{n}{2} \rfloor \wedge \exists L^\prime . \text{concat}(L,L^\prime) = \text{unsorted\_list}$ \\
\>\>\>\> \texttt{IN} \> \texttt{LET} $\text{list}_2 = \mathbf{I} L^\prime . \text{concat}(\text{list}_1,L^\prime) = \text{unsorted\_list}$ \\
\>\>\>\>\> \texttt{IN} \> \texttt{PAR} \\
\>\>\>\>\>\> (sorted\_list$_1 \leftarrow$ \textit{sort\/}(list$_1$) $\|$ \\
\>\>\>\>\>\> \ sorted\_list$_2 \leftarrow$ \textit{sort\/}(list$_2$))  $\|$ \\
\>\>\> (\texttt{IF} $n > 1 \wedge \text{sorted\_list}_1 \neq \textit{undef\/} \wedge \text{sorted\_list}_2 \neq \textit{undef\/}$ \\
\>\>\> \ \texttt{THEN} \> sorted\_list $\leftarrow$ \textit{merge\/}(\text{sorted\_list}$_1$,\text{sorted\_list}$_2$))
\end{tabbing}

Here we used terms of the form $\mathbf{I} x . \varphi$ with a variable $x$ and a formula $\varphi$, in which $x$ is free to denote {\em the unique value $x$ satisfying $\varphi$}\footnote{In the cases above we could have used equivalently $@ x . \varphi$ denoting an {\em arbitrary value $x$ satisfying $\varphi$}, but emphasising that the value exists and is indeed unique makes the specification clearer. Both kinds of terms were originally introduced by David Hilbert using $\epsilon$ instead of $@$ and $\iota$ instead of $\mathbf{I}$. The notation $@$ reflects the common use of \texttt{ANY} ($@$) in rigorous methods and  $\mathbf{I}$ comes from Fourman's formalisation of higher-order intuitionistic logic.}.

\begin{tabbing}
xxx\=xxxxx\=xxxxx\=xxxxx\=xxxxx\=xxxxx\=xxxxx\= \kill
\> merged\_list $\leftarrow$ \textit{merge\/}(inlist$_1$,inlist$_2$) = \\
\>\>\texttt{IF} merged\_list = \textit{undef\/} \texttt{THEN} \\
\>\> \texttt{PAR} \\
\>\> (\texttt{IF} inlist$_1 = []$ \texttt{THEN} merged\_list := inlist$_2$)  $\|$ \\
\>\> (\texttt{IF} inlist$_2 = []$ \texttt{THEN} merged\_list := inlist$_1$)  $\|$ \\
\>\> (\texttt{IF} $\text{inlist}_1 \neq [] \wedge \text{inlist}_2 \neq []$ \\
\>\> \ \texttt{THEN} \> \texttt{LET} $x_1 = \text{head}(\text{inlist}_1)$ \texttt{IN} \\
\>\>\> \texttt{LET} $\text{restlist}_1 = \text{tail}(\text{inlist}_1)$ \texttt{IN} \\
\>\>\> \texttt{LET} $x_2 = \text{head}(\text{inlist}_2)$ \texttt{IN} \\
\>\>\> \texttt{LET} $\text{restlist}_2 = \text{tail}(\text{inlist}_2)$ \texttt{IN}\\
\>\>\>\> \texttt{PAR} \\
\>\>\>\> (\texttt{IF} $x_1 \le x_2 \wedge \text{merged\_restlist} = \textit{undef\/}$ \\
\>\>\>\> \ \texttt{THEN} merged\_restlist $\leftarrow$ \textit{merge\/}(restlist$_1$, inlist$_2$)) $\|$ \\
\>\>\>\> (\texttt{IF} $x_1 > x_2 \wedge \text{merged\_restlist} = \textit{undef\/}$ \\
\>\>\>\> \ \texttt{THEN} merged\_restlist $\leftarrow$ \textit{merge\/}(inlist$_1$, restlist$_2$)) $\|$ \\
\>\>\>\> (\texttt{IF} $x_1 \le x_2 \wedge \text{merged\_restlist} \neq \textit{undef\/}$ \\
\>\>\>\> \ \texttt{THEN} merged\_list := concat($[x_1]$,merged\_restlist)) $\|$ \\
\>\>\>\> (\texttt{IF} $x_1 > x_2 \wedge \text{merged\_restlist} \neq \textit{undef\/}$ \\
\>\>\>\> \ \texttt{THEN} merged\_list := concat($[x_2]$,merged\_restlist))
\end{tabbing}

The root of the call tree is labelled by \textit{sort\/}. Every node labelled by \textit{sort\/} has three children labelled by instances of \textit{sort\/}, \textit{sort\/} and \textit{merge\/}, respectively. A node labelled by \textit{merge\/} has two children, both labelled by \textit{merge\/}.

Since in a run of a sequential recursive algorithm, in each step only finitely many algorithms are executed at the same time and these do not stand in an ancestor relationship, we can in fact instead of using copies of the algorithms use directly copies of the locations and run all these part-algorithms in parallel. They can only make a step, if their input has been defined. The parallelism is then unbounded, but in each step only finitely many parallel branches are executed. We present such a specification using parallel ASMs for the mergesort algorithm, using sequences of indeces $0,1,2$ for the parameters. The original input we are interested in is $unsorted\_list([~])$, the computed output of interest is $sorted\_list([~])$.\footnote{We use the vertical notation of $r_i$ instead of \texttt{PAR} $r_1 \| \dots \| r_n$.}

\begin{asm}
\FORALL I \in \{0,1,2\}^* \DO \+
\IF sorted\_list(I)= ~ \textit{undef\/} ~ 
     \AND unsorted\_list(I) \not = ~\textit{undef\/} ~ \THEN \+
  \LET n = length(unsorted\_list(I)) \\
  \IF n \leq 1 \THEN  sorted\_list(I):= unsorted\_list(I) \\
  \IF n>1 \THEN \+
     \ASM{SplitList}(I) \\
     \IF SortedSubLists(I) \THEN \+
         \IF SubListsToBeMerged(I)\+
           \THEN ~ \ASM{Merge}(I,sorted\_list(I0),sorted\_list(I1))\\
            \ELSE ~ \ASM{OutputSortingResult}(I)\dec\dec\dec\-     
\end{asm}

The three submachines are defined as follows:
\begin{asm}
	\ASM{SplitList}(I) =\+
   \IF unsorted\_list(I0)  = ~\textit{undef\/} =unsorted\_list(I1) \THEN \+
     \LET (l_0,l_1) \WITH \+
        length(l_0)= \lfloor \frac{n}{2} \rfloor \AND 
           unsorted\_list(I)  = concat(l_0,l_1) \IN \+
               unsorted\_list(I0):=l_0 \\
               unsorted\_list(I1):=l_1\dec \dec\dec\-
\WHERE \+
SortedSubLists(I) \IFF \+
 sorted\_list(I0) \not = ~ \textit{undef\/} ~
     \AND sorted\_list(I1) \not =  ~\textit{undef\/}\-
SubListsToBeMerged(I) \IFF  merged\_list(I)= ~\textit{undef\/} \\
\ASM{OutputSortingResult}(I)=
(sorted\_list(I):=merged\_list(I))
\end{asm}

\begin{asm}
\ASM{Merge}(I,l_1,l_2)=\+
   \IF l_1=[~] \THEN merged\_list(I):=l_2 \\
   \ELSE ~ \IF l_2=[~] \THEN merged\_list(I):=l_1\+
      \ELSE \+
         \ASM{InitializeInputLists}(I,l_1,l_2)\\
         \ASM{InitializeRestLists}(I,l_1,l_2)\\
         \ASM{OneMergeStep}(I) \dec\dec\-
\WHERE \+
\ASM{InitializeInputLists}(I,l_1,l_2)=\+
   \IF inlist_1(I)=inlist_2(I)=~\textit{undef\/} ~\texttt{THEN} \+
       inlist_1(I):=l_1 \\
       inlist_2(I):=l_2 \dec\-
 \ASM{InitializeRestLists}(I,l_1,l_2)=\+
    \IF restlist_1(I)=restlist_2(I)=~\textit{undef\/} ~\THEN \+
       restlist_1(I):=tail(l_1)\\
       restlist_2(I):=tail(l_2) \dec\-
\ASM{OneMergeStep}(I)=\+
   \IF inlist_1(I)\not = [~] \not = inlist_2(I) \AND 
        restlist_1(I) \not =~\textit{undef\/} \not =restlist_2(I)
        \THEN \+
        \LET x_1=head(inlist_1(I)),  x_2=head(inlist_2(I))\+
          \IF x_1 \leq x_2 \THEN \+
             \IF merged\_list(I2)=~\textit{undef\/}\+
                \THEN \+
                   inlist_1(I2):=restlist_1(I)\\
                   inlist_2(I2):=inlist_2(I)\-
                \ELSE merged\_list(I):=concat([x_1],merged\_list(I2))\dec\-
          \IF x_1>x_2 \THEN \+
              \IF merged\_list(I2)=~\textit{undef\/}\+
              \THEN \+
                 inlist_1(I2):=inlist_1(I)\\
                 inlist_2(I2):=restlist_2(I)\-
              \ELSE merged\_list(I):=concat([x_2],merged\_list(I2))           
\end{asm}

\begin{remark*}
Note that in this way every sequential recursive algorithms can be captured by an ASM as defined in \cite{boerger:2003}, i.e. exploiting \texttt{FORALL} rules. This shows that the concept of unbounded parallelism that is decisive for ASMs and is theoretically founded by the behavioural theory of parallel algorithms \cite{ferrarotti:tcs2016} covers the needs of sequential recursive algorithms.\footnote{A reviewer objected that we cannot describe `recursive algorithms with unbounded number of callees'. In fact one cannot do this with sequential recursive ASM because they satisfy the Bounded Exploration and the Bounded Call Tree Branching Postulates. But one can do it using ASMs with unbounded \texttt{FORALL}. Similarly, the reviewer's question how to define `recursive symmetry-braking algorithms' can be answered by using ASMs with unbounded choice.} However, parallel algorithms are a much wider class than sequential recursive algorithms\footnote{This explains why in the behavioural theories developed so far the emphasis was more on parallel than on recursive algorithms. Nonetheless, the ASM rule above expressing the sequential recursive algorithm is somehow easier to read than the rule for the behaviourally equivalent parallel ASM, because in the latter one the copies of locations are made explicit.}. We dispense with discussing this any further here. It further means that the concept of recursion extends the class of sequential algorithms, but an extension of the class of parallel algorithms by means of recursion is meaningless.

Furthermore, instead of using a single parallel ASM with the rule above we can also define agents $a_I$ for all indices $I \in \{ 0,1,2 \}^*$, each associated with an ASM $\mathcal{M}_I$ using a rule $r_I$ defined as above without the outermost \texttt{FORALL}. The effect on runs is that for the concurrent runs of the concurrent ASM $\{ (a_I, \mathcal{M}_I) \mid I \in \{ 0,1,2 \}^* \}$ the individual agents operate asynchronously, which permits additional runs. However, these runs only reflect the combination of runs of the individual agent machines at different paces as already mentioned in the introduction. This will be exploited in Section \ref{sec:poruns}.
\end{remark*}

\subsection{Quicksort}

Again we first give a specification of a sequential recursive ASM with a single named i/o-rule {\em qsort}.

\begin{tabbing}
xxx\=xxxxx\=xxxxx\=xxxxx\=xxxxx\=xxxxx\=xxxxx\= \kill
\> sorted\_list $\leftarrow$ \textit{qsort\/}(unsorted\_list) = \\
\>\> \texttt{PAR} \> (\texttt{IF} unsorted\_list = $[]$ \texttt{THEN} sorted\_list := $[]$) $\|$ \\
\>\>\> (\texttt{IF} $\text{unsorted\_list} \neq [] \wedge \text{sorted\_list}_1 = \textit{undef\/} \wedge \text{sorted\_list}_2 = \textit{undef\/}$ \\
\>\>\> \ \texttt{THEN} \> \texttt{LET} $x = \text{head}(\text{unsorted\_list})$ \texttt{IN} \\
\>\>\>\> \texttt{LET} $\text{unsorted\_list}_1 = \text{sublist}[y \mid y < x](\text{unsorted\_list})$ \texttt{IN} \\
\>\>\>\> \texttt{LET} $\text{unsorted\_list}_2 = \text{sublist}[y \mid y > x](\text{unsorted\_list})$ \texttt{IN} \\
\>\>\>\> \texttt{PAR} \> ($\text{sorted\_list}_1 \leftarrow \textit{qsort\/}(\text{unsorted\_list}_1)$) $\|$ \\
\>\>\>\>\> ($\text{sorted\_list}_2 \leftarrow \textit{qsort\/}(\text{unsorted\_list}_2)$)) \\
\>\>\> (\texttt{IF} $\text{unsorted\_list} \neq [] \wedge \text{sorted\_list}_1 \neq \textit{undef\/} \wedge \text{sorted\_list}_2 \neq \textit{undef\/}$ \\
\>\>\> \ \texttt{THEN} \> $\text{sorted\_list} := \text{concat}(\text{sorted\_list}_1, \text{concat}(\text{head}(\text{unsorted\_list}),\text{sorted\_list}_2))$)
\end{tabbing}

In this case all nodes of the call tree are labelled by instances of the one i/o-algorithm \textit{qsort\/}. The following ASM rule shows how to capture the algorithm by a parallel ASM.

\begin{tabbing}
xxx\=xxxxx\=xxxxx\=xxxxx\=xxxxx\=xxxxx\=xxxxx\= \kill
\> \texttt{FORALL} $I \in \{ 0,1 \}^*$ \texttt{DO} \\
\> \texttt{PAR} \> (\texttt{IF} $\text{unsorted\_list}(I) = [~]$ \texttt{THEN} $\text{sorted\_list}(I) := [~]$) $\|$ \\
\>\> (\texttt{IF} $\text{unsorted\_list}(I) \neq [~] \wedge \text{unsorted\_list}(I) \neq \textit{undef\/}$ \\
\>\> \ \texttt{THEN} \> \texttt{PAR} \\
\>\>\> (\texttt{IF} $\text{unsorted\_list}(I0) = \textit{undef\/} \wedge \text{unsorted\_list}(I1) = \textit{undef\/}$ \\
\>\>\> \ \texttt{THEN} \texttt{LET} $x = \text{head}(\text{unsorted\_list}(I))$ \texttt{IN} \\
\>\>\>\> \texttt{PAR} \> ($\text{unsorted\_list}(I0) := \text{sublist}[y \mid y < x](\text{unsorted\_list}(I))$) $\|$ \\
\>\>\>\>\> ($\text{unsorted\_list}(I1) := \text{sublist}[y \mid y > x](\text{unsorted\_list}(I))$))$\|$ \\
\>\>\> (\texttt{IF} $\text{sorted\_list}(I0) \neq \textit{undef\/} \wedge \text{sorted\_list}(I1) \neq \textit{undef\/}$ \\
\>\>\> \ \texttt{THEN} \> $\text{sorted\_list}(I) := \text{concat}(\text{sorted\_list}(I0),$ \\ 
\>\>\>\>\>\>\> $\text{concat}(\text{head}(\text{unsorted\_list}(I)),\text{sorted\_list}(I1)))$))
\end{tabbing}

As for \textit{mergesort\/}, one can define a concurrent ASM $\{ (a_I, \mathcal{M}_I) \mid I \in \{ 0,1 \}^* \}$, where the rule $r_I$ of the machine $\mathcal{M}_I$ is defined as above without the outermost \texttt{FORALL}.

\subsection{Sieve of Eratosthenes}
\label{sect:sieveEratos}

The \textit{sieve of Eratosthenes\/} algorithm computes all prime numbers. Starting from the set $\{ x \in \mathbb{N} \mid x \ge 2 \}$ as the start sieve, the smallest number of the sieve is added to the output (or printed)---so the output will be infinite---and all elements of the sieve that are divisable by the number added to the output are removed from the sieve. The following is a simple recursive ASM rule for this algorithm.

\begin{tabbing}
xxx\=xxxxx\=xxxxx\=xxxxx\=xxxxx\=xxxxx\=xxxxx\= \kill
\> $\leftarrow$ \textit{eratosthenes\/}(sieve) = \\
\>\> \texttt{LET} \> $p = \min(\text{sieve})$ \texttt{IN} \\
\>\> \texttt{LET} \> $\text{reduced\_sieve} = \{ x \in \text{sieve} \mid p \nmid x \}$ \texttt{IN} \\
\>\> \texttt{PAR} \> $\leftarrow \textit{eratosthenes\/}(\text{reduced\_sieve}) \; \| \; \text{out\_prime} := p$
\end{tabbing}

More generally, if the input sieve is any subset of $\mathbb{N}$, then the algorithm will return all numbers in $\{ x \in \text{sieve} \mid \forall y \in \text{sieve} ~ (y \mid x \Rightarrow y = x) \}$ (via values assigned to out\_prime). None of the calls in this algorithm will reach a final state.

The crucial problem with this algorithm is not only the infinite input and output (both can in principle be handled using streams), but the fact that the involved computation of reduced\_sieve requires the availability of all elements of sieve, i.e. in general infinitely many. A remedy is the following rule, which keeps sieve constant, but uses all previously determined output values in a set divisors (initially $\emptyset$) to determine the next number in the output, which requires only the investigation of an initial segment of sieve.

\begin{tabbing}
xxx\=xxxxx\=xxxxx\=xxxxx\=xxxxx\=xxxxx\=xxxxx\= \kill
\> $\leftarrow$ \textit{eratos\/}(divisors,sieve) = \\
\>\> \texttt{LET} \> $p = \min (\{ x \in \text{sieve} \mid \forall d \in \text{divisors} ~~ d \nmid x \})$ \texttt{IN} \\
\>\> \texttt{PAR} \> $\leftarrow \textit{eratos\/}(\text{divisors} \cup \{ p \}, \text{sieve}) \; \| \; \text{out\_prime} := p$
\end{tabbing}

It is an easy exercise to describe this algorithm by a parallel or a concurrent ASM.

\section{Recursive Algorithms and Partial-Order Runs}\label{sec:poruns}

In their response to Moschovakis' criticism   \cite{moschovakis:2001} Blass and Gurevich argued in  \cite{blass:beatcs2002} with the capture of sequential recursive algorithms by what is known as 
\emph{distributed ASMs}. The semantics of distributed ASMs has been defined by {\em partial-order runs} \cite{gurevich:lipari1995}, which later have been recognized as too restrictive and have been replaced in \cite{boerger:ai2016} by a rather comprehensive definition of concurrency. In this section we investigate the relationship between distributed ASMs and recursive 
algorithms/ASMs, as defined in Section \ref{sec:postulates}. We will show (a precise version of) the following theorem.

\begin{theorem}\label{CharacterizePorun}
{\bf Characterization of partial-order runs.} Recursive algorithms are exactly those finitely-composed concurrent algorithms $\mathcal{C}$ with nd-seq components such that all concurrent $\mathcal{C}$-runs are definable by partial-order runs.
\end{theorem}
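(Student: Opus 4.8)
The statement is an ``exactly'', so the plan is to establish the two inclusions separately; these will be the content of the precise Theorems~\ref{thm-porun} (recursive $\Rightarrow$ finitely composed concurrent with partial-order runs) and~\ref{thm-porun'} (its converse). Before anything else one must pin down what ``definable by partial-order runs'' means for a concurrent run, i.e.\ fix the standard notion of a partial-order run of a distributed ASM \cite{gurevich:lipari1995}: a partially ordered set of moves in which every move has only finitely many predecessors, the moves of each single agent are linearly ordered, and there is a coherent state assignment to the finite downward-closed sets of moves. A concurrent $\mathcal{C}$-run is \emph{definable by a partial-order run} if it arises, up to behavioural equivalence, as a linearization of the moves of some such partial-order run over the agents of $\mathcal{C}$.

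For the inclusion ``recursive algorithms $\subseteq\,\dots$'' I would start from the Recursive ASM Thesis (the Main Theorem, i.e.\ Theorems~\ref{thm-plausible}/\ref{thm-capture}) and work with a recursive ASM $\mathcal{M}$ rather than with an abstract sequential recursive algorithm. Its instances, created along a recursive run as in Definition~\ref{def-recAsmRun}, are the agents; since $\mathcal{M}$ has only finitely many component rules, all these agents run one of boundedly many nd-seq ASMs, so the induced concurrent algorithm is finitely composed with nd-seq components, and one checks routinely that it satisfies the concurrency postulate of \cite{boerger:ai2016}. A partial order on the moves of any recursive run can then be read off directly from the call tree: a child's moves are placed causally between the call move of its parent and that parent's next move, while moves of instances lying in incomparable subtrees are left incomparable. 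The three partial-order-run conditions are verified as follows: finiteness of the predecessor set is immediate since each move depends only on finitely many earlier ones along its branch of the call tree; linearity per agent holds because each instance runs a nd-seq ASM; and coherence holds because, by the Call Step Postulate~\ref{p-callStep} and the call-relationship Definition~\ref{def-call-rel}, two distinct instances share only the i/o locations between a caller and its callee, and the waiting discipline serializes even those accesses, so reordering independent moves cannot change the resulting state. Finally, every concurrent run of the induced algorithm — which by the run constraint of Postulate~\ref{p-recRun} is merely some interleaving or synchronization of steps of currently $Active$ and not $Waiting$ instances — is a linearization of such a partial order, hence definable by a partial-order run.

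For the converse inclusion, the substantive one, I would take a finitely composed concurrent algorithm $\mathcal{C}$ with nd-seq components all of whose concurrent runs are definable by partial-order runs, and reconstruct the recursive structure from the causal information carried by those partial orders. The idea is to define a parent--child relation on agents purely from causality: $b$ is a child of $a$ if $b$'s first move causally succeeds a distinguished ``call'' move of $a$ at a point at which $b$'s sub-signature is still fresh, and $a$'s subsequent moves causally succeed moves of $b$. One then has to verify, for this relation, the Call Step Postulate~\ref{p-callStep}: that $a$ performs no move strictly between creating $b$ and the termination of $b$ (the waiting property); that the only locations $a$ and $b$ share are $b$'s input locations, written by $a$ before $b$ starts, and $b$'s output locations, read by $a$ after $b$ finishes; and that each call move spawns only boundedly many children. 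Once these hold, $\mathcal{C}$ satisfies all the postulates of Definition~\ref{def-recAlg}, hence is equivalent to a sequential recursive algorithm, and by the Recursive ASM Thesis is captured by a recursive ASM; combining the two inclusions yields the ``exactly''.

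The main obstacle is exactly this converse direction, and within it the extraction of the \emph{waiting} discipline and of the \emph{disjoint-except-for-i/o} property from the hypothesis that \emph{every} concurrent run is definable by a partial-order run. Partial-order runs are permissive, so to exclude, say, a write--write race between two agents on a common location, or an agent $a$ that keeps computing while a would-be callee $b$ runs on a location shared with $a$, one has to argue that some scheduling of $\mathcal{C}$ — exploiting the freedom in choosing the finite sets of active agents at each step — produces a concurrent run whose conflicts cannot be organized into \emph{any} partial order carrying a coherent state assignment (e.g.\ because it would force incomparable moves whose effects do not commute, or a causal cycle). Making this rigorous requires combining the coherence/consistency clause of partial-order runs with the bounded-exploration property of the nd-seq components, and is where the bulk of the work lies. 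Some care is also needed to accommodate non-terminating callees, as in the \textit{sieve of Eratosthenes} example, for which the corresponding subtree of the call structure, and the associated part of the partial order, is infinite.
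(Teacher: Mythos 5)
Your first inclusion follows the paper's route (Theorem~\ref{thm-porun}): turn the recursive ASM's instances into agents of a finitely composed concurrent ASM, order moves by the finish-before-start relation, and use the disjointness of signatures of instances not in an ancestor relationship (guaranteed by the call relationship) to get coherence. That part is fine.

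The converse is where you have a genuine gap. You propose to \emph{extract} a parent--child relation from the causal structure of the partial orders and then verify that $\mathcal{C}$ itself satisfies the Call Step Postulate~\ref{p-callStep} --- in particular the waiting discipline and the disjoint-except-for-i/o property. But the hypothesis that all concurrent runs of $\mathcal{C}$ are definable by partial-order runs does \emph{not} force any such structure on the agents of $\mathcal{C}$. A static finite concurrent ASM in which all agents run forever in parallel, none ever waiting for another (exactly the systems treated in Theorem~\ref{thm-Petri}, e.g.\ Petri-net-like Process Rewrite Systems), has only partial-order runs, yet the property you plan to verify --- ``$a$ performs no move strictly between creating $b$ and the termination of $b$'' --- is simply false there, and no choice of a distinguished ``call'' move will make it true. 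The theorem's ``exactly'' is up to simulation, not literal satisfaction of the postulates by $\mathcal{C}$, and the paper's proof of Theorem~\ref{thm-porun'} exploits exactly this freedom: rather than finding recursion inside $\mathcal{C}$, it \emph{imposes} a depth-one recursion from outside. Each single move of an agent $a$ executing rule $r$ is simulated by $a$ calling a fresh delegate via a named rule $out_r \leftarrow \textsc{OneStep}_r(in_r)$, whose body performs one step of $r$ in the ambient of its caller and immediately sets itself $Terminated$; the input terms $in_r$ are read off the bounded exploration witness of $r$, and the output locations from the finitely many locations $r$ can update. The waiting discipline then holds by construction (the caller waits exactly one recursive step), and the partial-order hypothesis is used only to guarantee that the moves contributing to one concurrent step are pairwise independent, so that the corresponding delegates can fire simultaneously in one recursive step. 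Your worry about manufacturing runs whose conflicts admit no coherent partial order is thus not where the work lies; that difficulty dissolves once you stop trying to read a call tree out of $\mathcal{C}$.
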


The decisive notions used in this theorem such as  partial-order runs and finitely-composed concurrent algorithms will be formally introduced below, together with the corresponding concepts for concurrent ASMs. The theorem fortifies the argument that 
partial-order runs are too weak a concept to serve as a semantic foundation for concurrent algorithms\footnote{As our discussion in the previous section shows, even the concept of unbounded parallelism, by means of which synchronous parallelism is captured, is stronger than recursion.}. The latter aspect has been overcome by the definition of concurrent ASM runs and the corresponding concurrent ASM thesis in \cite{boerger:ai2016}.

We will also show that if the concurrent runs are 
restricted further to partial-order runs of a concurrent algorithm with a fixed finite number of agents and their nd-seq programs, 
one can simulate them even by a non-deterministic sequential 
algorithm (see Theorem \ref{thm-Petri}). An interesting 
example of this special case are partial-order runs of Process Rewrite Systems \cite{Mayr99} (see Corollary \ref{corollary}). 

To prove the theorem we use the characterization of runs of sequential recursive  algorithms as 
recursive runs of recursive ASMs (Theorem 
\ref{thm-capture}) and of runs of concurrent algorithms as 
concurrent ASM runs (see \cite{boerger:ai2016}). 

\subsection{Partial-Order Runs}

Syntactically, a concurrent algorithm $\mathcal{C}$ is 
defined as a family of algorithms $alg(a)$, each associated with (`indexed by') an agent $a \in Agent$ (see \cite{gurevich:lipari1995},  
\cite{boerger:ai2016}) that executes the algorithm in concurrent runs.
These algorithms are often assumed to be (possibly non-deterministic) sequential algorithms, though this 
restriction is not necessary in general. In our case here 
this restriction is, however, important, as we have seen in 
Section \ref{sec:bsp} that without this restriction, permitting unbounded $\FORALL$ and $\CHOOSE$ constructs, we 
obtain algorithms far more powerful than the recursive ones.
Sometimes it is also assumed that the $Agent$ set is finite, 
a special case we consider in Sect.~\ref{sec:Petri}.

In a concurrent run, as in recursive runs, 
different agents can be associated dynamically with different instances 
of the same algorithm. Therefore, when relating concurrent runs of a concurrent algorithm $\mathcal{C}$ to recursive runs of a sequential recursive algorithm---which by Definition \ref{def-recAlg} is a finite set of i/o-algorithms---we need to finitely compose $\mathcal{C}$, namely by a finite set of nd-seq components of which each $alg(a) \in \mathcal{C}$  is an instance.  
 
In a concurrent run as defined in \cite{boerger:ai2016}, 
multiple agents with different clocks may contribute by their single moves to define the successor state of a state.
Therefore, when a successor state $S_{i+1}$ of a state $S_i$ is obtained by applying to $S_i$ multiple update sets $U_a$ with agents $a$ in a finite set $Agent_i \subseteq Agent$, each $U_a$ is required to have been computed by
$a \in Agent_i $ in a preceding state $S_j$, i.e. with $j \leq i$. It is possible that $j<i$ holds so that for different agents different $\text{alg}(a)$-execution speeds (and purely local subruns to compute $U_a$) can be taken into account. 

This can be considered as resulting from a separation of a step of an nd-seq algorithm $\text{alg}(a)$ into a {\em read step}---which reads location values in a state $S_j$---followed by a {\em write step} which applies the update set $U_a$ computed on the basis of the values read in $S_j$ to a later state $S_i$ ($i \geq j$). We say that $a$ contributes to updating the state $S_i$ to the successor state $S_{i+1}$, and that a $move$ starts in $S_j$ and contributes to updating $S_i$ (i.e. it finishes in $S_{i+1}$). This is formally expressed by the following definition of concurrent ASMs and their runs.

\begin{definition}\label{def-concurRun}\rm
	
A {\em concurrent ASM} is defined as a family $\mathcal{C}$ of ASMs $\text{asm}_a$ (also called programs and written $pgm(a)$) with associated agents $a \in Agent$. A {\em concurrent run} of $\mathcal{C}$ is defined as a sequence $S_0, S_1, \dots$ of states together with a sequence $A_0, A_1, \dots$ of finite subsets of $Agent$, such that $S_0$ is an initial state and each $S_{i+1}$ is obtained from $S_i$ by applying to it the updates computed by the agents in $A_i$, where each $a \in A_i$ computes its update  set  $U_a$ on the basis of the location values (including the input and shared locations) read in some preceding state $S_j$ (i.e. with $j \leq i$) depending on $a$.
	
\end{definition} 

\begin{remark*}
In this definition we deliberately permit the
set of $Agent$s to be infinite or dynamic and potentially infinite, 
growing or shrinking in a run. Below we consider the special 
cases that $Agent$ is a static finite set (see Section 
\ref{sec:Petri}) or a dynamic set all of whose members are equipped however
with instances of a fixed (static) finite set of programs (see 
Definition \ref{def-composed-casm}). In Definition \ref{def-porun} below the set of $Agent$s is fixed by the set of $M$oves.
\end{remark*}

For the reason explained above, in the following we restrict our attention to concurrent ASMs in which each component $\text{asm}_a$ is an nd-seq ASM.
 
In \cite{gurevich:lipari1995} Gurevich defined the notion of 
{\em partial-order run} of concurrent 
algorithms.\footnote{Note that in \cite{gurevich:lipari1995} 
	Gurevich actually uses the wording {\em distributed algorithm} instead of concurrent algorithm, whereas we prefer to stick with the notation from \cite{boerger:ai2016}. One reason for this is that distribution requires also to discuss (physical) notations and messaging (as handled in \cite{boerger:jucs2017}), whereas concurrency abstracts from this.} 
The partial order is defined on the set of single moves of the agents which execute the individual algorithms. 
For a nd-seq algorithm $\mathcal{A}$, to make one {\em move}
means to perform one step in a state $S$, as defined by the 
Branching Time Postulate \ref{p-time}, applying 
a set $\Delta(S,S^\prime) \in \boldsymbol{\Delta}(S)$ 
of updates to $S$.

\begin{definition}\label{def-porun}\rm
	
Let $\mathcal{C} = \{ (a, \text{alg}(a)) \}_{a \in Agent}$ be a concurrent algorithm, in which each $\text{alg}(a)$ is an nd-seq algorithm. A {\em partial-order run} for $\mathcal{C}$ is defined by a set $M$ of moves of instances of the algorithms $\text{alg}(a)$ ($a \in Agent$), a function $\text{ag}: M \rightarrow Agent$ assigning to each move the agent performing the move, a partial order $\le$ on $M$, and an initial segment function $\sigma$ such that the following conditions are satisfied:

\begin{description}
		
\item[finite history.] For each move $m \in M$ its history $\{ m^\prime \mid m^\prime \le m \}$ is finite.
		
\item[sequentiality of agents.] The moves of each agent are ordered, i.e. for any two moves $m$ and $m^\prime$ of one agent $\text{ag}(m) = \text{ag}(m^\prime)$ we either have $m \le m^\prime$ or $m^\prime \le m$.
		
\item[coherence.] For each finite initial segment $M^\prime \subseteq M$ (i.e. for $m \in M^\prime$ and $m^\prime \le m$ we also have $m^\prime \in M^\prime$) there exists a state $\sigma(M^\prime)$ over the combined signatures of the algorithms $\mathcal{A}_i$ such that for each maximum element $m \in M^\prime$ the state $\sigma(M^\prime)$ is the result of applying $m$ to $\sigma(M^\prime - \{ m \})$.	
			
\end{description}
	
\end{definition}

In order to characterise recursive ASM runs in terms of partial-order runs of a concurrent ASM $\mathcal{C}$ several restrictions have to be made. First of all the component ASMs must be nd-seq ASMs, an assumption we already justified above. Second, the component machines $\text{asm}_a$ can only be copies (read: instances) of finitely many different ASMs, which we will call the program base of 
$\mathcal{C}$.\footnote{This reflects the stipulation in 
	\cite{gurevich:lipari1995} that in concurrent ASMs the agents are equipped with instances of programs which are taken from `a finite indexed set of single-agent programs' (p.31).} 
Third, runs must be started by executing a distinguished 
main component.\footnote{The restriction to one component is equivalent to, but notationally simplifies, the requirement stated in \cite[6.2, p.31]{gurevich:lipari1995} for concurrent ASM runs 
	that in initial states there are only finitely many agents, each equipped with a program.} 
We capture these restrictions by the notion of {\em finitely composed} concurrent ASM.

\begin{definition}\label{def-composed-casm}\rm

A concurrent ASM $\mathcal{C}$ is {\em finitely composed} iff (i) and (ii) hold:
	
\begin{enumerate}
		
\item There exists a finite set $\mathcal{B}$ of nd-seq ASMs such that each $\mathcal{C}$-program is of form $\textbf{amb}\; a \;\textbf{in}\; r$ for some program $r \in \mathcal{B}$---call $\mathcal{B}$ the {\em program base} of $\mathcal{C}$.
		
\item There exists a distinguished agent $a_0$ which is the only one $Active$ in any initial state. Formally this means that in every initial state  of a $\mathcal{C}$-run, $Agent=\{a_0\}$ holds. We denote by \textit{main\/} the component in $\mathcal{B}$ of which $a_0$ executes an instance. For partial-order runs of $\mathcal{C}$ this implies that they start with a minimal move which consists in executing the program $\text{asm}(a_0) =~ \AMB a_0 \IN \textit{main\/}$.
		
\item Any program base component may contain rules of form $\LET a =~\NEW(Agent) \IN r$.\footnote{This reflects the stipulation for concurrent ASMs in \cite{gurevich:lipari1995} that `An agent $a$ can \emph{make a move} at $S$ by firing Prog($a$) ... and change $S$ accordingly. As part of the move, $a$ may create new agents' (p.32), which then may contribute by their moves to the run in which they were created.} Together with (ii) this implies that every agent, except the distinguished $a_0$, before making a move in a run must have been created in the run. 

\end{enumerate}
	 	 
\noindent
$\mathcal{C}$ is called {\em finite} iff $Agent$ is finite.

\end{definition}

We are now ready to more precisely state and prove the first part of Theorem \ref{CharacterizePorun}. It should come as no surprise; it provides the justification for the argumentation by Blass and Gurevich in \cite{blass:beatcs2002}.

\begin{theorem}\label{thm-porun}

For every recursive ASM $\mathcal{R}$ one can construct an equivalent finitely composed concurrent ASM $\mathcal{C}_{\mathcal{R}}$ with nd-seq ASM components such that every concurrent run of $\mathcal{C}_{\mathcal{R}}$  is definable by a partial-order run.

\end{theorem}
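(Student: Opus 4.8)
The plan is to obtain $\mathcal{C}_{\mathcal{R}}$ by ``in-lining'' the semantics of the call rule. Starting from the finite set of recursive ASM rules of $\mathcal{R}$, I would replace in each of them every named i/o-rule $t_0 \leftarrow N(t_1,\dots,t_n)$ by the rule $\ASM{Call}(t_0 \leftarrow N(t_1,\dots,t_n))$ of Definition \ref{def-CallNamedRule}, and guard the body of each resulting program by the condition that its executing agent is $Active$ and not $Waiting$ (Definition \ref{Waiting}), so that $Waiting$ or terminated agents contribute only the empty update set. Since $\ASM{Call}(\cdot)$ is built only from assignments, \texttt{LET}, \texttt{PAR} and the $\NEW(Agent)$ construct, every program obtained this way is an nd-seq ASM rule; together they form a finite program base $\mathcal{B}$. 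Declaring as \textit{main\/} the component derived from the main rule of $\mathcal{R}$, with distinguished agent $a_0$, and noting that agents are created only through the $\NEW(Agent)$ occurrences inside $\ASM{Call}(\cdot)$, we see that $\mathcal{C}_{\mathcal{R}}$ is finitely composed in the sense of Definition \ref{def-composed-casm}. Equivalence of $\mathcal{C}_{\mathcal{R}}$ and $\mathcal{R}$ is then essentially a restatement of the definitions: a recursive run of $\mathcal{R}$ in the sense of Definition \ref{def-recAsmRun} is precisely a concurrent run of $\mathcal{C}_{\mathcal{R}}$ in which every agent in $A_i$ computes its update set already in $S_i$, i.e. the case $j=i$ of Definition \ref{def-concurRun}.

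The real work is to show that \emph{every} concurrent run of $\mathcal{C}_{\mathcal{R}}$ is definable by a partial-order run, which amounts to showing that the only inter-agent dependencies in such a run are the ones induced by the call/return mechanism. By the call relationship of Definition \ref{def-call-rel} and the Call Step Postulate \ref{p-callStep}, together with the freshness of created instances, the set of locations read or written by an instance is contained in its own (agent-parametrized, hence private) local locations together with the input/output locations it shares with its direct parent; thus two instances share a location only if one is the parent of the other. Moreover, from the moment a parent issues a call until every callee of that call step has terminated, the parent is $Waiting$ and makes no move, and terminated callees make no move either, so the shared input/output locations of a parent/child pair are never touched by two agents ``at the same time''. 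Consequently, between the earlier state $S_j$ in which a move reads and the later state $S_i$ to which it applies its updates no other agent can have altered a location relevant to that move, and any two moves not linked by the call/return causality act on disjoint sets of locations.

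Given these disjointness and serialization facts I would define the partial-order run as follows. Let $M$ be the set of all moves occurring in the given concurrent run, let $\text{ag}$ send a move to its agent, and let $\le$ be the transitive closure of (a) the local order of the moves of each fixed agent and (b) the causal edges: the move of an agent $p$ that creates a child $c$ precedes every move of $c$, and every move of $c$ precedes every move of $p$ that follows, in $p$'s local order, the move creating $c$. Conditions \emph{finite history} and \emph{sequentiality of agents} of Definition \ref{def-porun} are then immediate, since only finitely many moves occur before any given step and (a) linearly orders each agent. For \emph{coherence}, given a finite initial segment $M^\prime$, I would set $\sigma(M^\prime)$ to be the state reached by applying the moves of $M^\prime$ to the initial state in any linear extension of $\le$; this is well defined because $\le$-incomparable moves have disjoint update sets (by the previous paragraph they either belong to one $A_i$, where all agents are $Active$ and not $Waiting$ and hence pairwise non-ancestor, or to distinct $A_i$'s without a parent/child link) and therefore commute, while the value read by each move is already fixed by its $\le$-predecessors. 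Then for a maximal $m \in M^\prime$, applying $m$ to $\sigma(M^\prime \setminus \{m\})$ yields $\sigma(M^\prime)$, as required.

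The main obstacle is the \emph{coherence} condition, i.e. the claim that the partial order generated by (a) and (b) already dominates all read/write and write/write dependencies of the run. Establishing it requires combining the three clauses of the call relationship (Definition \ref{def-call-rel}), the Input/Output Assumption, the freshness of instances and the $Waiting$ discipline (Definition \ref{Waiting}) to pin down exactly which locations each instance can access and when; once that is done, the commutation of $\le$-incomparable moves, and hence coherence, follow routinely, and the remaining bookkeeping (finiteness of histories, and independence of $\sigma$ from the chosen linear extension) is straightforward.
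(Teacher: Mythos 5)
Your construction of $\mathcal{C}_{\mathcal{R}}$ coincides with the paper's: the paper also takes as program base the rules $r^* = \texttt{IF}\ Active(r) \wedge \neg Waiting(r)\ \texttt{THEN}\ r$ with every named i/o-rule replaced by its $\textsc{Call}$ interpretation, and argues equivalence the same way (noting, as you do later, that the separation of agent state spaces makes the read-step/write-step view with $j<i$ collapse to the atomic case). Where you genuinely diverge is in the choice of the partial order. The paper orders moves \emph{temporally}: $m < m'$ iff $m$ finishes in some $S_i$ and $m'$ starts reading in some $S_j$ with $j \ge i+1$. Under that order, incomparable moves are exactly the temporally overlapping ones, and since a waiting parent makes no move while its callees run, overlapping moves belong to agents not in an ancestor relationship, hence with disjoint signatures; commutation and coherence then follow almost immediately. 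You instead take the \emph{minimal causal} order generated by agent-local sequencing plus call/return edges. This buys a partial order that directly mirrors the call tree, but it admits many more incomparable pairs --- in particular moves that are temporally far apart but causally unrelated --- and many more initial segments $M'$ that do not correspond to prefixes of the run, so your coherence argument must establish that \emph{any} two causally unrelated moves touch disjoint locations and that every write affecting what $m$ reads is already a $\le$-predecessor of $m$ (which holds, but only by combining the i/o-only sharing between parent and child with the $Waiting$ discipline, exactly the analysis you flag as the remaining work). Both routes are sound; the paper's temporal order localizes the independence argument to a single concurrent step, while yours yields the tighter, causality-only order at the price of a heavier coherence verification.
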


\begin{proof}
Let $\mathcal{R}$ be a recursive ASM given with distinguished program $main$. We define a finitely composed concurrent ASM $\mathcal{C}_{\mathcal{R}}$ with program base $\{r^* \mid r \in \mathcal{R}\}$, where $r^*$ is defined as
\begin{asm}
	r^* = ~ \IF Active(r) \AND ~\NOT Waiting(r) \THEN r.	
\end{asm}

That is, $r^*$ can only contribute a non-empty update set to form a state $S_{i+1}$ in a concurrent run, if $r$ is $Active$ and not $Waiting$; this is needed, because in every step of a recursive run of $\mathcal{R}$ only $Active$ and not $Waiting$ rules are executed.

In doing so we use for each call rule $r \in \mathcal{R}$ in the $\THEN$ part of $r^*$ instead of
$t_0 \leftarrow N(t_1,\dots,t_n)$ its interpretation
by the ASM rule $\ASM{Call}(t_0 \leftarrow N(t_1,\dots,t_n))$ defined in Sect.~\ref{sect:recASM}.
The definition of $r^*$ obviously guarantees the behavioral equivalence of $\mathcal{R}$ and $\mathcal{C}_{\mathcal{R}}$: in each run step the same $Active$ and not $Waiting$ rules $r$ respectively $r^*$ and their agents are selected for their simultaneous  execution.
Remember that, by the definition of $\ASM{Call}$(\textit{i/o-rule\/}),
each agent operates in its own state space so that the view of an agent's step as read-step followed by a write-step is equivalent to the atomic view of this step.

Note that in a concurrent run of $\mathcal{C}_{\mathcal{R}}$ the $Agent$ set is dynamic, in fact it grows with each execution of a call rule, together with the number of instances of $\mathcal{R}$-components executed during a recursive run of $\mathcal{R}$.

It remains to define every concurrent run $(S_0,A_0),(S_1,A_1),\ldots$ of $\mathcal{C}_{\mathcal{R}}$ by a partial-order run. For this we define an order on the set $M$ of moves made during a concurrent run, showing that it satisfies the constraint on finite history and sequentiality of agents, and then relate each state $S_i$ of the run to the state computed by the set $M_i$ of moves performed to compute $S_i$ (from $S_0$), showing that $M_i$ is a finite initial segment of $M$ and that the associated state $\sigma(M_i)$ equals $S_i$ and satisfies the coherence condition.

Each successor state $S_{i+1}$ in a concurrent run of $\mathcal{C}_{\mathcal{R}}$ is the result of applying to $S_i$ the write steps of finitely many moves of agents in $A_i$. This defines the function $ag$, which associates agents with moves, and the finite set $M_i$ of all moves finished in a state belonging to the initial run segment $[S_0,\ldots,S_i]$. Let $M=\cup_i M_i$. The partial order $\le$ on $M$ is defined by $m < m^\prime$ iff move $m$ contributes to update some state $S_i$ (read: finishes in $S_i$) and move $m^\prime$ starts reading in a later state $S_j$ with $i+1 \le j$. Thus, by definition, $M_i$ is an initial segment of $M$. 

To prove the finite history condition, consider any $m^\prime \in M$ and let $S_j$ be the state in which it is started. There are only finitely many earlier states $S_0 ,\dots, S_{j-1}$, and in each of them only finitely many moves $m$ can be finished, contributing to update $S_{j-1}$ or an earlier state. 

The condition on the sequentiality of the agents follows directly from the definition of the order relation $\le$ and from the fact that in a concurrent run, for every move $m=(read_m,write_m)$ executed by an agent, this agent performs no other move between the $read_m$-step and the corresponding $write_m$-step in the run.

This leaves us to define the function $\sigma$ for finite initial segments $M^\prime \subseteq M$ and to show the coherence property. 
We define $\sigma(M^\prime)$ as result of the application of the moves in $M^\prime$ in any total order extending the partial order $\le$. For the initial state $S_0$ we have $\sigma(\emptyset)=S_0$. This implies the definability claim $S_i=\sigma(M_i)$.

The definition of $\sigma$ is consistent for the following reason. Whenever two moves $m \not = m^\prime$ are incomparable, then either they both start in the same state or say $m$ starts earlier than $m^\prime$.
But $m^\prime$ also starts earlier than $m$ finishes. This is only possible for agents $ag(m) = a$ and $ag(m^\prime) = a^\prime$ whose programs $pgm(a),pgm(a^\prime)$ are not in an ancestor relationship in the call tree. Therefore these programs have disjoint signatures, so that the moves $m$ and $m^\prime$ could be applied in any order with the same resulting state change.

To prove the coherence property let $M^\prime$ be a finite initial segment, and let $M^{\prime\prime} = M^\prime \setminus M^\prime_{\text{max}}$, where $M^\prime_{\text{max}}$ is the set of all maximal elements of $M^\prime$. Then $\sigma(M^\prime)$ is the result of applying simultaneously all moves $m \in M^\prime_{\text{max}}$ to $\sigma(M^{\prime\prime})$, and the order, in which the maximum moves are applied is irrelevant. This implies in particular the desired coherence property.\hfill
\end{proof}

Note that the key argument in the proof exploits the fact that for recursive runs of $\mathcal{R}$, the runs of  different agents are initiated by calls and concern different state spaces with pairwise disjoint signatures, due to the function parameterization by agents, unless $pgm(a^\prime)$ is a child (or a descendant) of $pgm(a)$, in which case the relationship between the signatures is defined by the call relationship. Independent moves can be guaranteed in full generality only for algorithms with disjoint signatures.

\subsection{Capture of Partial-Order Runs}

While Theorem \ref{thm-porun} is not surprising, we will now show the less obvious converse of Theorem \ref{CharacterizePorun}. 
The fact that (by Definition \ref{def-rASM}) a recursive ASM $\mathcal{R}$ is a finite set of recursive ASM rules, starts its runs with a main program and uses during any run only instances of its rules implies that if $\mathcal{R}$ simulates a concurrent ASM, this concurrent ASM must be finitely composed (as assumed in 
\cite[Sect. 6]{gurevich:lipari1995} for the definition of partial-order runs) and must use only instances of its finitely many nd-seq components. 

\begin{theorem}\label{thm-porun'}
	
For each finitely composed concurrent ASM $\mathcal{C}$ 
with program base $ \{ r_i \mid i \in I \}$ of nd-seq 
ASMs such that all its concurrent runs are definable by 
partial-order runs, one can construct a recursive ASM 
$\mathcal{R}_\mathcal{C}$ which simulates $\mathcal{C}$, i.e. such that for each concurrent run 
of $\mathcal{C}$ there is an equivalent recursive run of 	
$\mathcal{R}_\mathcal{C}$.\footnote{Equivalence 
	via an inverse simulation of every recursive 
		$\mathcal{R}_\mathcal{C}$-run by a concurrent 
		$\mathcal{C}$-run can be obtained if the delegates of $\mathcal{C}$-agents, called in the  
		recursive run to perform the step of their $caller$ in the concurrent run, act in an `eager' way. See the remark at the end of the 
		proof.}
	
\end{theorem}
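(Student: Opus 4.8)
The plan is to derive the theorem from the Characterisation Theorem~\ref{thm-capture}. From $\mathcal{C}$ I will extract a sequential recursive algorithm $\mathcal{R}$ in the sense of Definition~\ref{def-recAlg}, whose components are i/o-versions of the finitely many nd-seq ASMs in the program base $\{r_i \mid i \in I\}$ of $\mathcal{C}$, and whose recursive runs (Postulate~\ref{p-recRun}) correspond exactly to the concurrent runs of $\mathcal{C}$ read through their defining partial-order runs. Theorem~\ref{thm-capture} then yields a recursive ASM $\mathcal{R}_{\mathcal{C}}$ equivalent to $\mathcal{R}$ with respect to recursive runs, and composing the two correspondences, i.e.\ first from a concurrent run of $\mathcal{C}$ to a recursive run of $\mathcal{R}$ and then from the latter to a recursive run of $\mathcal{R}_{\mathcal{C}}$, gives the required simulation. (Alternatively one may build $\mathcal{R}_{\mathcal{C}}$ directly and verify that its recursive runs simulate $\mathcal{C}$, using the same structural input; I present the factored version because it reuses the rule construction of Theorem~\ref{thm-capture}.)

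The work lies in producing $\mathcal{R}$. Fix a concurrent run $\rho = (S_0,A_0),(S_1,A_1),\dots$ of $\mathcal{C}$; by hypothesis it is definable by a partial-order run with move set $M$ and partial order $\le$. Since $\mathcal{C}$ is finitely composed, every agent other than the distinguished $a_0$ is introduced during the run by a creation step inside some program instance, so the agents of $\rho$ form a creation tree rooted at $a_0$; by the sequentiality-of-agents condition each agent's moves are $\le$-linearly ordered, and every move of a created agent $b$ lies $\le$-above the move that created it. The crucial structural step is to use the full force of the hypothesis --- that \emph{every} concurrent run of $\mathcal{C}$ is definable by a partial-order run --- to show that genuine data flow between two distinct agents runs only along creation-tree edges and obeys exactly the call-relationship discipline of Definition~\ref{def-call-rel}: if two $\le$-incomparable moves $m,m'$ had overlapping active periods and a shared location that non-trivially influences one of them, then, taking the finite initial segment consisting of the two histories together with $m$ and $m'$ (in which both are maximal), the coherence requirement of Definition~\ref{def-porun} fails because the two applications of the maximal moves do not commute; hence agents not in ancestor relationship in the creation tree have effectively disjoint signatures, a parent feeds a child only the input it supplies upon creating it and can consume only the output the child leaves in the agreed output locations of its final configuration, and whenever a parent move genuinely depends on a child it must $\le$-follow the entire child subrun --- i.e.\ the parent is $Waiting$ in precisely the sense of Definition~\ref{Waiting}. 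This is exactly the situation in which a global shared state can be faithfully reproduced by the compartmentalised, per-agent state spaces of a recursive algorithm.

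With this structure at hand, $\mathcal{R}$ has one i/o-algorithm per program base component, turned into one by the $\Sigma_{in}/\Sigma_{loc}/\Sigma_{out}$ splitting dictated by the extracted call relationship, and realised with the \emph{delegate} device: a component mirroring a $\mathcal{C}$-agent $a$ performs $a$'s individual moves by activating a fresh one-step delegate of $r_i$ (so the ``read now, write later'' move of $\rho$ becomes an atomic recursive step, legitimate because by coherence a move depends only on its finite $\le$-history, all of which has already been applied), and a step of $r_i$ that spawns $\mathcal{C}$-agents $b_1,\dots,b_k$ is translated into a single call step with $k+1$ callees --- the mirrors of $b_1,\dots,b_k$ together with a fresh mirror for $a$'s continuation --- all under the then-$Waiting$ mirror of $a$, which later joins their outputs; the ``caller waits but all children run concurrently'' shape of a recursive step thus reproduces the ``creator does not wait'' behaviour of $\mathcal{C}$, while a child that is only detached and never joined (as for the \textit{sieve of Eratosthenes\/}) is simply run by a mirror under a perpetually waiting parent. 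Because each fixed nd-seq rule creates in one step at most a constant number of agents depending only on the program base, the number of callees of a call step is bounded by a constant $m$, which gives the bounded call tree branching constraint of Postulate~\ref{p-recRun}; Postulates~\ref{p-time}, \ref{p-state}, \ref{p-bound} are inherited from the $r_i$ and Postulates~\ref{p-callStep}, \ref{p-recRun} hold by construction. Finally one checks, by induction along a linearisation of $\le$ compatible with $\rho$, that the resulting recursive run of $\mathcal{R}$ has reconstructed global states equal to those of $\rho$ up to finite stuttering for the pure spawn and return steps --- reordering within the partial order being harmless since $\le$-incomparable moves commute --- so $\rho$ and the recursive run are equivalent; transporting this run along Theorem~\ref{thm-capture} then finishes the proof.

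The main obstacle is the structural extraction of the second paragraph: converting the purely semantic hypothesis ``all concurrent runs are definable by partial-order runs'' into the combinatorial conclusion that the agents decompose into a call tree whose edges carry bona fide call relationships and across whose incomparable branches the signatures are disjoint. This is delicate precisely because the compartmentalised state spaces of a recursive ASM can mimic the global state of $\mathcal{C}$ only when sharing is so disciplined, and the argument must manufacture, from any putative violation, a concurrent run of $\mathcal{C}$ that provably admits no defining partial-order run. A secondary, more routine difficulty is the bookkeeping showing that the dynamic call tree of $\mathcal{R}$ (hence of $\mathcal{R}_{\mathcal{C}}$) respects bounded branching and that the ``continue-self plus spawn-children'' encoding never forces a mirror into $Waiting$ mode at a moment when $\mathcal{C}$ has that agent running --- again excluded by the hypothesis, since such a configuration would be an interference pattern inconsistent with partial-order definability.
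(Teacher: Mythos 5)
Your architecture is genuinely different from the paper's, and the step you yourself identify as the main obstacle---the ``structural extraction'' in your second paragraph---contains a gap that I do not see how to close. You claim that partial-order definability of all runs forces inter-agent data flow into the call-relationship discipline of Definition~\ref{def-call-rel}: a creator passes data to a created agent only at creation time, reads back only what the child leaves in its final state, and is $Waiting$ for the whole child subrun whenever it depends on the child. Your justification is the coherence condition of Definition~\ref{def-porun}, but coherence only constrains $\le$-\emph{incomparable} moves (maximal moves of an initial segment must commute); it says nothing about interaction carried by $\le$-\emph{comparable} moves. Concretely, take a main agent that first creates $b$ (running ``if $\ell=1$ then $out:=42$'') and only in its \emph{next} move sets the shared location $\ell:=1$: every run of this finitely composed concurrent ASM is definable by a partial-order run (order $b$'s reading move after the write, or its vacuous early read before it), yet the creator feeds the child \emph{after} creation. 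In your encoding the continuation-mirror of the creator and the mirror of $b$ are sibling callees, whose computations must be independent by the Call Step Postulate~\ref{p-callStep} and whose local signatures must be disjoint, so this interaction cannot be reproduced; the reduction to a sequential recursive algorithm $\mathcal{R}$, and hence the appeal to Theorem~\ref{thm-capture}, does not go through. The same issue defeats the claim that a parent depending on a child must $\le$-follow the \emph{entire} child subrun: it may depend on an intermediate write of the child and remain incomparable with the child's later moves.

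The paper's proof avoids this extraction entirely and does not mirror the agent-creation tree as a call tree. Each $\mathcal{C}$-agent $a$ survives as a long-lived caller in $\mathcal{R}_\mathcal{C}$ which, for every single move, calls a fresh one-step delegate $out_r \leftarrow \mathrm{OneStep}_r(in_r)$ whose body executes $r$ \emph{in the ambient of the caller} and immediately terminates; agents created by $\mathcal{C}$-rules are simply made $Active$ and not $Waiting$ rather than becoming call-tree children of their creators. Because the delegates act on the caller's shared global state, all cross-agent interaction of $\mathcal{C}$ is preserved verbatim, and the partial-order hypothesis is needed only to present each transition $S_i \to S_{i+1}$ as a set of pairwise independent atomic moves, which the alternation ``callers issue calls / delegates fire and terminate'' then realises in two recursive steps. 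If you wish to keep your tree-structured encoding, you must either prove the missing structural lemma---which the example above shows to be false in general---or restrict it to what coherence actually yields for incomparable moves.
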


\begin{proof}
Let a concurrent $\mathcal{C}$-run $(S_0,A_0),(S_1,A_1),\ldots$ be given. If it is definable by a partial-order 
run $(M,\leq,ag,pgm,\sigma)$, the transition from $S_i = 
\sigma(M_i)$ to $S_{i+1}$ is performed in one concurrent 
step by parallel independent moves 
$m \in M_{i+1} \setminus M_i$, where $M_i$ is the set of 
moves which contributed to transform $S_0$ into $S_i$. Let $m \in M_{i+1} \setminus M_i$ be a move performed by an agent $a=ag(m)$ with rule 
$pgm(a)=~ \AMB a \IN r$, an instance of a rule 
$r$ in the program base of $\mathcal{C}$. To 
execute the concurrent step by steps of a 
recursive ASM $\mathcal{R}_\mathcal{C}$, we simulate each 
of its moves $m$ by letting agent $a$ act in the $\mathcal{R}_\mathcal{C}$-run as $caller$ of a named 
rule $out_r \leftarrow \ASM{OneStep}_r(in_r)$. The callee agent $c$ acts as delegate for one step of $a$: it executes $\AMB a \in r$ and makes its program immediately $Terminated$.

To achieve this, we refine the recursive $\ASM{Call}$ machine of Definition \ref{def-CallNamedRule}
to a recursive ASM $\ASM{Call}^*$ by adding to $\ASM{Initialize}$ the
update $Terminated(\AMB c \IN q):=false$. When $\ASM{Call}^*$ is applied to $out_r \leftarrow \ASM{OneStep}_r(in_r)$, the update of  $Terminated$ makes the delegate $Active$ so that it can make a step to execute 
$\AMB c \IN \ASM{OneStep}_r$. $\ASM{OneStep}_r$ is defined to perform $ \AMB caller(c) \IN r$ and immediately terminate (by setting 
$Terminated$ to true). For ease of exposition we add in Definition \ref{def-CallNamedRule} also the update $caller(c):=\SELF$, to distinguish agents in the concurrent run---the $caller$s of $\ASM{OneStep}_r$-machines---from the delegates each of which simulates one step of its $caller$ and immediately terminates its life cycle.

It remains to determine the input and output for calling $\ASM{OneStep}_r$. For the input we exploit the existence of a bounded exploration witness $W_r$ for $r$. All updates produced in a single step are determined by the values of $W_r$ in the state, in which the call is launched. So $W_r$ defines the input terms of the called rule $\ASM{OneStep}_r$, combined in $in_r$. Analogously, a single step of $r$ provides updates to finitely many locations that are determined by terms appearing in the rule, which defines $out_r$.

We summarize the explanations by the following definition: 
\begin{asm}
\mathcal{R}_\mathcal{C}= \{out_r \leftarrow 
\ASM{OneStep}_r(in_r) \mid r \in \mbox{ program base of }\mathcal{C}\}\\
\ASM{OneStep}_r =\+
   \AMB caller(\SELF) \IN r 
   \mbox{ // the delegate executes the step of its }caller\\
   Terminated(pgm(\SELF)):=true 
               \mbox{ // ... and immediates stops}
\end{asm}

Note that each caller step step $\AMB a \IN ~ out_r \leftarrow \ASM{OneStep}_r(in_r)$ in an $\mathcal{R}_\mathcal{C}$-run is by definition equivalent to the machine $\AMB a \IN ~\ASM{Call}^*(out_r \leftarrow \ASM{OneStep}_r(in_r))$ and triggers the execution of the delegate program $\AMB c \IN \ASM{OneStep}_r$ (where $a=caller(c)$, which triggers 
$\AMB c \IN ~\AMB a \IN r $ (by definition). Furthermore, since the innermost ambient binding counts, this machine is equivalent to the simulated $\mathcal{C}$-run step $\AMB a \IN r$.

Thus the recursive $\mathcal{R}_\mathcal{C}$-run which simulates $(S_0,A_0),(S_1,A_1),\ldots$ starts by Definition \ref{def-composed-casm} in $S_0$\footnote{For the sake of notational simplicity we disregard the auxiliary locations of $\mathcal{R}_\mathcal{C}$.} with program
$\AMB a_0 \IN out_{main} \leftarrow \ASM{OneStep}_{main}(in_{main})$. 
Let  
\begin{asm}
A_i= \{ a_{i_1},\ldots, a_{i_k} \} \subseteq Agent
 \mbox{  for some $i_j$ and  $k$ depending on $i$}\\
\WHERE ~ 
    a_{i_j}=ag(m_{i_j}) \in M_{i+1} \setminus M_i \AND 
   pgm(a_{i_j})=~\AMB a_{i_j} \IN r_{i_j} ~ 
       (\FORALL 1 \leq j \leq k)
\end{asm} 

We use the same agents $a_{i_j}$ for $A_i$ in the $\mathcal{R}_\mathcal{C}$-run, but with  $out_{r_{i_j}}\leftarrow 
\ASM{OneStep}_{r_{i_j}}(in_{r_{i_j}})$ as program. Their step in the recursive run leads 
to a state $S_i' $ where all callers $a_{i_j}$ are $Waiting$ 
and the newly created delegates $c_{i_j}$ are $Active$ and not 
$Waiting$. So we can choose them for the set $A_i'$ of agents which perform  
the next $\mathcal{R}_\mathcal{C}$ step, whereby 
\begin{itemize}
	\item all rules $r_{i_j}$ are performed simultaneously (as in the given concurrent run step), in the ambient of $caller(c_{i_j})=a_{i_j}$ thus leading as desired to the state $S_{i+1}$,
	\item  the delegates make their program $Terminated$, whereby their callers $a_{i_j}$ become again not $Waiting$ and thereby ready to take part in the next step of the concurrent run. We assume for this that whenever in the $\mathcal{C}$-run (not in the $\mathcal{R}_\mathcal{C}$ run) a new agent $a$ is created, it is made not $Waiting$ (by initializing $CalledBy(a):=\emptyset$).
\end{itemize}\hfill
\end{proof}

\begin{remark*}
Consider an $\mathcal{R}_\mathcal{C}$-run where each recursive step of the concurrent caller agents in $A_i$, which call each some $\ASM{OneStep}$ program, alternates with a recursive step of all---the just called---delegates whose program is not yet $Terminated$. Then this run is equivalent to a corresponding concurrent $\mathcal{C}$-run. 
\end{remark*} 

Note that Theorem \ref{thm-porun'} heavily depends on the prerequisite that $\mathcal{C}$ only has partial-order runs. With general concurrent runs as defined in \cite{boerger:ai2016} the construction would not be possible.\footnote{The other prerequisites in Theorem \ref{thm-porun'} appear to be rather natural. Unbounded runs can only result, if in a single step arbitrarily many new agents are created. Also, infinitely many different rules associated with the agents are only possible, if new agents are created and added during a concurrent run. Though this is captured in the general theory of concurrency in \cite{boerger:ai2016}, it was not intended in Gurevich's definition of partial-order runs.} The rather strong conclusion from this is that the class of sequential recursive algorithms is already rather powerful, as it captures all attempts to capture asynchronous parallelism by a formalism that includes some form of a lockstep application of updates defined by several agents. However, true asynchronous behaviour only results, if it is possible that steps by different agents can be started as well as terminated in an independent way, which is covered by the theory of concurrency in \cite{boerger:ai2016}.

\subsection{Finite Static Concurrent ASMs with partial-order runs}\label{sec:Petri}

In this section we consider the special case of static finite concurrent ASMs, which means by Definition \ref{def-concurRun} a static set of pairs $(a,pgm(a))$. These ASMs have fixed finite sets of agents and programs and a fixed association of each program with an executing agent, so that there is no rule instantiation with new agents which could be created during a run. Therefore one can define global states as the union of the component states and the functions $\sigma(I)$ associated with the po-runs yield for every finite initial 
segment $I$ as value the global state obtained by firing the rules in $I$. 

For this particular kind of concurrent ASMs with partial-order runs one can define the concurrent runs by runs of nd-seq ASMs, as we are going to show in this section. This theorem illustrates the rather special character the axiomatic coherence condition imposes on partial-order runs.

\begin{theorem}\label{thm-Petri}
	
For each finite static concurrent ASM $\mathcal{C} = \{ (a_j,r_j) \mid 1 \leq j  \leq n \}$ with nd-seq component ASMs $r_j$ such that all its concurrent runs are definable by partial-order runs one can construct a nd-seq ASM $\mathcal{M}_\mathcal{C}$ such that the concurrent runs of $\mathcal{C}$ and the runs 
of $\mathcal{M}_\mathcal{C}$ are equivalent.
	
\end{theorem}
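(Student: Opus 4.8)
The plan is to take for $\mathcal{M}_\mathcal{C}$ the nd-seq ASM over the combined signature $\bigcup_{j=1}^{n}\Sigma^{r_j}$ whose rule is the bounded choice, over all subsets $J\subseteq\{1,\dots,n\}$, of the parallel compositions $R_J=\texttt{PAR}\,\{\,r_j\mid j\in J\,\}$ (fire the components in $J$ simultaneously, each reading the common current state; $R_\emptyset$ is the empty rule). Since $n$ is fixed this is a legitimate nd-seq ASM; a bounded-exploration witness for it is $\bigcup_{j=1}^{n}W_{r_j}$, and in every state $S$ its update sets are exactly the consistent unions $\bigcup_{j\in J}\Delta_j$ with $J\subseteq\{1,\dots,n\}$ and $\Delta_j\in\boldsymbol{\Delta}_{r_j}(S)$. (Single-component choice does not suffice here: two components may jointly change a state in a way neither can reach by itself once it has moved, and only simultaneous firing recovers this.) One half of the claimed equivalence is then immediate: a run of $\mathcal{M}_\mathcal{C}$, in which the choice at step $i$ picks a set $J_i$, is a concurrent run of $\mathcal{C}$ with $A_i=\{a_j\mid j\in J_i\}$ in which every agent reads the current state $S_i$ --- a concurrent run in the sense of Definition~\ref{def-concurRun} (with $j=i$ throughout), hence by hypothesis definable by a partial-order run.

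For the converse I would fix a concurrent $\mathcal{C}$-run $(S_0,A_0),(S_1,A_1),\dots$ and, by the hypothesis, a defining partial-order run $(M,\le,\mathrm{ag},\sigma)$ with $S_i=\sigma(M_i)$, where $M_i$ is the finite $\le$-downward-closed set of moves finished by $S_i$; write $j(m)$ for the index with $\mathrm{ag}(m)$ executing $r_{j(m)}$, and recall that each move $m$ reads exactly the state $\sigma(\{m'\mid m'<m\})$. I would use two consequences of the coherence condition. First, $\sigma(M')$ does not depend on the linearisation of $\le$ chosen to apply the moves of $M'$ --- this is the consistency argument already carried out in the proof of Theorem~\ref{thm-porun}. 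Second --- and this is where the quantifier over \emph{all} concurrent runs is essential --- any two moves that can be performed starting from one and the same reachable global state have update sets without conflicting updates, so that their union is consistent and they commute; the proof is by contradiction, confronting the two concurrent runs that apply these moves in the two possible orders after a common read and observing that coherence would otherwise force the single state $\sigma(\cdot)$ of the combined initial segment to take two distinct values.

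The heart of the proof is then a scheduling lemma: the moves of $M$ can be grouped into a sequence of antichains $B_1,B_2,\dots$ and fired in this order, firing $B_k$ as the single $\mathcal{M}_\mathcal{C}$-step $R_{\{\,j(m)\mid m\in B_k\,\}}$ at the current state $T_{k-1}$, in such a way that (i) each $m\in B_k$ is enabled at $T_{k-1}$, i.e.\ $\Delta_m\in\boldsymbol{\Delta}_{r_{j(m)}}(T_{k-1})$, and (ii) $T_k=T_{k-1}+\bigcup_{m\in B_k}\Delta_m=\sigma(B_1\cup\dots\cup B_k)$; the grouping can moreover be chosen so that every $S_i=\sigma(M_i)$ occurs among the $T_k$. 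Given the lemma the $T_k$-sequence is a run of $\mathcal{M}_\mathcal{C}$ through all the $S_i$, and the two runs are equivalent. I expect this lemma to be the main obstacle. Its delicate point is that a move $m$ need not wait until \emph{all} its $\le$-predecessors have been fired, but only until those whose update sets actually change a critical value of $r_{j(m)}$; one must show that this ``disturbance'' relation, together with $\le$, still admits a layering into antichains compatible with the chain $M_0\subseteq M_1\subseteq\dots$. I would extract exactly the needed acyclicity from the hypothesis: a configuration of moves that blocked the layering could be assembled into a concurrent run whose moves admit no coherent $\sigma$, contradicting that all $\mathcal{C}$-runs are definable by partial-order runs.

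Finally, staticness and finiteness of $\mathcal{C}$ are what let me take global states to be plain unions of the component states and dispense with agent creation, so that $\mathcal{M}_\mathcal{C}$ stays an ordinary nd-seq ASM (with neither a call rule nor agent creation); for non-static $\mathcal{C}$ the corresponding statement is Theorem~\ref{thm-porun'}. An alternative route is in fact to invoke Theorem~\ref{thm-porun'} and then observe that here its recursive ASM $\mathcal{R}_\mathcal{C}$ has call trees of depth at most one --- the callers are the fixed agents of $\mathcal{C}$ and every delegate terminates after a single step --- so the recursion-depth induction in the proof of Theorem~\ref{thm-capture} unrolls $\mathcal{R}_\mathcal{C}$ into a nd-seq ASM. This theorem is also what underlies Corollary~\ref{corollary} about the partial-order runs of Process Rewrite Systems.
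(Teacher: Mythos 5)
Your machine $\mathcal{M}_\mathcal{C}$ is exactly the paper's: a bounded choice over the subsets $J\subseteq\{1,\dots,n\}$ of the parallel compositions of the $r_j$ with $j\in J$ (the paper calls these $\ASM{AllRulesOf}(J)$ and restricts to non-empty $J$, which is immaterial), and your easy direction coincides with the forward half of the paper's Lemma~\ref{lem-linearrun}. Where you diverge is the converse, and there you have manufactured a difficulty that the hypothesis already removes. The paper does not need, and does not prove, a scheduling lemma that \emph{constructs} a layering into antichains: the phrase ``definable by a partial-order run'' is taken to mean that the states of the concurrent run are the $\sigma(M_i)$ for a chain of initial segments in which each increment $M_{i+1}\setminus M_i$ consists of pairwise incomparable moves --- so the layering $B_k$ you are trying to build is handed to you, already aligned with the $S_i$. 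Enabledness at the current state is then immediate from coherence rather than from a ``disturbance'' analysis of critical values: since $M_{i+1}\setminus M_i$ is an antichain and $M_{i+1}$ is downward closed, every $\le$-predecessor of a move $m\in M_{i+1}\setminus M_i$ lies in $M_i$, so $M_i\cup\{m\}$ is an initial segment and the coherence condition applied to it says precisely that $m$ is a step of $r_{j(m)}$ performable in $S_i=\sigma(M_i)$; the independence of the moves in the increment (your commutation step) is likewise the consistency-of-$\sigma$ argument already made in the proof of Theorem~\ref{thm-porun}, not something to be re-derived by confronting pairs of concurrent runs. Your worry that an agent may have read an earlier state $S_j$ with $j<i$ dissolves for the same reason: in the partial-order picture a move is atomic, and definability forces its effect to be reproducible at $\sigma(M_i)$. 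So the ``main obstacle'' you flag is not an obstacle, and leaving it as a plan rather than a proof is the only real weakness of your write-up. Your closing alternative --- obtaining the theorem from Theorem~\ref{thm-porun'} by noting that the call trees of $\mathcal{R}_\mathcal{C}$ have depth one and unrolling --- is a legitimate different route that the paper does not take; it buys a shorter argument at the price of routing a finite static statement through the heavier machinery of Theorems~\ref{thm-porun'} and~\ref{thm-capture}.
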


\begin{corollary}\label{corollary}
The partial-order runs of every Process Rewrite System \cite{Mayr99} 
can be simulated by runs of a non-deterministic sequential ASM.
\end{corollary}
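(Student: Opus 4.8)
The plan is to reduce the statement to Theorem \ref{thm-Petri} by exhibiting, for any Process Rewrite System $\mathcal{P}$, a finite static concurrent ASM $\mathcal{C}_{\mathcal{P}}$ with nd-seq components, all of whose concurrent runs are definable by partial-order runs and whose partial-order runs reproduce exactly those of $\mathcal{P}$. The nd-seq ASM delivered by Theorem \ref{thm-Petri} for $\mathcal{C}_{\mathcal{P}}$ then simulates the partial-order runs of $\mathcal{P}$, which is the claim.

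For the construction, a configuration of $\mathcal{P}$ --- a process term over the finitely many process constants together with the operators for sequential and parallel composition --- is encoded as an ASM state using the standard term background extended by the navigation operations needed to locate the \emph{rewritable positions} (those reached from the root through either argument of a parallel composition and the left argument of a sequential composition, as dictated by the congruence rules of a PRS) and to test whether the subterm at such a position matches, modulo the structural equations, the left-hand side of a rule. Since $\mathcal{P}$ has only finitely many rewrite rules $\rho\colon t_1\to t_2$, we take one agent $a_\rho$ for each of them, whose nd-seq program chooses a rewritable position $p$ --- the choice being supplied by a background selector or a monitored location, so that the program stays a bounded-exploration algorithm over its core signature --- at which the subterm equals $t_1$, and replaces that subterm by $t_2$. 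The resulting $\mathcal{C}_{\mathcal{P}}=\{(a_\rho,r_\rho)\}$ is a finite static concurrent ASM with nd-seq components, and its one-move-at-a-time concurrent runs are precisely the rewrite sequences of $\mathcal{P}$.

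Next one has to show that \emph{every} concurrent run of $\mathcal{C}_{\mathcal{P}}$ is definable by a partial-order run, and that these partial-order runs are exactly those of $\mathcal{P}$. A concurrent step combines update sets of several agents, each computed from the term read in some earlier state; for the combined update set to be consistent --- hence for the step to yield a well-defined term --- the rewrites chosen by those agents must still be applicable and must act at pairwise incomparable positions, i.e.\ at disjoint subterms. Such rewrites commute, so a concurrent run is a succession of rounds of pairwise independent, commuting rewrites; ordering its moves by the read/write (causal) dependency then yields a relation with finite history and with the moves of each agent linearly ordered, and coherence holds because for any finite initial segment the moves it contains can be applied in any compatible order with the same resulting term, which defines the function $\sigma$ as required by Definition \ref{def-porun}. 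Conversely, the moves of a partial-order run of $\mathcal{C}_{\mathcal{P}}$ are single applications of the rules of $\mathcal{P}$, and two of them are unordered exactly when they occur at incomparable positions of the term --- precisely the causal-independence relation of the partial-order semantics of $\mathcal{P}$ --- so the partial-order runs of $\mathcal{C}_{\mathcal{P}}$ coincide with those of $\mathcal{P}$. Theorem \ref{thm-Petri} now provides an nd-seq ASM $\mathcal{M}_{\mathcal{C}_{\mathcal{P}}}$ whose runs are equivalent to the concurrent runs of $\mathcal{C}_{\mathcal{P}}$; composing the two equivalences yields the corollary.

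The main obstacle lies in the middle step. One must fix the term representation and the agent programs carefully enough that, on the one hand, each $a_\rho$ is a genuine bounded-exploration nd-seq algorithm --- the essentially unbounded choice of a position in an unbounded term has to be delegated to the background or the environment rather than explored --- and, on the other hand, the consistency of a combined concurrent update set holds \emph{precisely} for causally independent rewrites, so that no spurious concurrent run escapes the partial-order discipline and the coherence function $\sigma$ of Definition \ref{def-porun} genuinely reconstructs the causal (partial-order) semantics of $\mathcal{P}$ in the sense of \cite{Mayr99} rather than merely an arbitrary interleaving of it.
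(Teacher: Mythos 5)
Your proposal follows essentially the same route as the paper, which presents Corollary \ref{corollary} as an immediate instantiation of Theorem \ref{thm-Petri}: a Process Rewrite System is read as a finite static concurrent ASM with one nd-seq component per rewrite rule, all of whose concurrent runs are definable by partial-order runs, and the nd-seq ASM $\mathcal{M}_{\mathcal{C}}$ delivered by that theorem then does the simulating. The paper leaves the encoding entirely implicit, so your explicit construction and the caveats you flag (delegating the unbounded choice of a rewrite position to a monitored location so that each component remains a bounded-exploration nd-seq algorithm, and ensuring that consistency of combined update sets really coincides with causal independence of rewrites at incomparable positions) are a faithful elaboration of what the paper's argument silently presupposes.
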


\begin{proof}
For the states $S_i$ of a given concurrent run of $\mathcal{C} $ let $\sigma(M_i)$ be  
the state associated with an initial segment 
$M_i$ of a corresponding partial order run $(M,\leq,ag,pgm,\sigma)$, where each step leading from 
$S_i$ to $S_{i+1}$ consists of pairwise incomparable moves 
in $M_{i+1} \setminus M_i$.  We call such a sequence $S_0, S_1, \dots$ of states a {\em linearised run} of $\mathcal{C}$. For $i>0$ the initial segments 
$M_i$ are non empty.

The linearized runs of $\mathcal{C}$ can be characterized as 
runs of a nd-seq ASM $\mathcal{M}_{\mathcal{C}}$: in each 
step this machine chooses one of finitely many non-empty subsets of the fixed finite set of rules in 
$\mathcal{C}$ to execute them in parallel. Formally:
\begin{asm}
	\mathcal{M}_{\mathcal{C}}= ~
	\CHOOSE 
	\ASM{AllRulesOf}(J_1) \mid \dots \mid          
	\ASM{AllRulesOf}(J_m)\\
	\WHERE \+
	\ASM{AllRulesOf}(\{j_1 ,\dots, j_k\}) = \+
	r_{j_1}\\
	\ldots \\
	r_{j_k} \-
	\{J_1 ,\dots, J_m\}=\{J^\prime \not = \emptyset \mid 
	J^\prime \subseteq J \}	\mbox{  //  the non-empty subsets of } J\\	
	J= \{i  \mid 1 \leq j  \leq n \} \mbox{  //  the fixed set of rule indices} \\
	m = 2^{|J|} - 1 
\end{asm}

To complete the proof it suffices to show the following lemma.\hfill
\end{proof}

\begin{lemma}\label{lem-linearrun}
	
The linearised runs of $\mathcal{C}$ are exactly the runs of $\mathcal{M}_{\mathcal{C}}$.
	
\end{lemma}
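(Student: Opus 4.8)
The plan is to prove the two inclusions separately, showing that every linearised run of $\mathcal{C}$ is a run of $\mathcal{M}_{\mathcal{C}}$ and conversely. First I would fix notation: a linearised run $S_0, S_1, \dots$ of $\mathcal{C}$ comes with a partial-order run $(M,\leq,ag,pgm,\sigma)$ and initial segments $M_0 \subseteq M_1 \subseteq \cdots$ with $S_i = \sigma(M_i)$, such that each $D_i := M_{i+1} \setminus M_i$ is a non-empty (for $i>0$; for $i=0$ the initial segment is empty, so $M_1$ is the first non-empty one) set of pairwise incomparable moves. Because $\mathcal{C}$ is static and finite, each move $m$ is one step of exactly one of the finitely many fixed rules $r_j$, performed by its fixed agent $a_j$; since the agents are pairwise distinct and, by staticness, no new agents appear, the incomparable moves in $D_i$ belong to a set of \emph{distinct} rule indices $J' = \{ j \mid \text{some } m\in D_i \text{ is a move of } r_j \} \subseteq J$, and $J' \neq \emptyset$. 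By the coherence condition of Definition \ref{def-porun}, $S_{i+1} = \sigma(M_{i+1})$ is obtained from $S_i = \sigma(M_i)$ by simultaneously applying all the moves in $D_i$; since distinct agents in a static concurrent ASM compute their update sets over their own component states and these are combined into the global state, this simultaneous application is exactly the effect of executing $\ASM{AllRulesOf}(J')$ in $S_i$. Hence $S_{i+1} \in \Delta_{\ASM{AllRulesOf}(J')}(S_i)$-successors, and since $J'$ is one of the $J_1,\dots,J_m$, the step $S_i \to S_{i+1}$ is a legal $\mathcal{M}_{\mathcal{C}}$-step. As $S_0$ is an initial state for both, this shows linearised runs of $\mathcal{C}$ are runs of $\mathcal{M}_{\mathcal{C}}$.

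For the converse, given a run $S_0, S_1, \dots$ of $\mathcal{M}_{\mathcal{C}}$, each step picks some non-empty $J' = \{j_1,\dots,j_k\} \subseteq J$ and executes $r_{j_1} \| \cdots \| r_{j_k}$ in parallel. I would build a partial-order run witnessing that this is a linearised run by letting $M$ be the disjoint union over all steps $i$ of one move $m^i_j$ per $j \in J'_i$ (the index set chosen at step $i$), with $ag(m^i_j) = a_j$ and $pgm$ the corresponding instance; order the moves by: $m^i_j < m^{i'}_{j'}$ iff $j = j'$ (same agent, hence same rule) and $i < i'$, i.e. the only order relations are along each agent's own timeline. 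Then finite history and sequentiality of agents are immediate, and the finite initial segments are precisely the sets $M_i = \bigcup_{i' < i}\{m^{i'}_j \mid j \in J'_{i'}\}$ together with "partial" segments where one more full step's worth of moves has been added; but since within one step all chosen moves are pairwise incomparable and no proper non-empty subset of them is "closed downward relative to the whole of $M$" in a way that separates them (they all have the same predecessors, namely the previous moves of their respective agents), the only finite initial segments are the $M_i$. Defining $\sigma(M_i) := S_i$ and checking coherence: for $M_i$, its maximal elements are exactly the moves of step $i-1$, and applying any one of them to $\sigma(M_i \setminus \{m\})$ — which I would argue equals $S_{i-1}$ updated by the remaining moves of that step — reconstructs $S_i$, because the rules $r_j$ executed by distinct static agents have effects on the global state that commute (distinct agents' update sets are applied together in $\mathcal{M}_{\mathcal{C}}$'s \texttt{PAR}, and this equals their application in any order onto the shared global state). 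This makes $S_0, S_1, \dots$ a linearised run of $\mathcal{C}$.

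The main obstacle I expect is the coherence verification in the converse direction, specifically the claim that the moves chosen in a single $\mathcal{M}_{\mathcal{C}}$-step can legitimately be regarded as \emph{pairwise incomparable} moves whose simultaneous application agrees with any sequential application — i.e. that $\mathcal{C}$ genuinely \emph{has} a partial-order run extending this data. Here the hypothesis that all concurrent runs of $\mathcal{C}$ are definable by partial-order runs is essential: it is precisely what guarantees the coherence function $\sigma$ exists and is well-defined (consistent under different linearisations of incomparable moves), so rather than constructing $\sigma$ by hand and re-proving consistency from the semantics of \texttt{PAR}, I would instead observe that the sequence $S_0,S_1,\dots$ produced by $\mathcal{M}_{\mathcal{C}}$ is a concurrent run of $\mathcal{C}$ in the sense of Definition \ref{def-concurRun} (each $a_j \in J'_i$ computes $U_{a_j}$ from the values read in $S_i$, i.e. with $j \le i$ trivially, and $S_{i+1}$ is obtained by applying all these $U_{a_j}$), and then invoke the standing hypothesis to get a partial-order run defining it, whose shape is forced by staticness to be exactly the timeline-only order described above; this yields the linearised-run property directly and sidesteps the delicate hand-checking.
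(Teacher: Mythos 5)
Your first direction (a linearised run of $\mathcal{C}$ is a run of $\mathcal{M}_{\mathcal{C}}$) is essentially the paper's argument: the pairwise incomparable moves in $M_{i+1}\setminus M_i$ determine a non-empty index set $J^\prime\subseteq J$ of distinct static agents, and applying their update sets simultaneously to $S_i$ is exactly one step of $\ASM{AllRulesOf}(J^\prime)$, hence of $\mathcal{M}_{\mathcal{C}}$. For the converse the paper takes a different route from your preferred one: it constructs the defining partial-order run inductively and makes a case distinction---if the moves chosen in a step are pairwise independent, the order is extended by declaring them incomparable and $M_{i+1}=M_i\cup\{m_{j_1},\dots,m_{j_k}\}$ becomes an initial segment with $\sigma(M_{i+1})=S_{i+1}$; if they are not independent, the union of their update sets is inconsistent and the $\mathcal{M}_{\mathcal{C}}$-run terminates in $S_i$, so nothing more need be shown. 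Your route (b) instead notes that every $\mathcal{M}_{\mathcal{C}}$-run is a concurrent run of $\mathcal{C}$ (each $a_j$ reads and writes in the same state $S_i$) and invokes the theorem's standing hypothesis that all concurrent $\mathcal{C}$-runs are definable by partial-order runs. This is more economical, since it outsources the coherence verification that the paper redoes by hand; but it leaves a residue: the hypothesis only supplies \emph{some} defining partial-order run, and to conclude that $S_0,S_1,\dots$ is a \emph{linearised} run you must still align that run's initial segments with the given step decomposition and check that the moves contributing to one step are pairwise incomparable. Your assertion that this ``is forced by staticness'' is plausible but is precisely the content of the explicit order construction the paper carries out, so it should be spelled out rather than delegated.

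Two further soft spots. In your hand-construction (a), the claim that ``the only finite initial segments are the $M_i$'' is false: if $m^i_{j_1}$ and $m^i_{j_2}$ are incomparable moves of step $i$, then $M_i\cup\{m^i_{j_1}\}$ is downward closed and hence a finite initial segment distinct from every $M_k$, and coherence must be verified on these intermediate segments too---this is exactly where pairwise independence of the per-step moves is needed and where the paper's inconsistency case enters. Since you ultimately abandon (a) for (b), this does not sink the proposal, but the erroneous justification should not remain. Finally, note that neither route as sketched addresses the fact that ``applying a move to $\sigma(M^\prime\setminus\{m\})$'' in the coherence condition means performing a step of the rule in that intermediate state, not replaying a pre-computed update set; consistency of the union of update sets therefore does not by itself yield order-independence. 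This issue is shared with the paper's own proof rather than introduced by you.
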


\begin{proof}
To show that each run $S_0, S_1, \dots$ of the nd-seq ASM $\mathcal{M}_{\mathcal{C}}$ is a linearised run of $\mathcal{C}$ we proceed by induction to construct the underlying partial-order run $(M,\le)$ with its finite initial segments $M_i$. For the initial state $S_0=\sigma(\emptyset)$ there is nothing to show, so let $S_{i+1}$ result from $S_i$ by applying an update set produced by $\ASM{AllRulesOf}(J^\prime)$ for some non-empty set $J^\prime =\{j_1 ,\dots, j_k\} \subseteq J$. By induction we have $S_i = \sigma(M_i)$ for some initial segment of a partial-order run $(M,\le)$. Since $\ASM{AllRulesOf}(J^\prime)$ is a parallel composition, $S_{i+1}$ results from applying the union of update sets $\Delta_{j_l} \in \boldsymbol{\Delta}_{r_{j_l}}$ for $l = 1,\dots, k$ to $S_i$. Each $\Delta_{j_l}$ defines a move $m_{j_l}$ of some $\text{ag}(m_{j_l}) = a_{j_l}$, move which finishes in state $S_i$. We now have two cases: 
	
\begin{enumerate}
		
\item The moves $m_{j_l}$ with $1 \le l \le k$ are pairwise independent, i.e. their application in any order produces the same new state. Then $(M,\le)$ can be extended with these moves such that $M_{i+1} = M_i \cup \{ m_{j_1}, \ldots, m_{j_k}\}$ becomes an initial segment and $S_{i+1} = \sigma(M_i)$ holds.
		
\item If the moves $m_{j_l}$ with $1 \le l \le k$ are not pairwise independent, the union of the corresponding update sets is inconsistent, hence the run terminates in state $S_i$.
		
\end{enumerate}
	
To show the converse we proceed analogously. If we have a linearized run of states $S_i = \sigma(M_i)$ for all $i \ge 1$, then $S_{i+1}$ results from $S_i$ by applying in parallel all moves in $M_{i+1} \setminus M_i$. Applying a move $m$ means to apply an update set produced by some rule $r_j$ of $\mathcal{C}$ in state $S_i$, and applying several update sets in parallel means to apply their union $\Delta$, which then must be consistent. So we have $S_{i+1} = S_i + \Delta$ with $\Delta = \bigcup_{j \in J^\prime} \Delta_{j}$ for some non-empty $J^\prime =\{j_1 ,\dots, j_k\} \subseteq J$, where each $\Delta_{j_l}$ is an update set produced by $r_{j_l}$ (for $1 \le l \le k$), i.e. $\Delta$ is an update set produced by $\ASM{AllRulesOf}(J^\prime)$, which implies that the linearised run $S_0, S_1, \dots$ is a run of $\mathcal{M}_{\mathcal{C}}$.\hfill	
\end{proof} 

\section{Related Work}\label{sec:more}

The behavioural theory developed in this article contributes to answer the fundamental epistemological question ``What is an algorithm?''. It has been inspired by Gurevich's behavioural theory of sequential algorithms \cite{gurevich:tocl2000}, the ur-instance of a behavioural theory, and motivated by Moschovakis' claim that recursive algorithms, which obviously cannot be modeled `closely and faithfully' by sequential ASMs, can be `directly expressed' by systems of recursive equations (called `recursive programs') \cite[p.100]{Moschovakis19}. 

In \cite[p.99]{Moschovakis19} Moschovakis places ``the basic foundational problem
of \emph{defining algorithms}... outside the scope of this book" and treats \emph{recursive algorithms} as ``faithfully expressed" (ibid. p.101) by syntactically well defined \emph{recursive programs} (read: systems of recursive equations) which permit to compute (partial) functions from auxiliary (in ASM terminology background) functions in whatever given structures. 
Besides restricting the attention to algorithms which compute (partial) functions as least fixed point of a system of equations, such a \emph {definition} is fundamentally different from the behavioural theory approach to \emph{capture} recursive algorithms by a class of abstract machines which can be shown to satisfy an a priori given precise, axiomatic, programming language independent characterization of recursion. 
Furthermore, we use a class of machines which provide a general framework to characterize besides sequential or recursive algorithms also other classes of algorithms, e.g. parallel, interactive or reflective algorithms (see below), which are deliberately left out in \cite{Moschovakis19}. 

Nevertheless, every function which is computable by a recursive program in the sense of Moschovakis can be computed (in the standard meaning of the term) by a recursive ASM. This can be shown easily, for example by using Moschovakis' `recursive machine' \cite[Sect.2D]{Moschovakis19}, an abstract machine which is considered by its author as ``one of the classical implementations of recursion" (ibid.p.74).\footnote{As pointed out in \cite{blass:beatcs2002} and also in \cite{boerger:asm2003}, using the set of equations of a recursive program to compute a concrete function value still requires a determination of control, i.e. in which way the recursion equations are to be applied, a feature which is considered in \cite{Moschovakis19} as implementation detail.} Alternatively one can use sequential recursive ASMs to describe the fixed point construction for systems of recursive equations. Apparently it can also be shown that vice versa, every function which is computable by a recursive ASM can be computed by a recursive program in the sense of Moschovakis (use an induction on the recursion depth of recursive ASMs, with nd-seq ASMs at the basis of the induction). 

As the sequential ASM thesis shows, the notion of sequential algorithm includes a form of bounded parallelism, which is a priori defined by the algorithm and does not depend on the actual state.\footnote{Note that by their definition, Moschovakis' recursive programs satisfy the bounded exploration postulate and their non-deterministic version \cite[Sect. 2E]{Moschovakis19} is carefully restricted to bounded choice.} However, parallel algorithms, e.g. for graph inversion or leader election, require unbounded parallelism. A behavioural theory of synchronous parallel algorithms has been first approached by Blass and Gurevich \cite{blass:tocl2003,blass:tocl2008}, but different from the sequential thesis the theory was not accepted, not even by the ASM community despite its inherent proof that ASMs \cite{boerger:2003} capture parallel algorithms. One reason is that the axiomatic definition exploits non-logical concepts such as mailbox, display and ken, whereas the sequential thesis only used logical concepts such as structures and sets of terms\footnote{Even the background, that is left implicit in the sequential thesis, only refers to truth values and operations on them.}. 

In \cite{ferrarotti:tcs2016} an alternative behavioural theory of synchronous parallel algorithms (aka ``simplified parallel ASM thesis'') was developed. It was inspired by previous research on a behavioural theory for non-deterministic database transformations \cite{schewe:ac2010}. Largely following the careful motivation in \cite{blass:tocl2003} it was first conjectured in \cite{schewe:abz2012} that it should be sufficient to generalise bounded exploration witnesses to sets of multiset comprehension terms\footnote{The rationale behind this conjecture is that in a particular state the multiset comprehension terms give rise to multisets, and selecting one value out each of these multisets defines the proclets used by Blass and Gurevich.} and to make assumptions about background domains, constructors and operations for truth values, records and finite multisets explicit\footnote{The latter aspect was already part of the thesis by Blass and Gurevich.}. The formal proof of the simplified ASM thesis in \cite{ferrarotti:tcs2016} requires among others an investigation in finite model theory.

At the same time another behavioural theory of parallel algorithms was developed in \cite{dershowitz:igpl2016}, which is independent from the simplified parallel ASM thesis\footnote{Apparently, authors of \cite{ferrarotti:tcs2016} and \cite{dershowitz:igpl2016} seemed not to be aware of each others' research.}, but refers also to previous work by Blass and Gurevich. It is debatable, whether the criticism of the defining postulates by Blass and Gurevich also applies to this work; a thorough comparison with the simplified parallel ASM thesis has not yet been conducted.

There have been many attempts to capture asynchronous parallelism, as marked in theories of concurrency as well as distribution (see \cite{lynch:1996} for a collection of many distributed or concurrent algorithms). Commonly known approaches are among others the actor model \cite{agha:1986}, Process Algebras \cite{HandbookPa}, Petri nets \cite{best:1996}, high-level Petri nets \cite{genrich:tcs1981}, and trace theory \cite{mazurkiewicz:lncs1987}. Gurevich's axiomatic definition of partial-order runs \cite{gurevich:lipari1995} tries to reduce the problem to families of sequential algorithms, but the theory is too strict. As shown in \cite{boerger:ai2016} it is easy to find concurrent algorithms that satisfy sequential consistency \cite{lamport:tc1979}, but their runs are not partial-order runs. One problem is that the requirements for partial-order runs always lead to linearisability.

The lack of a convincing definition of asynchronous parallel algorithms was overcome by the work on concurrent algorithms in \cite{boerger:ai2016}, in which a concurrent algorithm is defined by a family of agents, each equipped with a sequential algorithm, possibly with shared locations. While each individual sequential algorithm in the family is defined by the postulates for sequential algorithms\footnote{A remark in \cite{boerger:ai2016} states that the restriction to sequential algorithms is not really needed. An extension to concurrent algorithms covering families of parallel algorithms is handled in \cite{schewe:acsw2017}.}, the family as a whole is subject to a concurrency postulate requiring that in a concurrent run, a successor state of the global state of the concurrent algorithm results from simultaneously applying update sets of finitely many agents that have been built on some previous states (not necessarily the current one). The theory shows that concurrent algorithms are captured by concurrent ASMs. Given the fact that in concurrent algorithms, in particular in case of distribution, message passing between agents is more common than shared locations, it has further been shown in \cite{boerger:jucs2017} that message passing can be captured by regarding mailboxes as shared locations, which leads to communicating concurrent ASMs capturing concurrent algorithms with message passing. In \cite{ferrarotti:scp2019} it has been shown how the popular bulk synchronous parallel bridging model can be captured by a specialised behavioural theory that builds on top of the concurrent ASM thesis in \cite{boerger:ai2016}.

Recently, there is an increased interest in distributed adaptive systems. Adaptivity refers to the ability of an algorithm to modify itself, which is known as linguistic reflection. A behavioural theory of reflective sequential algorithms has been developed in \cite{schewe:tcs2019}.\footnote{A preliminary version of this theory appeared in \cite{ferrarotti:psi2017}.} Again the key aspect is the generalisation of bounded exploration witnesses, which for reflective algorithms comprise terms that can be evaluated to terms and these to values in the base set, so coincidence after double evaluation is required for the equality of update sets in states. The integration of the behavioural theories for parallelism, concurrency and reflection has been sketched in \cite{schewe:acsw2017}, but a more detailed presentation of the combined theory still has to be written up.

\section{Conclusion}\label{sec:schluss}
The main contribution of this article is a behavioural theory of sequential recursive algorithms, providing a) a purely logical definition of this notion, which is independent from any particular abstract machine or programming model, b) a natural extension of nd-seq ASMs to recursive ASMs, and c) a proof that recursive ASMs capture sequential recursive algorithms. The resulting recursive ASM thesis shows (together with the sequential and the concurrent ASM thesis) that the class of recursive algorithms is strictly larger and more expressive than the class of nd-seq algorithms and strictly smaller and less expressive than the class of concurrent algorithms.

As an application of this theory we add to the observation in \cite{gurevich:lipari1995}---namely that recursive algorithms give rise to partial-order runs---a proof that conversely, every finitely composed concurrent algorithm with only partial-order runs is equivalent to a recursive algorithm. This corrects the criticism formulated in \cite{boerger:asm2003} that the answer given by Blass and Gurevich in \cite{blass:beatcs2002} is an ``overkill'', as partial-order runs capture only a restricted concept of concurrency. On the other hand, it underlines also the need for a much more general theory of concurrent algorithms, which is provided by the behavioural theory of concurrent algorithms \cite{boerger:ai2016}.

The debate about ``What is an algorithm?'' is not yet finished, as only an integration of all partial behavioural theories of sequential, recursive, parallel, concurrent, reflective, etc. algorithms\footnote{This list is not yet complete, as in most of the related work mentioned above the aspect of non-determinism as well as randomness is not yet included. However, non-determinism is covered in ASMs \cite{boerger:2003} and is crucial for their applications in system design and analysis.} will provide a final answer to the question. 
A comprehensive definition of the notion of algorithm will require all the known particular classes of algorithms to be integrated in such a way that the specific subclasses arise as special cases of the general definition. However, given that in all the behavioural theories we mentioned above the postulates always concern sequential or branching time, abstract state, background and bounded exploration, the perspective of such an integration looks promising. We invite the reader to contribute to this endeavor.

\bibliographystyle{abbrv}
\bibliography{recursion}

\end{document}
=========================================
\begin{itemize}
	
	\item In general, the set of agents of a concurrent algorithm can be infinite \cite{boerger:ai2016}, which addresses the need to allow arbitrary new agents to be created anytime and added to the system. While this is necessary in theory, it is impossible to specify in advance infinitely many algorithms, unless these algorithms are only copies of each other, so we have to assume that the index set $I$ of a concurrent algorithm with only partial-order runs is finite. However, the set of agents derived from a recursive algorithm is also infinite, but only, because calls give rise to copies of the specified i/o-algorithms. We will therefore need to deal with copies as well, which requires to weaken the notion of behavioural equivalence.
	
	\item Call relationships between the named rules in a recursive algorithm require disjoint signatures apart from the input- and output-function symbols. We will therefore need also some duplication of signatures to fulfil this requirement.
	
	\item The semantics of concurrent algorithms is defined by concurrent runs, but if we restrict the attention to partial-order runs, then we have to relate these to the expanded i/o-runs of a recursive algorithms. We will do this by looking at the linearised runs that are determined by partial-order runs.
	
\end{itemize}

==========================
$\mathcal{M}$ does not involve any call. This would change, if we replace the rules $r_i$ in the rule $r$ by calls, for which we need to make three adaptations:

\begin{enumerate}
	
	\item Instead of $r_i$ we take a named copy $r_i^\prime$ with disjoint signature apart from input- and output-function symbols. That is we get a named rule of the form $\text{out} \leftarrow N_i (\text{in}_1 ,\dots, \text{in}_k)$.
	
	\item If the body of this named rule were simply $r_i^\prime$, the rule once called would be iterated until a final state is reached (or forever), where $r_i$ is the machine $\mathcal{M}$ is only used for a single step. Thus, instead of $r_i^\prime$ we use a rule $\text{onestep}(r_i^\prime)$, which should take the form \texttt{IF} out = \textit{undef\/} \texttt{THEN} $r_i^\prime$. As the output-locations, i.e. all locations that are updated by $r_i^\prime$ in one step, will be written in a single step, a final state will be reached after a single step.
	
	\item For the input we exploit the existence of a bounded exploration witness $W_i$ for $r_i^\prime$. All updates produced in a single step by $r_i^\prime$ are determined by the values of $W_i$ in the state, in which the call is launched. So $W_i$ defines the input terms of the called rule.
	
	\item Finally, when the calling rule $r$ receives back all the results of the called rules named $N_i$ $(i \in J \subseteq I)$, it has to write the values back to the locations that were copied for the rules.
	
\end{enumerate}

Combining these modifications we obtain a recursive ASM $\mathcal{R}$ with a main rule and named rules $\text{out} \leftarrow N_i (\text{in}_1 ,\dots, \text{in}_k)$ for $i \in I$ such that for every run $S_0, S_1, \dots$ of this ASM the projected run $\text{res}(S_0,\Sigma), \text{res}(S_1,\Sigma), \dots$ is a run of $\mathcal{M}$ and every run is obtainable in this way. Here $\Sigma$ is the signature of the concurrent algorithm $\mathcal{C}$.

Strictly speaking, $\mathcal{R}$ and $\mathcal{C}$ are not behaviourally equivalent, but in essence they are, as only the additional copied locations used by $\mathcal{R}$ have to be ignored. We capture this by the notion of quasi-behavioural-equivalence.

\begin{definition}\rm
	
	A recursive algorithm $\mathcal{R}a$ that is {\em quasi-behaviourally-equivalent} to a concurrent algorithm $\mathcal{C}$ iff the signature $\Sigma$ is a subset of the signature $\Sigma_a$ of $\mathcal{R}a$ and the linearised runs of $\mathcal{C}$ are exactly the projections of runs of $\mathcal{M}$ to $\Sigma$ up to repetition.
	
\end{definition}

With this definition the construction of the recursive ASM $\mathcal{R}$ above together with Lemma \ref{lem-linearrun} proves the theorem.